\theoremstyle{plain}
\newtheorem{theorem}{Theorem}[section]
\newtheorem{corollary}[theorem]{Corollary}
\newtheorem{lemma}[theorem]{Lemma}
\newtheorem{assume}[theorem]{Assumption}
\newtheorem{prop}[theorem]{Proposition}
\theoremstyle{remark}
\newtheorem{remark}[theorem]{Remark}
\theoremstyle{definition}
\newtheorem{problem}[theorem]{Riemann--Hilbert Problem}
\newcommand{\res}{\operatorname{Res}\ }
\numberwithin{equation}{section}
\begin{document}
	\title{ Inverse Scattering Transform  for the Massive Thirring Model: Delving into  Higher-Order Pole Dynamics}
	
	\author{Dongli Luan}
	\author{Bo Xue}
	\author{Huan Liu \thanks{Corresponding author. E-mail: liuhuan@zzu.edu.cn}}
	\affil{School of Mathematics and Statistics, Zhengzhou University, Zhengzhou, Henan 450001, People's  Republic  of China}
	\renewcommand*{\Affilfont}{\small\it}
	\renewcommand{\Authands}{, }
	\date{}
	
	\maketitle
	\begin{abstract}
		
We investigate the inverse scattering problem for the massive Thirring model, focusing particularly on cases where the transmission coefficient exhibits $N$ pairs of higher-order poles. Our methodology involves transforming initial data into scattering data via the direct scattering problem. Utilizing two parameter transformations, we examine the asymptotic properties of the Jost functions at both vanishing and infinite parameters, yielding two equivalent spectral problems. We subsequently devise a mapping that translates the obtained scattering data into a $2 \times 2$ matrix Riemann--Hilbert problem, incorporating several residue conditions at $N$ pairs of multiple poles. Additionally, we construct an equivalent pole-free Riemann--Hilbert problem and demonstrate the existence and uniqueness of its solution. In the reflectionless case, the $N$-multipole solutions can be reconstructed by resolving two linear algebraic systems.
	\end{abstract}
	
	\textbf{Keywords:} massive Thirring model; Riemann--Hilbert problem; $N$-multipole solution
	\newpage
	\section{Introduction}
	The massive Thirring model (MTM), described by the following system of equations:
	\begin{equation}\label{mtm}
		\left\{\begin{array}{c}\mathrm{i}\left(u_t+u_x\right)+v+|v|^2u=0,\\\mathrm{i}\left(v_t-v_x\right)+u+|u|^2v=0,\end{array}\right.\end{equation}
	is a field theory extension of the Thirring model first proposed by Walter Thirring in 1958 \cite{Thirring} to describe interacting fermions with mass. It was one of the first examples of a 1+1-dimensional quantum field theory that exhibited non-trivial interacting fermions with mass contrasting the massless Dirac fermion model \cite{massless1,massless2}.  The integrability of the MTM \eqref{mtm} has been established through the work of various researchers, including Kuznetsov and Mikhailov \cite{integrability1}, Mikhailov \cite{integrability2}, Orfanidis \cite{integrability3}, and Kaup and Newell \cite{integrability4}. The MTM \eqref{mtm} is intrinsically connected to sine-Gordon theory, and there is a substantial body of research exploring the correspondences between these theories \cite{sine-mtm2,sine-mtm5,sine-mtm6}.  Extensive investigations have also been conducted on the conservation laws \cite{conservation1}, asymptotic behavior \cite{asymptotic}, and various structural aspects \cite{structure,structure2,str1,str2,str3,str4}.	
	As a derivative-type nonlinear Schr\"odinger equation \cite{derivative,der2}, the MTM \eqref{mtm} has attracted numerous researchers interested in finding its soliton solutions \cite{integrability1,integrability3,integrability4,solution1,solution2,solution3,solution}. The Darboux and B\"acklund transformations of the MTM \eqref{mtm} have been explored in \cite{backlund1,backlund2,Darboux1,Darboux2}. The direct method for the MTM \eqref{mtm} has been detailed in \cite{direct}, while the inverse scattering transform for the MTM \eqref{mtm} has been examined by Pelinovsky and Saalmann \cite{inverse} and others \cite{inverse4,integrability1,inverse1,inverse2,liu}. However, the extant scholarly works have predominantly concentrated their efforts on the pursuit of simple pole solutions to the MTM \eqref{mtm}, with the investigation into the complexities of multiple pole solutions remaining unexplored.

In this paper, we are preparing to study the inverse scattering transform of the MTM \eqref{mtm} in the case of multiple poles.  Although  the Riemann--Hilbert (RH) method has been successfully applied to construct $N$-multipole solutions for integrable equations \cite{sasa,AL,SBE}, the spectral structure associated with the MTM equation \eqref{mtm} does not exhibit the same degree of symmetry as those encountered in the equations previously mentioned.	To address the symmetry issue, we introduce two parameter transformations and employ one of them to construct the RH problem (refer to RH problem \ref{pro}), which is furnished with residue conditions at certain higher-order poles. The existence and uniqueness  of the solution to this RH problem are rigorously demonstrated (see Vanishing Lemma \ref{vanish}). Furthermore, the reconstruction formula for the solutions to the MTM equation \eqref{mtm} from the existing literature is generally presented as a quotient of two expressions, necessitating two distinct parameter transformations and consequently giving rise to two separate RH problems. This dual approach renders the process of deriving exact solutions somewhat more intricate and laborious. We streamline the procedure by obtaining the solution via a dual single-limit derivation (refer to Corollary \ref{recon}) and by resolving a single RH problem equipped with data derived from a single parameter transformation.

	The structure of the remainder of this paper is as follows. In Section $\ref{sec:dir}$, we introduce a mapping from the initial data to the scattering data, and  provide an analysis of the discrete spectrum corresponding to multiple zeros. We introduce two parameter transformations to obtain two modified Jost solutions and analyze their pertinent characteristics. In Section $\ref{sec:inverse}$, we construct another mapping from the scattering data to a $2\times2$ matrix RH problem equipped with several residue conditions set at $N$ pairs of multiple poles. Thereafter,  we construct an equivalent pole-free RH problem and prove the unique   solvability  of  this problem. Ultimately, in the reflectionless case, $N$-multipole solutions can be reconstructed through resolving two linear algebraic systems.
	
	Throughout this paper, we adopt the following standard notations: The complex conjugate of a complex number $\lambda$ is denoted by $\bar{\lambda}$ and for a complex-valued matrix $\mathbf{X}$, $\bar{\mathbf{X}}$ represents its
	element-wise complex conjugate. The transpose of $\mathbf{X}$ is denoted by $\mathbf{X}^{\mathrm{T}}$, with $\mathbf{X}_j$ indicating the $j$-th column and $\mathbf{X}_{ij}$ refers to the $(i,j)$-entry. Given $2 \times 2$ matrices $\mathbf{A}$ and $\mathbf{B}$, their commutator is defined as $[\mathbf{A},\mathbf{B}]=\mathbf{AB}-\mathbf{BA}$. The $2 \times 2$ identity matrix is denoted by $\mathbf{I}$. For a matrix-valued function $\mathbf{Y}(x)=(a_{ij}(x))_{2\times 2}$ defined on the contour $\Gamma$, the $L^1$-norm and $L^{\infty}$-norm of $\mathbf{Y}(x)$ are defined as:
	\begin{equation}
\begin{aligned}
&\|\mathbf{Y}(x)\|_{L^1(\Gamma)}=\int_{\Gamma}\|\mathbf{Y}\|_1(x)\mathrm{d}x,\\ &\|\mathbf{Y}(x)\|_{L^{\infty}(\Gamma)}=\sup\{a\in\mathbb{R}:m(\{x\in\Gamma:\|\mathbf{Y}\|_\infty(x)>a\}) >0\},
\end{aligned}
 \end{equation}
where  $\|\cdot\|_1$ and $\|\cdot\|_\infty$  refers to $1$-norm and $\infty$-norm of the matrix, respectively, and $m(\cdot)$ represents the Lebesgue measure.
  Additionally, we employ the following three types of Pauli matrices:
	\begin{equation}
		\sigma_1=\left(\begin{array}{cc}0&1\\1&0\end{array}\right),\quad 	\sigma_2=\left(\begin{array}{cc}0&-\mathrm{i}\\\mathrm{i}&0\end{array}\right),\quad	\sigma_3=\left(\begin{array}{cc}1&0\\0&-1\end{array}\right).
	\end{equation}
	The notation $\mathbb{C}^{\pm\pm}$ is used to represent the four quadrants of the complex plane, where the first superscript indicates the sign of the real part and the second superscript indicates the sign of the imaginary part. To enhance understanding, a schematic diagram is provided in Figure \ref{Cppmm}.
	\begin{figure}[!htbp]
		\centering	
		\begin{tikzpicture}
			\draw[->] (-2,0) -- (2,0) node[right] {Re};
			\draw[->] (0,-2) -- (0,2) node[above] {Im};
			\node at (1,1) {$\mathbb{C}^{++}$};
			\node at (-1,1) {$\mathbb{C}^{-+}$};
			\node at (-1,-1) {$\mathbb{C}^{--}$};
			\node at (1,-1) {$\mathbb{C}^{+-}$};
		\end{tikzpicture}
\caption{The domain $\mathbb{C}^{\pm\pm}$.}\label{Cppmm}
	\end{figure}
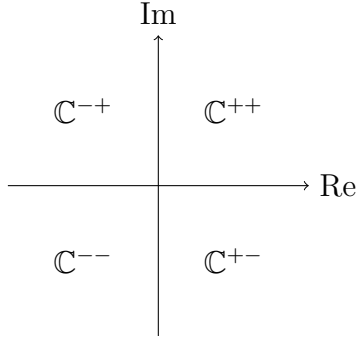

	\section{Direct scattering problem }\label{sec:dir}
	The MTM \eqref{mtm} admits the Lax pair:
	\begin{equation}\label{lax}
		\phi_{x}=\mathbf{U}\phi,\quad\phi_{t}=\mathbf{V}\phi,
	\end{equation}
	with
	\begin{equation}
		\begin{aligned}
			&\mathbf{U}=\frac{\mathrm{i}}{4}\left(|u|^{2}-|v|^{2}\right)\sigma_{3}-\frac{\mathrm{i}\zeta}{2}\left(\begin{array}{cc}0&\bar{v}\\v&0\end{array}\right)+\frac{\mathrm{i}}{2}\zeta^{-1}\left(\begin{array}{cc}0&\bar{u}\\u&0\end{array}\right)+\frac{\mathrm{i}}{4}\left(\zeta^{2}-\zeta^{-2}\right)\sigma_{3},\\&\mathbf{V}=-\frac{\mathrm{i}}{4}\left(|u|^{2}+|v|^{2}\right)\sigma_{3}-\frac{\mathrm{i}\zeta}{2}\left(\begin{array}{cc}0&\bar{v}\\v&0\end{array}\right)-\frac{\mathrm{i}}{2}\zeta^{-1}\left(\begin{array}{cc}0&\bar{u}\\u&0\end{array}\right)+\frac{\mathrm{i}}{4}\left(\zeta^{2}+\zeta^{-2}\right)\sigma_{3},\end{aligned}\end{equation}
	where the function $\phi=\phi(x,t;\zeta)$ is a $2\times2$ matrix-valued function of $x$, $t$, and the spectral parameter $\zeta\in\mathbb{C}$. It should be noted that the MTM \eqref{mtm} is equivalent to  the following compatibility condition:
	\begin{equation}
		\mathbf{U}_t-\mathbf{V}_x+\left[\mathbf{U},\mathbf{V}\right]=\mathbf{0}.
	\end{equation}

	\subsection{Jost solutions}
	We aim to find functions $u(x,t)$ and $v(x,t)$ that are solutions to Eq.\,\eqref{mtm} for $(x,t)\in\mathbb{R}\times\mathbb{R}^{+}$. These functions are required to decay to zero as $x$ approaches infinity, with their $x$-derivatives also vanishing at infinity. Additionally, they must adhere to the specific initial conditions, and $u(x,0), v(x,0)$ are given functions belonging to an appropriate function space. Our objective is to determine two Jost solutions, $\phi_{\pm}(x,t;\zeta)$, which are to fulfill the following boundary conditions:
	\begin{equation}\label{psi}
		\phi_{\pm}(x,t;\zeta)=\mathrm{e}^{\mathrm{i}\theta(x,t;\zeta)\sigma_3}+o(1),\quad x\to\pm\infty,
	\end{equation}
	where $\theta(x,t;\zeta)=\frac{1}{4}\left(\zeta^2-\zeta^{-2}\right)x+\frac{1}{4}\left(\zeta^2+\zeta^{-2}\right)t$. More precisely,
	\begin{equation}\phi_{-1}(x,t;\zeta)\sim\begin{pmatrix}1\\0\end{pmatrix}\mathrm{e}^{\mathrm{i}\theta(x,t;\zeta)},\quad\phi_{-2}(x,t;\zeta)\sim\begin{pmatrix}0\\1\end{pmatrix}\mathrm{e}^{-\mathrm{i}\theta(x,t;\zeta)},\quad x\to-\infty,
	\end{equation}	and
	\begin{equation}	
		\phi_{+1}(x,t;\zeta)\sim\begin{pmatrix}1\\0\end{pmatrix}\mathrm{e}^{\mathrm{i}\theta(x,t;\zeta)},\quad\phi_{+2}(x,t;\zeta)\sim\begin{pmatrix}0\\1\end{pmatrix}\mathrm{e}^{-\mathrm{i}\theta(x,t;\zeta)},\quad x\to+\infty. \end{equation}
	Define the normalized Jost solutions as
	\begin{equation}\label{m}
		\varphi_{\pm}(x,t;\zeta)=\phi_{\pm}(x,t;\zeta)\mathrm{e}^{-\mathrm{i}\theta(x,t;\zeta)\sigma_3},
	\end{equation}
	which satisfy the differential equation and constant boundary conditions at infinity:
	\begin{equation}\label{partial}
		\partial_{x}\varphi_{\pm}=\frac{\mathrm{i}}{4}(\zeta^2-\zeta^{-2})[\sigma_{3},\varphi_{\pm}]+\mathbf{Q}\varphi_{\pm},\quad\lim_{x\to\pm\infty}\varphi_{\pm}=\mathbf{I},
	\end{equation}
	where the dependence on the variables $ (x, t; \zeta) $ is omitted here for conciseness and the matrix $\mathbf{Q} $ is defined as
	\begin{equation}
		\mathbf{Q}=\mathbf{Q}(x,t;\zeta)=\frac{\mathrm{i}}{4}\left(|u|^{2}-|v|^{2}\right)\sigma_{3}-\frac{\mathrm{i}\zeta}{2}\left(\begin{array}{cc}0&\bar{v}\\v&0\end{array}\right)+\frac{\mathrm{i}}{2}\zeta^{-1}\left(\begin{array}{cc}0&\bar{u}\\u&0\end{array}\right).
	\end{equation}
	The integration of Eq.\,\eqref{partial} results in the following  Volterra integral equations:
	\begin{equation}\label{m:volterra}
		\begin{aligned}
			\varphi_{\pm}(x,t;\zeta)=&\left(\varphi_{\pm1}(x,t;\zeta),\varphi_{\pm2}(x,t;\zeta)\right)\\
			=&\mathbf{I}+\int_{\pm\infty}^{x}\mathrm{e}^{\frac{\mathrm{i}}{4}(\zeta^{2}-\zeta^{-2})(x-y)\hat{\sigma}_{3}}\left(\mathbf{Q}\varphi_{\pm}\right)(y,t;\zeta)\mathrm{d}y,
		\end{aligned}
	\end{equation}where $\hat{\sigma}_{3}\mathbf{X}=[\sigma_{3},\mathbf{X}]$, consequently, $\mathrm{e}^{\hat{\sigma}_{3}}\mathbf{X}=\mathrm{e}^{\sigma_{3}}\mathbf{X}\mathrm{e}^{-\sigma_{3}}.$
	
	A standard assumption in analyzing Volterra integral equations \eqref{m:volterra} is that $u(\cdot,t), v(\cdot,t)$\\
	$\in L^1(\mathbb{R})\cap L^2(\mathbb{R})$ for fixed $t$. However, even under these conditions, $\|\mathbf{Q}(\cdot,t;\zeta)\|_{L^1(\mathbb{R})}$ does not uniformly control as $\zeta\to0$ and $\zeta\to\infty$. This lack of uniform convergence presents challenges in studying the behavior of $\varphi_{\pm 1}(x,t;\zeta)$ and $\varphi_{\pm 2}(x,t;\zeta)$ as $\zeta\to 0$ and $\zeta\to\infty$. To address this issue, we employ two transformations, facilitating the analysis of the spectral problem under varying asymptotic conditions of $\zeta$.
	\subsubsection{Transformation of the Jost solutions for large $\zeta$}
		Define $\lambda=\zeta^2$ and the domains
	\begin{equation}
		\Omega_0=\{\lambda\in\mathbb{C}:0<|\lambda|<1\},\quad \Omega_1=\{\lambda\in\mathbb{C}:|\lambda|=1\},\quad \Omega_{\infty}=\{\lambda\in\mathbb{C}:|\lambda|>1\}.
	\end{equation}
	Assume $v(\cdot,t)\in L^{\infty}(\mathbb{R})$ for fixed $t$, let the transformation matrix $\mathbf{T}(x,t;\lambda)$ be
	\begin{equation}\mathbf{T}(x,t;\lambda)=\left(\begin{array}{cc}1&0\\v&\lambda^{\frac12}\end{array}\right).\end{equation}
	The normalized Jost solutions are defined as follows:
	\begin{equation}\label{m3}\begin{aligned}&\Phi_{\pm1}(x,t;\lambda)=\mathbf{T}(x,t;\lambda)\varphi_{\pm1}(x,t;\zeta),\\&\Phi_{\pm2}(x,t;\lambda)=\zeta^{-1}
			\mathbf{T}(x,t;\lambda)\varphi_{\pm2}(x,t;\zeta),\end{aligned}\end{equation}
	which satisfy the differential equation and constant boundary conditions at infinity:
	\begin{equation}\label{par3}
		\partial_{x}\Phi_{\pm}=\frac{\mathrm{i}}{4}(\lambda-\lambda^{-1})[\sigma_{3},\Phi_{\pm}]+\left(\mathbf{Q}_1+\lambda^{-1} \mathbf{Q}_2\right)\Phi_{\pm},\quad\lim_{x\to\pm\infty}\Phi_{\pm}=\mathbf{I},
	\end{equation}
	where
	\begin{equation}\left.\begin{aligned}&\mathbf{Q}_{1}(x,t)=\left(\begin{array}{cc}\frac{\mathrm{i}}{4}\left(|u|^{2}+|v|^{2}\right)&-\frac{\mathrm{i}}{2}\bar{v}\\v_{x}+\frac{\mathrm{i}}{2}|u|^{2}v+\frac{\mathrm{i}}{2}u&-\frac{\mathrm{i}}{4}\left(|u|^{2}+|v|^{2}\right)\\\end{array}\right),\\&\mathbf{Q}_{2}(x,t)=-\frac{\mathrm{i}}{2}\left(\begin{array}{cc}\bar{u}v&-\bar{u}\\v+\bar{u}v^{2}&-\bar{u}v\end{array}\right).\end{aligned}\right.\end{equation}	
	Eq.\,\eqref{par3} can be integrated to yield the following Volterra integral equations:
	\begin{equation}\begin{aligned}\label{vo3}
			\Phi_{\pm}(x,t;\lambda)& =\left(\Phi_{\pm1}(x,t;\lambda),\Phi_{\pm2}(x,t;\lambda)\right) \\
			&=\mathbf{I}+\int_{\pm\infty}^x\mathrm{e}^{\frac{\mathrm{i}}{4}(\lambda-\lambda^{-1})(x-y)\hat{\sigma}_3}\left[(\mathbf{Q}_1(y,t)+\lambda^{-1} \mathbf{Q}_2(y,t))\Phi_{\pm}(y,t;\lambda)\right]\mathrm{d}y.
	\end{aligned}\end{equation}
	
	According to standard Volterra theory, we present the following theorem:
	\begin{theorem}\label{th4}
		Under the condition that $u(\cdot,t), v(\cdot,t)\in L^1(\mathbb{R})\cap L^{\infty}(\mathbb{R})$ and $v_x(\cdot,t)\in L^1(\mathbb{R})$ for fixed $t$, the normalized Jost solutions $\Phi_{\pm}(x,t;\lambda)$ in Eq.\,\eqref{vo3} are well-defined for $\lambda\in \mathbb{R}\backslash[-1,1]$. Specifically, for every $x\in\mathbb{R}$, $\Phi_{+1}(x,t;\cdot)$ and $\Phi_{-2}(x,t;\cdot)$ can be analytically continued onto the regions $\mathbb{C}^{+}\cap\Omega_{\infty}$, while $\Phi_{-1}(x,t;\cdot)$ and $\Phi_{+2}(x,t;\cdot)$ can be analytically continued onto $\mathbb{C}^{-}\cap\Omega_{\infty}$. Furthermore, the Volterra intergal equations \eqref{vo3} admit unique solutions $\Phi_{\pm1}(\cdot,t;\lambda)$ and $\Phi_{\pm2}(\cdot,t;\lambda)$ in $L^{\infty}(\mathbb{R})$ for every $\lambda\in \mathbb{R}\backslash[-1,1]$. The functions $\Phi_{\pm1}(x,t;\lambda)$ and $\Phi_{\pm2}(x,t;\lambda)$ exhibit the following asymptotic behavior as $\lambda\to\infty$ in the domains of their analyticity:
		\begin{equation}\label{lim2}
			\lim_{\lambda\to\infty}\frac{\Phi_{\pm1}(x,t;\lambda)}{\Phi_{\pm1}^{\infty}(x,t)}=\mathrm{\mathbf{e}}_1=(1,0)^{\mathrm{T}},\quad
			\lim_{\lambda\to\infty}\frac{\Phi_{\pm2}(x,t;\lambda)}{\Phi_{\pm2}^{\infty}(x,t)}=\mathrm{\mathbf{e}}_2=(0,1)^{\mathrm{T}},
		\end{equation}
		where
		\begin{equation}\Phi_{\pm1}^{\infty}(x,t)=\mathrm{e}^{\frac {\mathrm{i}}{4}\int_{\pm\infty}^x\left(|u|^2+|v|^2\right)\mathrm{d}y},\quad \Phi_{\pm2}^{\infty}(x,t)=\mathrm{e}^{-\frac {\mathrm{i}}{4}\int_{\pm\infty}^x\left(|u|^2+|v|^2\right)\mathrm{d}y}.\end{equation}
		If $v(\cdot,t)\in C^1(\mathbb{R})$, then
		\begin{equation}\left.\begin{aligned}&\lim_{\lambda\to\infty}\lambda\left[\frac{\Phi_{\pm1}(x,t;\lambda)}{\Phi_{\pm1}^{\infty}(x,t)}-\mathrm{\mathbf{e}}_1\right]=\left(\begin{array}{c}-\int_{\pm\infty}^{x}\left[\bar{v}\left(v_x+\frac{\mathrm{i}}{2}|u|^2v+\frac{\mathrm{i}}{2}u\right)+\frac{\mathrm{i}}{2}\bar{u}v\right]\mathrm{d}y\\-2\mathrm{i}v_x+|u|^2v+u\end{array}\right),\\&\lim_{\lambda\to\infty}\lambda\left[\frac{\Phi_{\pm2}(x,t;\lambda)}{\Phi_{\pm2}^{\infty}(x,t)}-\mathrm{\mathbf{e}}_2\right]=\left(\begin{array}{c}\bar{v}\\\int_{\pm\infty}^{x}\left[\bar{v}\left(v_x+\frac{\mathrm{i}}{2}|u|^2v+\frac{\mathrm{i}}{2}u\right)+\frac{\mathrm{i}}{2}\bar{u}v\right]\mathrm{d}y\end{array}\right.\bigg).\end{aligned}\right.\end{equation}
	\end{theorem}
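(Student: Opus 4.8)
\textbf{Overall strategy.} The claims in Theorem \ref{th4} split into three logically separate parts, and I would treat them in sequence: (i) existence, uniqueness and analyticity of $\Phi_{\pm}$ obtained from the Neumann series of the Volterra equations \eqref{vo3}; (ii) the leading-order asymptotics \eqref{lim2} as $\lambda\to\infty$, with the explicit limits $\Phi_{\pm j}^{\infty}$; (iii) the next-order ($\lambda^{-1}$) correction under the extra smoothness hypothesis $v(\cdot,t)\in C^1(\mathbb{R})$. The key structural fact I would exploit throughout is that the kernel of \eqref{vo3} has been engineered, via the gauge matrix $\mathbf{T}$, so that the $\zeta$-dependent coefficient $\mathbf{Q}$ has been replaced by $\mathbf{Q}_1+\lambda^{-1}\mathbf{Q}_2$ with $\mathbf{Q}_1,\mathbf{Q}_2$ \emph{independent of $\lambda$} and, under the stated $L^1\cap L^\infty$ hypotheses on $u,v,v_x$, both lying in $L^1(\mathbb{R})$. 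This is exactly what makes the standard Volterra machinery apply uniformly for large $\lambda$.

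\textbf{Part (i): existence, uniqueness, analyticity.} I would fix the column $\Phi_{+1}$ (the others are symmetric) and write the Neumann iteration $\Phi_{+1}=\sum_{n\ge0}\Phi_{+1}^{(n)}$ with $\Phi_{+1}^{(0)}=\mathbf e_1$ and $\Phi_{+1}^{(n+1)}(x)=\int_{+\infty}^{x}\e^{\frac{\ii}{4}(\lambda-\lambda^{-1})(x-y)\hat\sigma_3}\big[(\mathbf Q_1(y)+\lambda^{-1}\mathbf Q_2(y))\Phi_{+1}^{(n)}(y)\big]\dd y$. The entries of $\e^{\frac{\ii}{4}(\lambda-\lambda^{-1})(x-y)\hat\sigma_3}$ are $1$ and $\e^{\pm\frac{\ii}{2}(\lambda-\lambda^{-1})(x-y)}$; for $\lambda\in\mathbb{R}\setminus[-1,1]$ these are bounded by $1$ on the relevant half-line, and for $\lambda$ in the quadrant $\mathbb{C}^{+}\cap\Omega_\infty$ one checks that $\operatorname{Im}(\lambda-\lambda^{-1})>0$, so the exponential decays in the direction of integration $y\in(x,+\infty)$; this is the source of the claimed analyticity domains. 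A routine induction then gives $\|\Phi_{+1}^{(n)}(x)\|\le \frac{1}{n!}\big(\int_x^{+\infty}\|\mathbf Q_1\|+|\lambda|^{-1}\|\mathbf Q_2\|\big)^n$, whence uniform (in $x$ and in $\lambda$ on compact subsets of the analyticity region, including a neighborhood of $\infty$) and absolute convergence of the series in $L^\infty(\mathbb{R})$; analyticity in $\lambda$ follows by Weierstrass/Morera since each iterate is analytic. Uniqueness is the standard Gronwall argument for the difference of two $L^\infty$ solutions.

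\textbf{Parts (ii) and (iii): asymptotics.} For the leading order I would substitute the ansatz $\Phi_{+1}=\Phi_{+1}^{\infty}\mathbf e_1+(\text{lower order})$ and note that in the limit $\lambda\to\infty$ the off-diagonal entries of the kernel oscillate and, after integration by parts (or dominated convergence against the $L^1$ weight), contribute $o(1)$, while the diagonal entry reduces the $(1,1)$-component of \eqref{vo3} to the scalar equation $f(x)=1+\int_{+\infty}^{x}\frac{\ii}{4}(|u|^2+|v|^2)(y)f(y)\dd y$, whose unique solution is precisely $\Phi_{+1}^{\infty}(x,t)=\e^{\frac{\ii}{4}\int_{+\infty}^x(|u|^2+|v|^2)\dd y}$; the analysis of $\Phi_{+2}$ uses the $(2,2)$-entry and gives the reciprocal exponential. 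For the $\lambda^{-1}$-term I would expand $\Phi_{+1}/\Phi_{+1}^{\infty}=\mathbf e_1+\lambda^{-1}\mathbf g(x)+o(\lambda^{-1})$, insert into \eqref{vo3}, and match powers of $\lambda^{-1}$: the second (off-diagonal) component is read off algebraically because multiplying by $\lambda^{-1}$ cancels the oscillation after one integration by parts — here is where $v\in C^1$ enters, to legitimize integrating the $v_x$-containing entry of $\mathbf Q_1$ by parts and to produce the boundary term $-2\ii v_x+|u|^2v+u$; the first component then satisfies an inhomogeneous linear scalar ODE whose solution is the stated integral. The same computation with the roles of the components swapped yields the $\Phi_{+2}$ formula, and the $-\infty$ versions are identical with $\int_{+\infty}^x$ replaced by $\int_{-\infty}^x$.

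\textbf{Main obstacle.} The only genuinely delicate point is the \emph{uniformity as $\lambda\to\infty$}: one must verify that the oscillatory off-diagonal entries of $\e^{\frac{\ii}{4}(\lambda-\lambda^{-1})(x-y)\hat\sigma_3}$ do not spoil either the convergence of the Neumann series near $\lambda=\infty$ or the order of the error terms. The clean way to handle this — and the step I would spend the most care on — is a single integration by parts in the Volterra iteration that trades one power of the oscillatory phase for one $x$-derivative landing on $\mathbf Q_1$ (which is why $v\in C^1$, equivalently control of $v_x$, is exactly the hypothesis needed for the $\lambda^{-1}$ statement), combined with the Riemann–Lebesgue lemma for the $L^1$ data to get the bare $o(1)$ in \eqref{lim2}. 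Everything else is bookkeeping with the Pauli structure of $\mathbf Q_1,\mathbf Q_2$.
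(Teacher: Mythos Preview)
Your proposal is correct and follows precisely the route the paper has in mind: the paper does not give a proof of Theorem~\ref{th4} at all, but simply states it ``according to standard Volterra theory,'' and later (just before Proposition~2.9) indicates that the asymptotic formulas are obtained by substituting WKB expansions into the differential equation \eqref{par3} and matching powers of $\lambda$. Your Neumann-series argument for existence/uniqueness/analyticity, and your power-matching ansatz for the leading and next-to-leading asymptotics, are exactly this standard machinery spelled out.

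One small wording issue worth cleaning up: in part~(iii) you describe the $C^1$ hypothesis as what ``legitimizes integrating the $v_x$-containing entry of $\mathbf Q_1$ by parts'' and producing $-2\ii v_x+|u|^2v+u$ as a ``boundary term.'' That is not quite how the mechanism works. The entry $v_x+\tfrac{\ii}{2}|u|^2v+\tfrac{\ii}{2}u$ is already present in $(\mathbf Q_1)_{21}$, and the expression $-2\ii v_x+|u|^2v+u$ arises \emph{algebraically}: matching the $O(\lambda^0)$ terms in the ODE \eqref{par3} for the second component of $\Phi_{+1}/\Phi_{+1}^\infty$ gives $0=-\tfrac{\ii}{2}g_2+(v_x+\tfrac{\ii}{2}|u|^2v+\tfrac{\ii}{2}u)$, whence $g_2=-2\ii v_x+|u|^2v+u$ directly, with no integration by parts involved. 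The role of $v\in C^1(\mathbb R)$ is simply to ensure that $v_x$ exists pointwise and continuously so that this next-order coefficient is a bona fide function of $x$ (the weaker hypothesis $v_x\in L^1$ already suffices for $\mathbf Q_1\in L^1$ and hence for part~(i)). With this clarification your argument is complete.
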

	\subsubsection{Transformation of the Jost solutions for small $\zeta$}
	Assume $u(\cdot,t)\in L^{\infty}(\mathbb{R})$ for fixed $t$ and $\zeta\neq0$, let the transformation matrix $\widetilde{\mathbf{T}}(x,t;\lambda)$ be
	\begin{equation}\widetilde{\mathbf{T}}(x,t;\lambda)=\left(\begin{array}{cc}1&0\\u&\lambda^{-\frac12}\end{array}\right).\end{equation}
	Consistent with the previous discussion regarding large $\zeta$, we define the new Jost solutions as follows:
	\begin{equation}
		\begin{aligned}
			&\widetilde{\Phi}_{\pm1}(x,t;\lambda)=\widetilde{\mathbf{T}}(x,t;\lambda)\varphi_{\pm1}(x,t;\zeta),\\
			&\widetilde{\Phi}_{\pm2}(x,t;\lambda)=\zeta \widetilde{\mathbf{T}}(x,t;\lambda)\varphi_{\pm2}(x,t;\zeta),
		\end{aligned}
	\end{equation}
	which satisfy the differential equation and constant boundary conditions at infinity:
	\begin{equation}\label{par2}
		\partial_{x}\widetilde{\Phi}_{\pm}=\frac{\mathrm{i}}{4}(\lambda-\lambda^{-1})[\sigma_{3},\widetilde{\Phi}_{\pm}]+\left(\widetilde{\mathbf{Q}}_1+\lambda \widetilde{\mathbf{Q}}_2\right)\widetilde{\Phi}_{\pm},\quad \lim_{x\to\pm\infty}\widetilde{\Phi}_{\pm}=\mathbf{I},
	\end{equation}
	where
	\begin{equation}
		\begin{aligned}
			&\widetilde{\mathbf{Q}}_1(x,t)=\left(\begin{array}{cc}-\frac{\mathrm{i}}{4}\left(|u|^2+|v|^2\right)&\frac{\mathrm{i}}{2}\bar{u}\\u_x-\frac{\mathrm{i}}{2}|v|^2u-\frac{\mathrm{i}}{2}v&\frac{\mathrm{i}}{4}\left(|u|^2+|v|^2\right)\end{array}\right),\\
			&\widetilde{\mathbf{Q}}_2(x,t)=\frac{\mathrm{i}}{2}\left(\begin{array}{cc}u\bar{v}&-\bar{v}\\u+u^2\bar{v}&-u\bar{v}\end{array}\right).
		\end{aligned}
	\end{equation}
	Eq.\,\eqref{par2} is integrated to derive the following Volterra integral equations:
	\begin{equation}\begin{aligned}\label{vo2}
			\widetilde{\Phi}_{\pm}(x,t;\lambda)& =\left(\widetilde{\Phi}_{\pm1}(x,t;\lambda),\widetilde{\Phi}_{\pm2}(x,t;\lambda)\right) \\
			&=\mathbf{I}+\int_{\pm\infty}^x\mathrm{e}^{\frac{\mathrm{i}}{4}(\lambda-\lambda^{-1})(x-y)\hat{\sigma}_3}\left[(\widetilde{\mathbf{Q}}_1(y,t)+\lambda \widetilde{\mathbf{Q}}_2(y,t))\widetilde{\Phi}_{\pm}(y,t;\lambda)\right]\mathrm{d}y.
	\end{aligned}\end{equation}
	
	According to standard Volterra theory, we state the following theorem:
	\begin{theorem}\label{3}
		Under the condition that $u(\cdot,t), v(\cdot,t)\in L^1(\mathbb{R})\cap L^{\infty}(\mathbb{R})$ and $u_x(\cdot,t)\in L^1(\mathbb{R})$ for fixed $t$, the normalized Jost solutions $\widetilde{\Phi}_{\pm}(x,t;\lambda)$ in Eq.\,\eqref{vo2} are well-defined for $\lambda\in (-1,0)\cup(0,1)$. Specifically, for every $x\in\mathbb{R}$, $\widetilde{\Phi}_{+1}(x,t;\cdot)$ and $\widetilde{\Phi}_{-2}(x,t;\cdot)$ can be analytically continued onto the regions $\mathbb{C}^{+}\cap\Omega_{0}$, while $\widetilde{\Phi}_{-1}(x,t;\cdot)$ and $\widetilde{\Phi}_{+2}(x,t;\cdot)$ can be analytically continued onto $\mathbb{C}^{-}\cap\Omega_{0}$. Furthermore, the Volterra intergal equations \eqref{vo2} admit unique solutions $\widetilde{\Phi}_{\pm1}(\cdot,t;\lambda)$ and $\widetilde{\Phi}_{\pm2}(\cdot,t;\lambda)$ in $L^{\infty}(\mathbb{R})$ for every $\lambda\in (-1,0)\cup(0,1)$.
		 The functions $\widetilde{\Phi}_{\pm1}(x,t;\lambda)$ and $\widetilde{\Phi}_{\pm2}(x,t;\lambda)$ exhibit the following asymptotic behavior as $\lambda\to0$ in the domains of their analyticity:
		\begin{equation}\label{lim1}
			\begin{aligned}
				\lim_{\lambda\to0}\frac{\widetilde{\Phi}_{\pm1}(x,t;\lambda)}{\widetilde{\Phi}_{\pm1}^{0}(x,t)}=\mathrm{\mathbf{e}}_1=(1,0)^{\mathrm{T}},\quad\lim_{\lambda\to0}\frac{\widetilde{\Phi}_{\pm2}(x,t;\lambda)}{\widetilde{\Phi}_{\pm2}^{0}(x,t)}=\mathrm{\mathbf{e}}_2=(0,1)^{\mathrm{T}},
		\end{aligned}\end{equation}
		where
		\begin{equation}\begin{aligned}
				\widetilde{\Phi}_{\pm1}^{0}(x,t)=\mathrm{e}^{-\frac {\mathrm{i}}{4}\int_{\pm\infty}^x\left(|u|^2+|v|^2\right)\mathrm{d}y},\quad \widetilde{\Phi}_{\pm2}^{0}(x,t)=\mathrm{e}^{\frac {\mathrm{i}}{4}\int_{\pm\infty}^x\left(|u|^2+|v|^2\right)\mathrm{d}y}.
		\end{aligned}\end{equation}
		If $u(\cdot,t)\in C^1(\mathbb{R})$, then
		\begin{equation}
			\begin{aligned}
				&\lim_{\lambda\to0}\lambda^{-1}\left[\frac{\widetilde{\Phi}_{\pm1}(x,t;\lambda)}{\widetilde{\Phi}_{\pm1}^{0}(x,t)}-\mathrm{\mathbf{e}}_1\right]=\left(\begin{array}{c}-\int_{\pm\infty}^x\left[\bar{u}\left(u_x-\frac{\mathrm{i}}{2}u|v|^2-\frac{\mathrm{i}}{2}v\right)-\frac{\mathrm{i}}{2}u\bar{v}\right]\mathrm{d}y\\2\mathrm{i}u_x+u|v|^2+v\end{array}\right),\\&\lim_{\lambda\to0}\lambda^{-1}\left[\frac{\widetilde{\Phi}_{\pm2}(x,t;\lambda)}{\widetilde{\Phi}_{\pm2}^{0}(x,t)}-\mathrm{\mathbf{e}}_2\right]=\left(\begin{array}{c}\bar{u}\\\int_{\pm\infty}^x\left[\bar{u}\left(u_x-\frac{\mathrm{i}}{2}u|v|^2-\frac{\mathrm{i}}{2}v\right)-\frac{\mathrm{i}}{2}u\bar{v}\right]\mathrm{d}y\end{array}\right).
			\end{aligned}
	\end{equation}\end{theorem}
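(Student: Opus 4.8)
The plan is to treat the Volterra system \eqref{vo2} as a family of integral equations in $x$ parametrized by $\lambda$ and to run the standard Neumann-series machinery; the argument is the mirror image of the proof of Theorem \ref{th4} under the formal interchange of $\lambda$ and $\lambda^{-1}$, which is also why the regularity conditions imposed on $v$ there reappear as conditions on $u$ here. For the existence, uniqueness and analyticity statements I would first fix a column of \eqref{vo2}, say $\widetilde{\Phi}_{+1}$, so that the integration runs from $+\infty$ to $x$, and check that the matrix kernel $\mathrm{e}^{\frac{\mathrm{i}}{4}(\lambda-\lambda^{-1})(x-y)\hat{\sigma}_3}$ has all entries of modulus at most one on the claimed region: since $\mathrm{e}^{c\hat{\sigma}_3}$ fixes the diagonal of a matrix and scales its $(1,2)$ and $(2,1)$ entries by $\mathrm{e}^{\pm 2c}$, and since $\operatorname{Im}(\lambda-\lambda^{-1})=(r+r^{-1})\sin\phi$ for $\lambda=r\mathrm{e}^{\mathrm{i}\phi}$, boundedness of the kernel for $y\ge x$ holds precisely when $\operatorname{Im}(\lambda-\lambda^{-1})\ge 0$, that is, on $\mathbb{C}^{+}\cap\Omega_0$, and the three remaining columns are treated symmetrically, singling out exactly the stated analyticity domains. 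Then I would verify, via Hölder's inequality, that under $u(\cdot,t),v(\cdot,t)\in L^1\cap L^{\infty}$ and $u_x(\cdot,t)\in L^1$ the forcing matrix $\widetilde{\mathbf{Q}}_1+\lambda\widetilde{\mathbf{Q}}_2$ lies in $L^1(\mathbb{R})$ with norm bounded uniformly for $|\lambda|\le 1$ (the $(2,1)$ entry of $\widetilde{\mathbf{Q}}_1$ is integrable exactly because $u_x\in L^1$); the Neumann iterates then satisfy the usual factorial bound, the series converges absolutely and uniformly on compact subsets, and its sum is the unique $L^{\infty}(\mathbb{R})$ solution, analytic in $\lambda$ in the interior by the Weierstrass theorem and continuous down to the arcs $(-1,0)\cup(0,1)$ by dominated convergence.

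For the leading asymptotics \eqref{lim1} as $\lambda\to0$ I would split off the diagonal part, writing $\widetilde{\mathbf{Q}}_1=-\tfrac{\mathrm{i}}{4}(|u|^2+|v|^2)\sigma_3+\mathbf{N}$, and substitute $\widetilde{\Phi}_{\pm}=\operatorname{diag}\bigl(\widetilde{\Phi}_{\pm1}^{0},\widetilde{\Phi}_{\pm2}^{0}\bigr)\mathbf{G}_{\pm}$; since the prefactor is diagonal it commutes with $\sigma_3$ and absorbs the diagonal part of $\widetilde{\mathbf{Q}}_1$, so that $\mathbf{G}_{\pm}$ solves a Volterra equation whose forcing consists only of the off-diagonal entries of $\mathbf{N}$ carried by the oscillatory factors $\mathrm{e}^{\pm\frac{\mathrm{i}}{2}(\lambda-\lambda^{-1})(x-y)}$ together with the full $\lambda\widetilde{\mathbf{Q}}_2$ term. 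As $\lambda\to0$ the exponent $\lambda-\lambda^{-1}$ blows up, so a Riemann--Lebesgue argument, applied term by term in the uniformly summable Neumann series for $\mathbf{G}_{\pm}$, kills the integral operator in the limit and yields $\mathbf{G}_{\pm1}\to\mathbf{e}_1$, $\mathbf{G}_{\pm2}\to\mathbf{e}_2$, which is \eqref{lim1}.

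For the refined expansion under the extra hypothesis $u(\cdot,t)\in C^1(\mathbb{R})$ I would keep one more order: integrating the oscillatory integrals in the $\mathbf{G}_{\pm}$ equation by parts once, via $\mathrm{e}^{-\frac{\mathrm{i}}{2}(\lambda-\lambda^{-1})(x-y)}=\tfrac{2}{\mathrm{i}(\lambda-\lambda^{-1})}\partial_y\mathrm{e}^{-\frac{\mathrm{i}}{2}(\lambda-\lambda^{-1})(x-y)}$, produces a boundary term at $y=x$ which, multiplied by $\tfrac{2}{\mathrm{i}(\lambda-\lambda^{-1})}\sim 2\mathrm{i}\lambda$ and then divided by $\lambda$, recovers the pointwise second-slot entry $2\mathrm{i}u_x+u|v|^2+v$ (up to the normalizing exponential), while the surviving bulk integral produces the displayed line integral in the first slot and every remaining piece is $o(\lambda)$ by a second application of Riemann--Lebesgue; this is the step where pointwise $C^1$ regularity, rather than mere $L^1$-integrability, enters.

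The step I expect to be the genuine obstacle is this $\lambda\to0$ analysis, not the well-posedness: because the perturbation $\lambda\widetilde{\mathbf{Q}}_2$ and the exponential $\mathrm{e}^{-\frac{\mathrm{i}}{4}\lambda^{-1}(x-y)\hat{\sigma}_3}$ scale oppositely in $\lambda$, the decay of the off-diagonal contributions must be controlled uniformly along the whole Neumann series rather than merely in its linearization, and for the refined limit one must additionally confirm that the integration-by-parts boundary term survives at exactly order $\lambda$ while the bulk integrals do not, which is precisely why the bare $L^1\cap L^{\infty}$ assumption is inadequate and the supplementary conditions $u_x\in L^1$ and $u\in C^1$ are imposed.
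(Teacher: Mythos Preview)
The paper does not actually prove this theorem: both it and the companion Theorem~\ref{th4} are simply stated ``according to standard Volterra theory'', and the only hint of method for the asymptotic expansions is the later remark (preceding the proposition on the large-$\lambda$ behaviour of $\Phi_{\pm}$) that one substitutes a Wentzel--Kramers--Brillouin expansion into the differential equation and collects powers of $\lambda$. Your Neumann-series argument for existence, uniqueness and analyticity is exactly what that phrase stands for and is correct; the one slip is that the kernel bound $\operatorname{Im}(\lambda-\lambda^{-1})\ge 0$ already holds on all of $\overline{\mathbb{C}^{+}}$, not merely on $\mathbb{C}^{+}\cap\Omega_{0}$ --- the intersection with $\Omega_{0}$ enters only through the uniform $L^{1}$ control of $\lambda\widetilde{\mathbf{Q}}_{2}$, which you in fact note separately.

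Your Riemann--Lebesgue treatment of the leading limit \eqref{lim1} is fine, but the integration-by-parts route to the next order has a regularity mismatch. After one integration by parts the surviving bulk term carries $\partial_{y}\bigl[(D^{-1}\mathbf{N}D)_{21}\bigr]$, and since $\mathbf{N}_{21}=u_{x}-\tfrac{\mathrm{i}}{2}u|v|^{2}-\tfrac{\mathrm{i}}{2}v$ this introduces $u_{xx}$, which the hypothesis $u\in C^{1}$ does not supply; your ``second application of Riemann--Lebesgue'' then lacks the $L^{1}$ input it needs. The WKB matching implicit in the paper sidesteps this: inserting $\widetilde{\Phi}_{\pm1}=a_{0}+\lambda a_{1}+\cdots$ directly into \eqref{par2}, the singular piece $-\tfrac{\mathrm{i}}{4}\lambda^{-1}[\sigma_{3},\,\cdot\,]$ determines the off-diagonal component of each $a_{j}$ \emph{algebraically} from the previous order, so that $[a_{1}]_{2}=(2\mathrm{i}u_{x}+u|v|^{2}+v)\,\widetilde{\Phi}_{\pm1}^{0}$ drops out without ever differentiating $u_{x}$, while the diagonal components are recovered by a single $x$-integration. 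Replacing your integration-by-parts step by this matching, and bounding the remainder through the same Volterra machinery you have already set up, closes the argument under the stated $C^{1}$ hypothesis.
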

	\subsubsection{Connection between two pairs of normalized Jost solutions}
	In Theorems $\ref{th4}$ and $\ref{3}$, we establish the existence of two sets of Jost solutions,
	\begin{equation}\begin{aligned}&\{\Phi_{\pm1}(x,t;\lambda),\Phi_{\pm2}(x,t;\lambda)\},\quad \lambda\in\Omega_{\infty},\\ &\{\widetilde{\Phi}_{\pm1}(x,t;\lambda),\widetilde{\Phi}_{\pm2}(x,t;\lambda)\},\quad \lambda\in\Omega_0,
	\end{aligned}\end{equation}
	which are derived from the transformations of the Jost solutions
	$\left\{\phi_{\pm1}(x,t;\zeta),\phi_{\pm2}(x,t;\zeta)\right\}$ associated with the original Lax pair \eqref{lax}. These transformations establish the relationships between the sets of Jost solutions. For every $\lambda\in\Omega_1$, the following relationships hold:
	\begin{equation}\label{connect1}
		\begin{aligned}
			&\widetilde{\Phi}_{\pm1}(x,t;\lambda)=\left(\begin{array}{cc}1&0\\u(x,t)-\lambda^{-1}v(x,t)&\lambda^{-1}\end{array}\right)\Phi_{\pm1}(x,t;\lambda),\\\\&\widetilde{\Phi}_{\pm2}(x,t;\lambda)=\left(\begin{array}{cc}\lambda&0\\u(x,t)\lambda-v(x,t)&1\end{array}\right)\Phi_{\pm2}(x,t;\lambda),
	\end{aligned}\end{equation}
	conversely,
	\begin{equation}\label{connect2}\begin{aligned}\Phi_{\pm1}(x,t;\lambda)&=\left(\begin{array}{cc}1&0\\v(x,t)-\lambda u(x,t)&\lambda\end{array}\right)\widetilde{\Phi}_{\pm1}(x,t;\lambda),\\\Phi_{\pm2}(x,t;\lambda)&=\left(\begin{array}{cc}\lambda^{-1}&0\\v(x,t)\lambda^{-1}-u(x,t)&1\end{array}\right)\widetilde{\Phi}_{\pm2}(x,t;\lambda).\end{aligned}\end{equation}
	Based on Theorem $\ref{th4}$ and Eqs.\,\eqref{connect1}, we conclude that $\widetilde{\Phi}_{+1}(x,t;\cdot)$ and $\widetilde{\Phi}_{-2}(x,t;\cdot)$ can be analytically continued onto the regions $\mathbb{C}^{+}\cap\Omega_{\infty}$, whereas $\widetilde{\Phi}_{-1}(x,t;\cdot)$ and $\widetilde{\Phi}_{+2}(x,t;\cdot)$ can be analytically continued onto $\mathbb{C}^{-}\cap\Omega_{\infty}$. Similiarly, according to Theorem $\ref{3}$ and Eq.\,\eqref{connect2}, we can establish that $\Phi_{+1}(x,t;\cdot)$ and $\Phi_{-2}(x,t;\cdot)$ can be analytically continued onto the regions $\mathbb{C}^{+}\cap\Omega_{0}$, whereas $\Phi_{-1}(x,t;\cdot)$ and $\Phi_{+2}(x,t;\cdot)$ can be analytically continued onto $\mathbb{C}^{-}\cap\Omega_{0}$. Moreover, we get the existence of the following limits within their respective analytic domains from Theorems $\ref{th4}$ and $\ref{3}$:
	\begin{equation}\label{connect3}
		\begin{aligned}&\lim_{\lambda\to\infty}\frac{\widetilde{\Phi}_{\pm1}(x,t;\lambda)}{\Phi_{\pm1}^{\infty}(x,t)} =\mathrm{\mathbf{e}}_1+u(x,t)\mathrm{\mathbf{e}}_2, \\
			&\lim_{\lambda\to\infty}\frac{\widetilde{\Phi}_{\pm2}(x,t;\lambda)}{\Phi_{\pm2}^{\infty}(x,t)} =\overline{v(x,t)}\mathrm{\mathbf{e}}_1+\left(1+u(x,t)\overline{v(x,t)}\right)\mathrm{\mathbf{e}}_2,\\
			&\lim_{\lambda\to0}\frac{\Phi_{\pm1}(x,t;\lambda)}{\widetilde{\Phi}_{\pm1}^{0}(x,t)}=\mathrm{\mathbf{e}}_1+v(x,t)\mathrm{\mathbf{e}}_2,\\
			&\lim_{\lambda\to0}\frac{\Phi_{\pm2}(x,t;\lambda)}{\widetilde{\Phi}_{\pm2}^{0}(x,t)}=\overline{u(x,t)}\mathrm{\mathbf{e}}_1+\left(1+\overline{u(x,t)}v(x,t)\right)\mathrm{\mathbf{e}}_2.
	\end{aligned}\end{equation}
	
	By combining Theorem $\ref{th4}$ and  $\ref{3}$ with Eqs.\,\eqref{connect1} and \eqref{connect2}, we derive the following theorem:
	\begin{theorem}\label{th3}
		Under the condition that $u(\cdot,t), v(\cdot,t)\in L^1(\mathbb{R})\cap L^{\infty}(\mathbb{R})$ and $u_x(\cdot,t)$, $v_x(\cdot,t)\in L^1(\mathbb{R})$ for fixed $t$, the normalized Jost solutions $\Phi_{\pm}(x,t;\lambda)$ in Eq.\,\eqref{vo3} and $\widetilde{\Phi}_{\pm}(x,t;\lambda)$ in Eq.\,\eqref{vo2} are well-defined for $\lambda\in \mathbb{R}\backslash\{0\}$. Specifically, for every $x\in\mathbb{R}$, $\Phi_{+1}(x,t;\cdot),\widetilde{\Phi}_{+1}(x,t;\cdot)$ and $\Phi_{-2}(x,t;\cdot),\widetilde{\Phi}_{-2}(x,t;\cdot)$ can be analytically continued onto the regions $\mathbb{C}^{+}$, while $\Phi_{-1}(x,t;\cdot)$, $\widetilde{\Phi}_{-1}(x,t;\cdot)$ and $\Phi_{+2}(x,t;\cdot),\widetilde{\Phi}_{+2}(x,t;\cdot)$ can be analytically continued onto $\mathbb{C}^{-}$ with bounded limits as $\lambda\to0$ and $\lambda\to\infty$, as given by Eqs.\,\eqref{lim2}, \eqref{lim1} and \eqref{connect3}.
	\end{theorem}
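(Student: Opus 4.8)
The plan is to assemble Theorem~\ref{th3} from three ingredients already in hand: the large-$\lambda$ analysis of Theorem~\ref{th4}, the small-$\lambda$ analysis of Theorem~\ref{3}, and the algebraic bridge between the two families on the unit circle supplied by Eqs.~\eqref{connect1}--\eqref{connect2}. First I would observe that the present hypotheses $u(\cdot,t),v(\cdot,t)\in L^1(\mathbb R)\cap L^\infty(\mathbb R)$ and $u_x(\cdot,t),v_x(\cdot,t)\in L^1(\mathbb R)$ contain the hypotheses of Theorem~\ref{th4} (which needs $v_x\in L^1$) and of Theorem~\ref{3} (which needs $u_x\in L^1$), so every conclusion of those two theorems is simultaneously available. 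Well-definedness on $\mathbb R\setminus\{0\}$ is then immediate: Theorem~\ref{th4} supplies it on $\mathbb R\setminus[-1,1]$, Theorem~\ref{3} on $(-1,0)\cup(0,1)$, and the two remaining points $\lambda=\pm1$ lie on $\Omega_1$, where $\lambda-\lambda^{-1}=0$ makes both Volterra systems \eqref{vo3} and \eqref{vo2} trivially solvable and the connection relations \eqref{connect1} identify the two representations. For the analyticity statement I would treat the four ``$\mathbb C^+$-type'' functions $\Phi_{+1},\widetilde\Phi_{+1},\Phi_{-2},\widetilde\Phi_{-2}$; the statement for $\Phi_{-1},\widetilde\Phi_{-1},\Phi_{+2},\widetilde\Phi_{+2}$ follows by the identical argument with $\mathbb C^+$ replaced by $\mathbb C^-$.

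The crux is gluing across the circular arc $\gamma_+:=\Omega_1\cap\mathbb C^+$. Theorem~\ref{th4} gives $\Phi_{+1},\Phi_{-2}$ analytic on $\mathbb C^+\cap\Omega_\infty$, and --- as already recorded in the text following \eqref{connect3} --- the transformation \eqref{connect1}, whose matrices are polynomials in $\lambda$ and $\lambda^{-1}$ and hence analytic on a neighbourhood of $\gamma_+$, forces $\widetilde\Phi_{+1},\widetilde\Phi_{-2}$ to be analytic on $\mathbb C^+\cap\Omega_\infty$ as well; symmetrically, Theorem~\ref{3} and \eqref{connect2} make all four functions analytic on $\mathbb C^+\cap\Omega_0$. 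Thus each of the four is analytic on $\mathbb C^+\setminus\gamma_+$ and extends continuously to $\gamma_+$ from both the $\Omega_0$-side and the $\Omega_\infty$-side; moreover the relation between these two one-sided boundary values is precisely the defining identity \eqref{connect1}/\eqref{connect2}, which holds on all of $\Omega_1\supset\gamma_+$, so the one-sided limits in fact coincide. Hence each function is continuous across $\gamma_+$, and a standard Morera argument --- integrate around a small triangle straddling $\gamma_+$, split the integral into its $\Omega_0$- and $\Omega_\infty$-parts, each vanishing by Cauchy's theorem while the common sub-arc cancels --- promotes this to analyticity across $\gamma_+$. Therefore $\Phi_{+1},\widetilde\Phi_{+1},\Phi_{-2},\widetilde\Phi_{-2}$ are analytic on all of $\mathbb C^+$, and likewise the conjugate quadruple on $\mathbb C^-$.

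It remains to read off the behaviour at the punctures. As $\lambda\to\infty$ the limits of $\Phi_{\pm j}$ are those in \eqref{lim2} and the limits of $\widetilde\Phi_{\pm j}$ are the first two lines of \eqref{connect3}; as $\lambda\to0$ the limits of $\widetilde\Phi_{\pm j}$ are those in \eqref{lim1} and the limits of $\Phi_{\pm j}$ are the last two lines of \eqref{connect3}. All of these are finite: the scalar prefactors $\Phi^\infty_{\pm j}$ and $\widetilde\Phi^0_{\pm j}$ are exponentials of $\pm\tfrac{\ii}{4}\int_{\pm\infty}^x(|u|^2+|v|^2)\,\dd y$, and since $|u|^2+|v|^2\le(\|u\|_{L^\infty}+\|v\|_{L^\infty})(|u|+|v|)\in L^1(\mathbb R)$ those prefactors are bounded and bounded away from $0$, while the extra entries appearing in \eqref{connect3} are controlled in $x$ by the same $L^1\cap L^\infty$ bounds; this yields the bounded limits asserted in the theorem.

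The step I expect to be the main obstacle is justifying, in the second paragraph, that the one-sided boundary values on $\gamma_+$ genuinely agree, so that Morera applies. The delicate points are: (i) that $\Phi_{+1},\Phi_{-2}$ extend continuously from $\mathbb C^+\cap\Omega_\infty$ up to $\gamma_+$ and that $\widetilde\Phi_{+1},\widetilde\Phi_{-2}$ extend continuously from $\mathbb C^+\cap\Omega_0$ up to $\gamma_+$, which follows from the Volterra representations \eqref{vo3} and \eqref{vo2} by a uniform-in-$\lambda$ Neumann-series estimate on $\overline{\Omega_\infty}\cap\overline{\mathbb C^+}$, respectively $\overline{\Omega_0}\cap\overline{\mathbb C^+}$ (the exponential kernel there has modulus $\le1$); and (ii) the endpoints $\lambda=\pm1$ of $\gamma_+$, which lie on the real axis, do not interfere, since they are not interior points of $\mathbb C^+$ and hence require no separate continuation. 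Once continuity across $\gamma_+$ (and, symmetrically, across $\Omega_1\cap\mathbb C^-$) is in place, the rest is bookkeeping.
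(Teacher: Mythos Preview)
Your proof is correct and follows essentially the same route as the paper, which derives Theorem~\ref{th3} simply ``by combining Theorem~\ref{th4} and~\ref{3} with Eqs.\,\eqref{connect1} and \eqref{connect2}'' together with the discussion preceding the statement, without spelling out the continuation across the arc $\Omega_1\cap\mathbb C^\pm$. Your Morera argument and the continuity estimates from the Neumann series make explicit a gluing step the paper leaves implicit, but the overall strategy is identical.
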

	\subsection{Scattering matrix}
	In the following, we introduce the scattering matrix. 	Given that $\mathbf{Q}_1(x,t)$ and $\mathbf{Q}_2(x,t)$ are traceless, by Abel's Theorem, we arrive at  \begin{equation}\partial_{x}\operatorname*{det}(\Phi_{\pm}(x,t;\lambda))=0,\quad\lambda\in \mathbb{R}\backslash\{0\},\end{equation} which signifies that the determinant of $\Phi_{\pm}(x,t;\lambda)$ is independent of the spatial variable $x$. Consequently, we evaluate the determinant of $\Phi_{\pm}(x,t;\lambda)$ as $x\to\pm\infty$. Eq.\,\eqref{par3} implies:
	\begin{equation}\label{det2}\det(\Phi_{\pm}(x,t;\lambda))=1,\quad\lambda\in\mathbb{R}\backslash\{0\}.\end{equation}
	It is clear that the scattering matrix $\mathbf{S}(\lambda)$ satisfies the following relationship:
	\begin{equation}\label{sg2}
		\Phi_{-}(x,t;\lambda)=\Phi_{+}(x,t;\lambda)\mathrm{e}^{\mathrm{i}\Theta(x,t;\lambda)\hat{\sigma}_3}\mathbf{S}(\lambda),\quad\lambda\in \mathbb{R}\backslash\{0\}.
	\end{equation}
where $\Theta(x,t;\lambda)=\frac14(\lambda-\lambda^{-1})x+\frac14(\lambda+\lambda^{-1})t.$ 	
	To determine the reflection coefficient, which is pivotal for constructing the jump matrix, we revert to the analysis of the original Jost solutions $\{\phi_{+},\phi_{-}\}$. By examining the relationships among these Jost solutions, we derive the relationships between the elements of the scattering matrix and subsequently ascertain the reflection coefficient. It is imperative to investigate the symmetries of the scattering problem and
the scattering problem exhibits symmetry under the involution:\,$\zeta\to\bar{\zeta}$.
	\begin{prop}\label{sy}
		Considering the symmetries of the original Lax pair \eqref{lax}, we get the following symmetry relations:
		\begin{equation}\label{sym}
			\phi_\pm(x,t;\zeta)=\sigma_2\overline{\phi_\pm(x,t;\bar{\zeta})}\sigma_2,\quad\zeta\in\mathbb{R}\cup i\mathbb{R}\backslash\{0\}.
		\end{equation}	\end{prop}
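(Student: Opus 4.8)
The plan is to check that the matrix $\widetilde{\phi}_{\pm}(x,t;\zeta):=\sigma_2\,\overline{\phi_\pm(x,t;\bar\zeta)}\,\sigma_2$ is again a Jost solution of the Lax pair \eqref{lax} obeying the normalization \eqref{psi}, and then to read off \eqref{sym} from the uniqueness of such a solution. Note first that \eqref{sym} is asserted only for $\zeta\in\mathbb{R}\cup i\mathbb{R}\setminus\{0\}$: this is exactly the locus on which $\zeta^2\in\mathbb{R}$, hence on which $\theta(x,t;\zeta)$ is real and the exponential boundary data in \eqref{psi} stay bounded, so that $\phi_\pm(x,t;\zeta)$ genuinely exists; this set is moreover invariant under $\zeta\mapsto\bar\zeta$, so $\phi_\pm(x,t;\bar\zeta)$ is equally well defined. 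Both sides of \eqref{sym} therefore make sense, and only their equality must be shown.

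The first step is a purely algebraic symmetry of the Lax matrices,
\begin{equation}
\sigma_2\,\overline{\mathbf{U}(x,t;\bar\zeta)}\,\sigma_2=\mathbf{U}(x,t;\zeta),\qquad \sigma_2\,\overline{\mathbf{V}(x,t;\bar\zeta)}\,\sigma_2=\mathbf{V}(x,t;\zeta),
\end{equation}
which I would verify term by term. Complex conjugation sends $\mathrm{i}$ to $-\mathrm{i}$, fixes the real scalars $|u|^2\pm|v|^2$, sends $(\bar\zeta)^k$ to $\zeta^k$ for $k=\pm1,\pm2$ (since $x,t$ are real and $\overline{\bar\zeta}=\zeta$), and interchanges the off-diagonal blocks $\begin{pmatrix}0&\bar v\\ v&0\end{pmatrix}\leftrightarrow\begin{pmatrix}0&v\\ \bar v&0\end{pmatrix}$, and likewise the $u$-block. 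Conjugation by $\sigma_2$ then acts via $\sigma_2\sigma_3\sigma_2=-\sigma_3$ and $\sigma_2\begin{pmatrix}0&a\\ b&0\end{pmatrix}\sigma_2=\begin{pmatrix}0&-b\\-a&0\end{pmatrix}$; collecting these sign changes, each of the four summands of $\mathbf{U}$, and of $\mathbf{V}$, is reproduced exactly.

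Granting this identity, the conclusion is routine. Differentiating $\widetilde{\phi}_{\pm}$ and using $\partial_x\overline{\phi_\pm(x,t;\bar\zeta)}=\overline{\mathbf{U}(x,t;\bar\zeta)}\;\overline{\phi_\pm(x,t;\bar\zeta)}$ gives $\partial_x\widetilde{\phi}_{\pm}=\mathbf{U}(x,t;\zeta)\widetilde{\phi}_{\pm}$, and likewise $\partial_t\widetilde{\phi}_{\pm}=\mathbf{V}(x,t;\zeta)\widetilde{\phi}_{\pm}$. For the normalization, $\overline{\theta(x,t;\bar\zeta)}=\theta(x,t;\zeta)$ (valid as $x,t\in\mathbb{R}$) and $\sigma_2\,\mathrm{e}^{c\sigma_3}\sigma_2=\mathrm{e}^{-c\sigma_3}$ together with \eqref{psi} yield
\begin{equation}
\widetilde{\phi}_{\pm}(x,t;\zeta)=\sigma_2\,\mathrm{e}^{-\mathrm{i}\theta(x,t;\zeta)\sigma_3}\sigma_2+o(1)=\mathrm{e}^{\mathrm{i}\theta(x,t;\zeta)\sigma_3}+o(1),\qquad x\to\pm\infty,
\end{equation}
and the same computation performed columnwise matches the precise asymptotics of $\phi_{\pm1}$ and $\phi_{\pm2}$. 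Hence $\widetilde{\varphi}_{\pm}:=\widetilde{\phi}_{\pm}\,\mathrm{e}^{-\mathrm{i}\theta\sigma_3}$ satisfies the Volterra integral equation \eqref{m:volterra}, which for each fixed $\zeta\in\mathbb{R}\cup i\mathbb{R}\setminus\{0\}$ (where the kernel is bounded, $\zeta^2-\zeta^{-2}\in\mathbb{R}$) has a unique $L^\infty$ solution; therefore $\widetilde{\phi}_{\pm}\equiv\phi_\pm$, which is \eqref{sym}.

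I do not anticipate a genuine obstacle here; the only delicate point is the bookkeeping in the algebraic identity, where three independent sign flips — from $\overline{\mathrm{i}}=-\mathrm{i}$, from $\sigma_2\sigma_3\sigma_2=-\sigma_3$, and from the off-diagonal conjugation — must be tracked and shown to cancel consistently in every term of $\mathbf{U}$ and $\mathbf{V}$. Once that holds, the uniqueness of the Jost solution closes the argument, and the companion identity for $\mathbf{V}$ ensures compatibility with the time flow.
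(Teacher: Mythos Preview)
Your proposal is correct and follows exactly the approach the paper has in mind: the paper does not spell out a proof at all, simply asserting the relation as a consequence of ``the symmetries of the original Lax pair \eqref{lax}'', and your argument --- verifying $\sigma_2\,\overline{\mathbf{U}(x,t;\bar\zeta)}\,\sigma_2=\mathbf{U}(x,t;\zeta)$ (and the analogous identity for $\mathbf{V}$), then invoking uniqueness of the Jost solution from the Volterra equation \eqref{m:volterra} --- is precisely the standard way to unpack that phrase. Your bookkeeping of the three sign flips is accurate, and the restriction to $\zeta\in\mathbb{R}\cup\mathrm{i}\mathbb{R}\setminus\{0\}$ is correctly identified as the locus where $\zeta^2\in\mathbb{R}$ so that both $\phi_\pm(\cdot\,;\zeta)$ and $\phi_\pm(\cdot\,;\bar\zeta)$ are defined.
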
	Based on the relationship between $\Phi(x,t;\lambda)$ and $\phi(x,t;\zeta)$ as described in Eqs.\,\eqref{m} and \eqref{m3}, along with the analyticity of $\Phi(x,t;\lambda)$ established in Theorem \ref{th3}, we conclude that the symmetry relations \eqref{sym} can be analytically extended to broader domains.
		\begin{corollary}The columns of $\phi_{\pm}(x,t;\zeta)$ exhibit the following symmetry relations:	
			\begin{equation}\label{schw}\begin{aligned}
					&\phi_{+1}(x,t;\zeta)=\mathrm{i}\sigma_{2}\overline{\phi_{+2}(x,t;\bar{\zeta})},\quad\zeta\in \mathbb{C}^{++}\cup\mathbb{C}^{--}\cup\mathbb{R}\cup \mathrm{i}\mathbb{R}\backslash\{0\},\\
					&\phi_{-1}(x,t;\zeta)=\mathrm{i}\sigma_{2}\overline{\phi_{-2}(x,t;\bar{\zeta})},\quad\zeta\in \mathbb{C}^{+-}\cup\mathbb{C}^{-+}\cup\mathbb{R}\cup \mathrm{i}\mathbb{R}\backslash\{0\}.
					\end{aligned}\end{equation}
			\end{corollary}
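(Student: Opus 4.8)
The plan is to derive the column-wise symmetry \eqref{schw} from the matrix symmetry \eqref{sym} of Proposition \ref{sy}, and then extend the domain of validity by analytic continuation using Theorem \ref{th3}. First I would write out \eqref{sym} column by column. The relation $\phi_\pm(x,t;\zeta)=\sigma_2\overline{\phi_\pm(x,t;\bar\zeta)}\sigma_2$ holds initially for $\zeta\in\mathbb{R}\cup\mathrm{i}\mathbb{R}\setminus\{0\}$. Since $\sigma_2=\begin{pmatrix}0&-\mathrm{i}\\\mathrm{i}&0\end{pmatrix}$, right-multiplication by $\sigma_2$ sends the first column to $\mathrm{i}$ times the second and the second column to $-\mathrm{i}$ times the first, while left-multiplication by $\sigma_2$ acts on each column vector. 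Reading off the first column of both sides of \eqref{sym} then gives $\phi_{\pm1}(x,t;\zeta)=\mathrm{i}\,\sigma_2\overline{\phi_{\pm2}(x,t;\bar\zeta)}$, which is exactly the asserted identity on the real/imaginary axes. (Reading off the second column gives the companion relation $\phi_{\pm2}(x,t;\zeta)=-\mathrm{i}\,\sigma_2\overline{\phi_{\pm1}(x,t;\bar\zeta)}$, which is equivalent to the first by applying $\sigma_2\overline{\,\cdot\,}$ and using $\sigma_2^2=\mathbf{I}$, $\overline{\sigma_2}=-\sigma_2$.)

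Next I would promote this to the larger domains. The point is that both sides of $\phi_{+1}(x,t;\zeta)=\mathrm{i}\sigma_2\overline{\phi_{+2}(x,t;\bar\zeta)}$ must be analytic (in $\zeta$) on the same open set for the identity theorem to apply. Translating the $\lambda=\zeta^2$ analyticity of Theorem \ref{th3} back to the $\zeta$-plane via \eqref{m}, \eqref{m3}: $\Phi_{+1}(\cdot;\lambda)$ and $\Phi_{-2}(\cdot;\lambda)$ (hence, after undoing the transformation matrix $\mathbf{T}$ and the scalar factor $\zeta^{-1}$, the columns $\phi_{+1}$ and $\phi_{-2}$) are analytic for $\lambda\in\mathbb{C}^+$, i.e. for $\zeta$ in the two quadrants where $\zeta^2$ has positive imaginary part, namely $\mathbb{C}^{++}\cup\mathbb{C}^{--}$; likewise $\phi_{+2}$ (analytic in $\lambda$ for $\lambda\in\mathbb{C}^-$) is analytic in $\zeta$ on the same set $\mathbb{C}^{++}\cup\mathbb{C}^{--}$. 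For the map $\zeta\mapsto\bar\zeta$: if $\zeta\in\mathbb{C}^{++}\cup\mathbb{C}^{--}$ then $\bar\zeta$ also lies in $\mathbb{C}^{++}\cup\mathbb{C}^{--}$ (complex conjugation preserves this union), and $\phi_{+2}(x,t;\bar\zeta)$ depends anti-holomorphically on $\zeta$ there, so $\overline{\phi_{+2}(x,t;\bar\zeta)}$ is holomorphic in $\zeta$ on that domain. Thus both sides of the first identity in \eqref{schw} are holomorphic on the connected open set $\mathbb{C}^{++}\cup\mathbb{C}^{--}$ minus nothing — and they agree on its boundary portion $\mathbb{R}\cup\mathrm{i}\mathbb{R}\setminus\{0\}$, which accumulates inside, so by analytic continuation they agree throughout $\mathbb{C}^{++}\cup\mathbb{C}^{--}\cup\mathbb{R}\cup\mathrm{i}\mathbb{R}\setminus\{0\}$. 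The second identity in \eqref{schw} is obtained identically, using that $\phi_{-1}$ and $\phi_{-2}$ are analytic in $\zeta$ on $\mathbb{C}^{+-}\cup\mathbb{C}^{-+}$ (the $\zeta$-image of $\lambda\in\mathbb{C}^-$ for the minus-index solutions / $\lambda\in\mathbb{C}^+$ for $\phi_{-2}$, consistent with Theorem \ref{th3}).

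I expect the main obstacle to be bookkeeping the quadrant correspondence $\lambda=\zeta^2$ cleanly: one must check that the four quadrants $\mathbb{C}^{\pm\pm}$ in the $\zeta$-plane map exactly onto the half-planes $\mathbb{C}^{\pm}$ in the $\lambda$-plane in the way that matches the analyticity statements of Theorem \ref{th3}, and that the transformation matrices $\mathbf{T},\widetilde{\mathbf{T}}$ and the scalar prefactors $\zeta^{\pm1}$ relating $\phi$ to $\Phi,\widetilde\Phi$ do not destroy analyticity in $\zeta$ (they are themselves holomorphic in $\zeta$ away from $\zeta=0$, with $\lambda^{\pm1/2}$ branch points handled by the square/square-root pairing, so this is routine but must be stated). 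A secondary point is confirming that the boundary set $\mathbb{R}\cup\mathrm{i}\mathbb{R}\setminus\{0\}$ genuinely lies in (the closure of) the relevant domain and is a set of uniqueness for holomorphic functions there — it is, being a one-dimensional real-analytic arc in the boundary of each quadrant with the function extending continuously up to it, so Morera/edge-of-the-wedge or a direct identity-theorem argument on each closed quadrant applies. Everything else is the straightforward column extraction from \eqref{sym} described above.
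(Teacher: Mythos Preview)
Your approach---extract the column identity from \eqref{sym} and then analytically continue using the analyticity in $\lambda$ furnished by Theorem \ref{th3}, translated back through $\lambda=\zeta^2$---is exactly what the paper does (its proof is the one-sentence remark immediately preceding the corollary). So the strategy is right.

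However, your quadrant bookkeeping contains two errors that happen to cancel. First, $\Phi_{+2}$ is analytic for $\lambda\in\mathbb{C}^-$, so $\phi_{+2}$ is analytic in $\zeta$ on $\mathbb{C}^{+-}\cup\mathbb{C}^{-+}$, \emph{not} on $\mathbb{C}^{++}\cup\mathbb{C}^{--}$ as you wrote. Second, complex conjugation does \emph{not} preserve $\mathbb{C}^{++}\cup\mathbb{C}^{--}$: it interchanges this set with $\mathbb{C}^{+-}\cup\mathbb{C}^{-+}$. The correct chain is: for $\zeta\in\mathbb{C}^{++}\cup\mathbb{C}^{--}$ one has $\bar\zeta\in\mathbb{C}^{+-}\cup\mathbb{C}^{-+}$, which is precisely the analyticity domain of $\phi_{+2}$, so $\overline{\phi_{+2}(x,t;\bar\zeta)}$ is holomorphic in $\zeta$ on $\mathbb{C}^{++}\cup\mathbb{C}^{--}$ as needed. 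The same slip recurs in your treatment of $\phi_{-2}$. Also, $\mathbb{C}^{++}\cup\mathbb{C}^{--}$ is not connected, so the identity-theorem step must be carried out on each quadrant separately (the boundary arcs $\mathbb{R}^{\pm}$ and $\mathrm{i}\mathbb{R}^{\pm}$ give accumulation sets in the closure of each component). Fix these points and the argument is complete.
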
		
We consider the symmetries of the transformed Jost solutions $\Phi_{\pm}(x,t;\lambda)$ under the involution:\,$\lambda \to \bar{\lambda}$.
\begin{corollary}
	 According to the proposition\,\ref{sy}, we get the following symmetry relations:
	\begin{equation}
		\Phi_{\pm}(x,t;\lambda)=\mathbf{T}(x,t;\lambda)\sigma_2\mathbf{T}^{-1}(x,t;\lambda)\overline{\Phi_{\pm}(x,t;\bar{\lambda})}\sigma_2\left(\begin{array}{cc}\lambda^{\frac12}&0\\0&\lambda^{-\frac12}\end{array}\right),\quad\lambda\in\mathbb{R}\backslash\{0\},
	\end{equation}
	the symmetry relations can be analytically extended to broader domains:
	\begin{equation}\label{phi}\begin{aligned}
			&\Phi_{+1}(x,t;\lambda)=\mathrm{i}\lambda^{\frac12}\mathbf{T}(x,t;\lambda)\sigma_2\mathbf{T}^{-1}(x,t;\lambda)\overline{\Phi_{+2}(x,t;\bar{\lambda})},\quad\lambda\in \mathbb{C}^{+}\cup\mathbb{R}\backslash\{0\},\\
			&\Phi_{-1}(x,t;\lambda)=\mathrm{i}\lambda^{\frac12}\mathbf{T}(x,t;\lambda)\sigma_2\mathbf{T}^{-1}(x,t;\lambda)\overline{\Phi_{-2}(x,t;\bar{\lambda})},\quad\lambda\in \mathbb{C}^{-}\cup\mathbb{R}\backslash\{0\}.
	\end{aligned}\end{equation}
Moreover, we can also obtain the symmetry relations of $\mathbf{S}(\lambda)$:
\begin{equation}
	\mathbf{S}(\lambda)=\left(\begin{array}{cc}\lambda^{-\frac12}&0\\0&\lambda^{\frac12}\end{array}\right)\sigma_2\overline{\mathbf{S}(\bar{\lambda})}\sigma_2\left(\begin{array}{cc}\lambda^{\frac12}&0\\0&\lambda^{-\frac12}\end{array}\right),\quad\lambda\in\mathbb{R}\backslash\{0\}.
\end{equation}
\end{corollary}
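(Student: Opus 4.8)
The plan is to transport the symmetry of Proposition~\ref{sy} forward through the two layers of transformations $\phi_{\pm}\mapsto\varphi_{\pm}\mapsto\Phi_{\pm}$, and then to feed the resulting symmetry of $\Phi_{\pm}$ into the scattering relation~\eqref{sg2}. The single elementary fact that drives every step is that $\theta(x,t;\zeta)$ depends on $\zeta$ only through $\zeta^{2}=\lambda$ and $\zeta^{-2}=\lambda^{-1}$, so that $\theta(x,t;\zeta)=\Theta(x,t;\lambda)$, and since $x,t$ are real it obeys the Schwarz-type identity $\overline{\Theta(x,t;\bar\lambda)}=\Theta(x,t;\lambda)$ (equivalently $\overline{\theta(x,t;\bar\zeta)}=\theta(x,t;\zeta)$); this is precisely what forces the oscillatory exponentials to cancel at each stage.

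First I would use Proposition~\ref{sy} in the column form already recorded above, $\phi_{\pm1}(x,t;\zeta)=\mathrm{i}\sigma_{2}\overline{\phi_{\pm2}(x,t;\bar\zeta)}$. Combining it with $\varphi_{\pm1}=\phi_{\pm1}\mathrm{e}^{-\mathrm{i}\theta}$, $\varphi_{\pm2}=\phi_{\pm2}\mathrm{e}^{\mathrm{i}\theta}$ from \eqref{m} and with $\overline{\theta(x,t;\bar\zeta)}=\theta(x,t;\zeta)$, the exponential prefactors cancel and the relation passes verbatim to $\varphi_{\pm1}(x,t;\zeta)=\mathrm{i}\sigma_{2}\overline{\varphi_{\pm2}(x,t;\bar\zeta)}$. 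Next I would invert the transformation \eqref{m3} for the second column, conjugate, and substitute it together with $\Phi_{\pm1}=\mathbf{T}(x,t;\lambda)\varphi_{\pm1}$; after regrouping the conjugates of $\mathbf{T}$ and of the rescaling $\zeta^{-1}$ (here $\overline{\bar\zeta}=\zeta=\lambda^{\frac12}$ supplies the prefactor $\mathrm{i}\lambda^{\frac12}$) this yields the $\mathbf{T}$-conjugated column identities \eqref{phi}. No separate analytic-continuation step is needed: each column identity, being obtained by applying the pointwise maps \eqref{m}--\eqref{m3} to the $\phi$-symmetry, is valid on the image under $\zeta\mapsto\zeta^{2}$ of the $\zeta$-domain of that symmetry, that is, on $\mathbb{C}^{+}\cup\mathbb{R}\backslash\{0\}$ (respectively $\mathbb{C}^{-}\cup\mathbb{R}\backslash\{0\}$), where by Theorem~\ref{th3} all the occurring $\Phi$-quantities are defined and analytic. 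On $\lambda\in\mathbb{R}\backslash\{0\}$ both columns of $\Phi_{\pm}$ coexist, so recombining the two column identities into one matrix identity gives the first displayed formula.

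For the scattering matrix I would start from \eqref{sg2}, which (using $\det\Phi_{\pm}=1$ from \eqref{det2}) rearranges to $\mathbf{S}(\lambda)=\mathrm{e}^{-\mathrm{i}\Theta\sigma_{3}}\Phi_{+}^{-1}(x,t;\lambda)\Phi_{-}(x,t;\lambda)\mathrm{e}^{\mathrm{i}\Theta\sigma_{3}}$. Writing the matrix symmetry of $\Phi_{\pm}$ as $\Phi_{\pm}(x,t;\lambda)=\mathcal{T}(x,t;\lambda)\overline{\Phi_{\pm}(x,t;\bar\lambda)}\,\sigma_{2}\,D(\lambda)$ with $D(\lambda)=\mathrm{diag}(\lambda^{\frac12},\lambda^{-\frac12})$ and $\mathcal{T}$ the common $\mathbf{T}$-conjugation factor, the $x,t$-dependent matrix $\mathcal{T}(x,t;\lambda)$ cancels between $\Phi_{+}^{-1}$ and $\Phi_{-}$, leaving $\Phi_{+}^{-1}\Phi_{-}=D(\lambda)^{-1}\sigma_{2}\,\overline{\Phi_{+}^{-1}(x,t;\bar\lambda)\Phi_{-}(x,t;\bar\lambda)}\,\sigma_{2}\,D(\lambda)$. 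Substituting $\overline{\Phi_{+}^{-1}(x,t;\bar\lambda)\Phi_{-}(x,t;\bar\lambda)}=\mathrm{e}^{-\mathrm{i}\Theta\sigma_{3}}\overline{\mathbf{S}(\bar\lambda)}\,\mathrm{e}^{\mathrm{i}\Theta\sigma_{3}}$ (again via $\overline{\Theta(x,t;\bar\lambda)}=\Theta(x,t;\lambda)$), commuting $D(\lambda)^{\pm1}$ past the diagonal exponentials, and moving $\sigma_{2}$ across them with $\sigma_{2}\mathrm{e}^{\mathrm{i}\Theta\sigma_{3}}=\mathrm{e}^{-\mathrm{i}\Theta\sigma_{3}}\sigma_{2}$, all dependence on $x$, $t$ and $\Theta$ disappears and there remains $\mathbf{S}(\lambda)=D(\lambda)^{-1}\sigma_{2}\overline{\mathbf{S}(\bar\lambda)}\sigma_{2}D(\lambda)$, which is exactly the asserted identity.

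I expect the only genuine difficulty to be bookkeeping. One has to keep careful track of the interaction between complex conjugation, the reflection $\lambda\mapsto\bar\lambda$, the branch of $\lambda^{\frac12}=\zeta$, and the mutually non-commuting matrices $\sigma_{2}$, $\mathbf{T}(x,t;\lambda)$ and $\mathrm{e}^{\mathrm{i}\Theta\sigma_{3}}$ as they are pushed through the transformation layers; in particular, one must confirm that the gauge factor $\mathcal{T}$ emerging from the computation is precisely the $\mathbf{T}$-conjugation appearing in \eqref{phi}, and that it cancels identically in the derivation of the symmetry of $\mathbf{S}(\lambda)$.
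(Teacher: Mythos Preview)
The paper states this corollary without proof; your plan --- lifting the symmetry of Proposition~\ref{sy} through the normalization \eqref{m} and the gauge \eqref{m3} using the Schwarz identity $\overline{\Theta(x,t;\bar\lambda)}=\Theta(x,t;\lambda)$, and then inserting the result into \eqref{sg2} so that the $x,t$-dependent factor cancels between $\Phi_{+}^{-1}$ and $\Phi_{-}$ --- is the natural derivation and is correct.

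One caveat, which you already flag as the bookkeeping risk: when you actually carry out step~2, conjugating $\varphi_{\pm2}(\bar\zeta)=\bar\zeta\,\mathbf{T}^{-1}(x,t;\bar\lambda)\Phi_{\pm2}(x,t;\bar\lambda)$ produces the intermediate factor $\mathbf{T}(x,t;\lambda)\sigma_{2}\,\overline{\mathbf{T}^{-1}(x,t;\bar\lambda)}$ rather than $\mathbf{T}(x,t;\lambda)\sigma_{2}\mathbf{T}^{-1}(x,t;\lambda)$. Since $\mathbf{T}$ depends on the complex potential $v$, these two matrices differ entry-wise (the $(1,1)$ and $(2,1)$ entries carry $\bar v$ versus $v$), so the displayed form in \eqref{phi} appears to be a typo in the paper. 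This does not affect your argument: whatever the precise form of the left gauge factor $\mathcal{T}$, it is the \emph{same} for $\Phi_{+}$ and $\Phi_{-}$ and therefore cancels identically in $\Phi_{+}^{-1}\Phi_{-}$, so the symmetry of $\mathbf{S}(\lambda)$ follows exactly as you outline.
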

	Consequently, the relationships among the elements of the scattering matrix $\mathbf{S}(\lambda)$ are derived as follows:
\begin{equation}
\mathbf{S}_{11}(\lambda)=\overline{\mathbf{S}_{22}(\bar{\lambda})},\quad\mathbf{S}_{21}(\lambda)=-\lambda\,\overline{\mathbf{S}_{12}(\bar{\lambda})},\quad\lambda\in\mathbb{R}\backslash\{0\}.
\end{equation}	
	 From these relations, the scattering matrix $\mathbf{S}(\lambda)$ can be written in the form
	\begin{equation}\label{sgl}\begin{aligned}
			\mathbf{S}(\lambda)=\left(\begin{array}{cc}\overline{\alpha(\bar{\lambda})}&{\beta}(\lambda)\\-\lambda\overline{\beta(\bar{\lambda})}&\alpha(\lambda)\end{array}\right),
			\quad\lambda\in \mathbb{R}\backslash\{0\}.
	\end{aligned}\end{equation}
Taking the determinants of both sides of Eq.\,\eqref{sg2} results in the following expression:
\begin{equation}
	|\alpha(\lambda)|^2+\lambda|\beta(\lambda)|^2=1,
	\quad\lambda\in \mathbb{R}\backslash\{0\}.
\end{equation}
	Moreover, the integral representations of ${\alpha}(\lambda)$ and ${\beta}(\lambda)$ can be derived from Eqs.\,\eqref{vo3} and \eqref{sg2} as expressed below:
	\begin{equation}\begin{aligned}
			{\alpha}(\lambda)=1+\int_{-\infty}^{+\infty}\Big[&\big(v_{x}+\frac{\mathrm{i}}{2}(u+|u|^2v-\lambda^{-1}v-\lambda^{-1}\bar{u} v^2)\big)\Phi_{-12}\\
			&+\big(\frac{\mathrm{i}}{2}\lambda^{-1}\bar{u}v-\frac{\mathrm{i}}{4}(|u|^{2}+|v|^{2})\big)\Phi_{-22}\Big](x,0;\lambda)\mathrm{d}x,\end{aligned}\end{equation} and
	\begin{equation}\begin{aligned}	{\beta}(\lambda)=\int_{-\infty}^{+\infty}\left[\mathrm{e}^{-2\mathrm{i}\Theta}\Big(\big(\frac{\mathrm{i}}{4}(|u|^{2}+|v|^{2})-\frac{\mathrm{i}}{2}\lambda^{-1}\bar{u} v\big)\Phi_{-12}
			+\big(\frac{\mathrm{i}}{2}\lambda^{-1}\bar{u}-\frac{\mathrm{i}}{2}\bar{v}\big)\Phi_{-22}\Big)\right](x,0;\lambda)\mathrm{d}x.
	\end{aligned}\end{equation}
Assuming $u(\cdot,0), v(\cdot,0)\in L^1(\mathbb{R})\cap L^{\infty}(\mathbb{R})$, and $v_{x}(\cdot,0)\in L^1(\mathbb{R})$, the functions ${\alpha}(\lambda)$ and ${\beta}(\lambda)$ are well-defined for $\lambda\in \mathbb{R}\backslash\{0\}$. Furthermore, ${\alpha}(\cdot)$ can be analytically continued onto $\mathbb{C}^{+}$.
	According to Eqs.\,\eqref{det2} and \eqref{sg2}, ${\alpha}(\lambda)$ and ${\beta}(\lambda)$ can also be expressed as Wronskians of $\Phi_{\pm}(x,t;\lambda)$. Specifically,
	\begin{equation}\label{wrons}
		\begin{aligned}
			&{\alpha}(\lambda)=\det\left(\Phi_{+1}(x,t;\lambda)\mathrm{e}^{\mathrm{i}\Theta(x,t;\lambda)},\Phi_{-2}(x,t;\lambda)\mathrm{e}^{-\mathrm{i}\Theta(x,t;\lambda)}\right),\quad\lambda\in \mathbb{C}^+\cup\mathbb{R}\backslash\{0\},\\
			&{\beta}(\lambda)=\det\left(\Phi_{-2}(x,t;\lambda)\mathrm{e}^{-\mathrm{i}\Theta(x,t;\lambda)},\Phi_{+2}(x,t;\lambda)\mathrm{e}^{\mathrm{i}\Theta(x,t;\lambda)}\right),\quad\lambda\in \mathbb{R}\backslash\{0\}.
	\end{aligned}\end{equation}

By substituting the Wentzel--Kramers--Brillouin expansions of the columns of the modified Jost solutions $\Phi_{\pm}(x,t;\lambda)$ into Eq.\,\eqref{par3} and collecting the terms of order $O(\lambda^j)$, we obtain the following asymptotic results:
\begin{prop}
	As $\lambda\to\infty$ within the corresponding domains of analyticity,
	\begin{equation}
		\lim_{\lambda\to\infty}\Phi_{\pm}(x,t;\lambda)=\left(\begin{array}{cc}\Phi_{\pm1}^{\infty}(x,t)&0\\0&\Phi_{\pm2}^{\infty}(x,t)\end{array}\right).
	\end{equation}
	Furthermore, the functions $\alpha(\lambda)$ and $\beta(\lambda)$ exhibit the following asymptotic behavior as $\lambda\to0$ and $\lambda\to\infty$ in the domains of their analyticity:
\begin{equation}\begin{aligned}
\lim_{\lambda\to0}\alpha(\lambda)&=\mathrm{e}^{\frac{\mathrm{i}}{4}\int_{\mathbb{R}}\left(|u|^2+|v|^2\right)\mathrm{d}y},\qquad\lim_{\lambda\to0}\beta(\lambda)=0,\\
\lim_{\lambda\to\infty}\alpha(\lambda)&=\mathrm{e}^{-\frac{\mathrm{i}}{4}\int_{\mathbb{R}}\left(|u|^2+|v|^2\right)\mathrm{d}y},\,\quad\lim_{\lambda\to\infty}\beta(\lambda)=0.
\end{aligned}\end{equation}	
	\end{prop}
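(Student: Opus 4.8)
The proposition packages two facts: the diagonal $\lambda\to\infty$ limit of the full matrix $\Phi_{\pm}$, and the values of $\alpha$ and $\beta$ at the two endpoints $\lambda=0,\infty$. The plan is to treat them in turn, using the column asymptotics of Theorem \ref{th4}, the connection limits \eqref{connect3}, and the Wronskian representations \eqref{wrons}. For $\lim_{\lambda\to\infty}\Phi_{\pm}$, I would insert the formal WKB series $\Phi_{\pm}(x,t;\lambda)=\sum_{k\ge0}\lambda^{-k}\Phi_{\pm}^{(k)}(x,t)$ into Eq.\,\eqref{par3}. Collecting the coefficient of $\lambda^{1}$ gives $[\sigma_3,\Phi_{\pm}^{(0)}]=\mathbf{0}$, so $\Phi_{\pm}^{(0)}$ is diagonal; collecting the coefficient of $\lambda^{0}$ and retaining the diagonal part (the term $[\sigma_3,\Phi_{\pm}^{(1)}]$ being purely off-diagonal) yields $\partial_x\Phi_{\pm}^{(0)}=\tfrac{\mathrm{i}}{4}(|u|^2+|v|^2)\sigma_3\Phi_{\pm}^{(0)}$. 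Integrating against $\Phi_{\pm}^{(0)}\to\mathbf{I}$ as $x\to\pm\infty$ produces exactly $\mathrm{diag}\bigl(\Phi_{\pm1}^{\infty}(x,t),\Phi_{\pm2}^{\infty}(x,t)\bigr)$; the passage from this formal series to a genuine limit is precisely what Eq.\,\eqref{lim2} of Theorem \ref{th4} records ($\Phi_{\pm1}/\Phi_{\pm1}^{\infty}\to\mathbf{e}_1$, $\Phi_{\pm2}/\Phi_{\pm2}^{\infty}\to\mathbf{e}_2$), and reading these column by column gives the first assertion.

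For $\alpha(\lambda)$ and $\beta(\lambda)$ I would start from \eqref{wrons} and note that the scalar factors $\mathrm{e}^{\pm\mathrm{i}\Theta}$ appear in conjugate pairs inside each $2\times2$ determinant, hence cancel: $\alpha(\lambda)=\det\bigl(\Phi_{+1},\Phi_{-2}\bigr)$ and $\beta(\lambda)=\det\bigl(\Phi_{-2},\Phi_{+2}\bigr)$, both independent of $x$ (and of $t$). Thus for any fixed $x$ one may pass $\lambda$ to its endpoint inside the determinant, using the pointwise-in-$x$ column asymptotics and continuity of $\det$. As $\lambda\to\infty$, Theorem \ref{th4} gives $\Phi_{+1}\to(\Phi_{+1}^{\infty},0)^{\mathrm{T}}$, $\Phi_{-2}\to(0,\Phi_{-2}^{\infty})^{\mathrm{T}}$, $\Phi_{+2}\to(0,\Phi_{+2}^{\infty})^{\mathrm{T}}$, so $\alpha(\lambda)\to\Phi_{+1}^{\infty}\Phi_{-2}^{\infty}$ while $\beta(\lambda)\to0$ (its two limiting columns are both multiples of $\mathbf{e}_2$); using $\int_{+\infty}^{x}-\int_{-\infty}^{x}=-\int_{\mathbb{R}}$ in the exponents turns $\Phi_{+1}^{\infty}(x,t)\Phi_{-2}^{\infty}(x,t)$ into $\mathrm{e}^{-\frac{\mathrm{i}}{4}\int_{\mathbb{R}}(|u|^2+|v|^2)\mathrm{d}y}$.

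As $\lambda\to0$ Theorem \ref{th4} no longer applies, so here I would feed in the connection limits \eqref{connect3}: $\Phi_{+1}\to\widetilde{\Phi}_{+1}^{0}(\mathbf{e}_1+v\mathbf{e}_2)$, and $\Phi_{-2},\Phi_{+2}\to\widetilde{\Phi}_{-2}^{0},\widetilde{\Phi}_{+2}^{0}$ each times $\bar{u}\mathbf{e}_1+(1+\bar{u}v)\mathbf{e}_2$. Then $\beta(\lambda)\to0$ again, since both limiting columns are proportional to the single vector $\bar{u}\mathbf{e}_1+(1+\bar{u}v)\mathbf{e}_2$, whereas for $\alpha$ the off-diagonal products cancel inside the $2\times2$ determinant and $\alpha(\lambda)\to\widetilde{\Phi}_{+1}^{0}\widetilde{\Phi}_{-2}^{0}=\mathrm{e}^{\frac{\mathrm{i}}{4}\int_{\mathbb{R}}(|u|^2+|v|^2)\mathrm{d}y}$. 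The one step requiring genuine care is exactly this $\lambda\to0$ regime: it cannot be obtained from the large-$\lambda$ Volterra analysis of Theorem \ref{th4} alone but must be assembled from Theorem \ref{3} through the change-of-variable identities \eqref{connect1}--\eqref{connect3}, and one must keep track of the precise cancellation $1+\bar{u}v-\bar{u}v=1$ in the determinant as well as the sign bookkeeping $\int_{+\infty}^{x}-\int_{-\infty}^{x}=-\int_{\mathbb{R}}$ in the exponents; everything else is routine order-by-order matching in Eq.\,\eqref{par3}.
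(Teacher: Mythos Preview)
Your argument is correct and matches the paper's indicated method. The paper does not spell out a proof here; it only says to substitute WKB expansions of the columns of $\Phi_{\pm}$ into Eq.\,\eqref{par3} and collect powers of $\lambda$, which is precisely what you do for the first display, and your use of the Wronskian representations \eqref{wrons} together with Theorem~\ref{th4} at $\lambda\to\infty$ and the connection limits \eqref{connect3} at $\lambda\to0$ is the natural way to read off the endpoint values of $\alpha$ and $\beta$.
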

Define the reflection coefficient as
	\begin{equation}
		\begin{aligned}
			{\gamma}(\lambda)=\frac{{\beta}(\lambda)}{{\alpha}(\lambda)},\quad\lambda\in\mathbb{R}\backslash\{0\},
		\end{aligned}
	\end{equation}
and denote $\frac{1}{\alpha(\lambda)}$ as the transmission coefficient. It is noteworthy that ${\gamma}(\lambda)\to0$ as $\lambda\to0$ or $\lambda\to\infty$.
	
	\subsection{Discrete spectrum}
	For a (vector-valued) function $\mathbf{f}(x,t;\lambda)$, let $\mathbf{f}^{(n)}(x,t;\lambda)$ denote the $n$-th partial derivative with respect to $\lambda$, i.e.,\,$\mathbf{f}^{(n)}(x,t;\lambda)=\partial_{\lambda}^n\mathbf{f}(x,t;\lambda)$. Moreover, $\mathbf{f}^{(n)}(x,t;\lambda_0)$ refers to the $n$-th derivative evaluated at $\lambda=\lambda_0$, thus $\mathbf{f}^{(n)}(x,t;\lambda_0)=\mathbf{f}^{(n)}(x,t;\lambda)|_{\lambda=\lambda_0}$. For simplicity, we sometimes omit the variables $x$ and $t$.
	\begin{prop}\label{b}
		Suppose that $\lambda_0$ is a zero of $\alpha(\lambda)$ with multiplicity $m+1$, there exist $m+1$ complex-valued constants $b_0,b_1,\ldots,b_m$ with $b_0\neq0$ such that, for every $n\in\{0,\ldots,m\}$, the following holds:
		\begin{equation}\frac{(\Phi_{-2}\mathrm{e}^{-\mathrm{i}\Theta})^{(n)}(x,t;\lambda_0)}{n!}=\sum_{\overset{j+k=n}{j,k\geqslant0}}\frac{b_j(\Phi_{+1}\mathrm{e}^{\mathrm{i}\Theta})^{(k)}(x,t;\lambda_0)}{j!k!}.\end{equation}
	\end{prop}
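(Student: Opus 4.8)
The plan is to build on the Wronskian representation $\alpha(\lambda)=\det\!\big(\Phi_{+1}\mathrm{e}^{\mathrm{i}\Theta},\,\Phi_{-2}\mathrm{e}^{-\mathrm{i}\Theta}\big)$ from \eqref{wrons}. Set $\mu_{+1}(x,t;\lambda):=\Phi_{+1}(x,t;\lambda)\mathrm{e}^{\mathrm{i}\Theta(x,t;\lambda)}$ and $\mu_{-2}(x,t;\lambda):=\Phi_{-2}(x,t;\lambda)\mathrm{e}^{-\mathrm{i}\Theta(x,t;\lambda)}$. The first task is to verify that these two columns solve one and the same linear ODE in $x$, namely $\partial_x\mu=\mathcal{L}(x,t;\lambda)\mu$ with $\mathcal{L}:=\frac{\mathrm{i}}{4}(\lambda-\lambda^{-1})\sigma_3+\mathbf{Q}_1+\lambda^{-1}\mathbf{Q}_2$: writing \eqref{par3} column by column, the first column of $\frac{\mathrm{i}}{4}(\lambda-\lambda^{-1})[\sigma_3,\Phi_\pm]$ equals $\frac{\mathrm{i}}{4}(\lambda-\lambda^{-1})(\sigma_3-\mathbf{I})\Phi_{\pm1}$ and the second equals $\frac{\mathrm{i}}{4}(\lambda-\lambda^{-1})(\sigma_3+\mathbf{I})\Phi_{\pm2}$; multiplying by $\mathrm{e}^{\mathrm{i}\Theta}$ (resp. $\mathrm{e}^{-\mathrm{i}\Theta}$) adds $+\frac{\mathrm{i}}{4}(\lambda-\lambda^{-1})\mathbf{I}$ (resp. $-\frac{\mathrm{i}}{4}(\lambda-\lambda^{-1})\mathbf{I}$) to the diagonal, and in both cases the sum collapses to $\frac{\mathrm{i}}{4}(\lambda-\lambda^{-1})\sigma_3$. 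Here $\mathcal{L}$ is traceless and analytic in $\lambda$ near $\lambda_0$ (note $\lambda_0\in\mathbb{C}^{+}$, in particular $\lambda_0\neq0$), and by Theorem \ref{th3} both $\mu_{+1}(x,t;\cdot)$ and $\mu_{-2}(x,t;\cdot)$ are analytic there. I would also record that $\mu_{+1}(x,t;\lambda_0)$ and $\mu_{-2}(x,t;\lambda_0)$ are nonvanishing in $x$: each solves a linear ODE, so a single zero would force it to vanish identically, contradicting the nonzero limits $\Phi_{+1}\to\mathbf{e}_1$ at $+\infty$ and $\Phi_{-2}\to\mathbf{e}_2$ at $-\infty$.

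The key device is an auxiliary solution that stays linearly independent from $\mu_{+1}$ uniformly near $\lambda_0$. Fix $x_\ast\in\mathbb{R}$ and a constant vector $\mathbf{c}$ with $\det\!\big(\mu_{+1}(x_\ast,t;\lambda_0),\mathbf{c}\big)\neq0$, which exists because $\mu_{+1}(x_\ast,t;\lambda_0)\neq\mathbf{0}$. Let $\nu(x,t;\lambda)$ be the solution of $\partial_x\nu=\mathcal{L}\nu$ with $\nu(x_\ast,t;\lambda)=\mathbf{c}$; it is analytic in $\lambda$ near $\lambda_0$, and its Wronskian $W(\lambda):=\det(\mu_{+1},\nu)$ is $x$-independent by Abel's Theorem (since $\mathcal{L}$ is traceless), analytic, and nonzero at $\lambda_0$, hence nonzero on a neighborhood of $\lambda_0$. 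On that neighborhood $\{\mu_{+1}(x,t;\lambda),\nu(x,t;\lambda)\}$ is a basis of $\mathbb{C}^2$ for each $(x,t)$, so Cramer's rule gives $\mu_{-2}=f(\lambda)\mu_{+1}+g(\lambda)\nu$ with $f=\det(\mu_{-2},\nu)/W$ and $g=\det(\mu_{+1},\mu_{-2})/W=\alpha/W$. Both $f$ and $g$ are analytic near $\lambda_0$ and, crucially, independent of $x$, because every determinant appearing here is a Wronskian of two solutions of the $x$-ODE. Since $\alpha$ has a zero of multiplicity $m+1$ at $\lambda_0$ while $W(\lambda_0)\neq0$, we get $g(\lambda)=O\big((\lambda-\lambda_0)^{m+1}\big)$.

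To finish, I would set $b_j:=f^{(j)}(\lambda_0)$ for $j=0,\dots,m$ and expand about $\lambda_0$: from $f(\lambda)=\sum_{j=0}^{m}\frac{b_j}{j!}(\lambda-\lambda_0)^j+O\big((\lambda-\lambda_0)^{m+1}\big)$ and $g(\lambda)\nu(x,t;\lambda)=O\big((\lambda-\lambda_0)^{m+1}\big)$ one obtains $\mu_{-2}(x,t;\lambda)=\Big(\sum_{j=0}^{m}\frac{b_j}{j!}(\lambda-\lambda_0)^j\Big)\mu_{+1}(x,t;\lambda)+O\big((\lambda-\lambda_0)^{m+1}\big)$ as $\lambda\to\lambda_0$. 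Comparing the coefficient of $(\lambda-\lambda_0)^n$ for each $n\in\{0,\dots,m\}$, which is a Cauchy product of two Taylor series, yields exactly the asserted identity $\frac{(\Phi_{-2}\mathrm{e}^{-\mathrm{i}\Theta})^{(n)}(x,t;\lambda_0)}{n!}=\sum_{j+k=n,\,j,k\geqslant0}\frac{b_j(\Phi_{+1}\mathrm{e}^{\mathrm{i}\Theta})^{(k)}(x,t;\lambda_0)}{j!k!}$. Finally $b_0\neq0$: the case $n=0$ reads $\mu_{-2}(x,t;\lambda_0)=b_0\mu_{+1}(x,t;\lambda_0)$, and $b_0=0$ would force $\mu_{-2}(\cdot,t;\lambda_0)\equiv\mathbf{0}$, contradicting the nonvanishing established above.

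I expect the main obstacle to be the two foundational points rather than the bookkeeping: checking that $\Phi_{+1}\mathrm{e}^{\mathrm{i}\Theta}$ and $\Phi_{-2}\mathrm{e}^{-\mathrm{i}\Theta}$ genuinely satisfy a \emph{common} $x$-equation with a $\lambda$-analytic, traceless coefficient matrix (the commutator structure in \eqref{par3} and the $\mathrm{e}^{\pm\mathrm{i}\Theta}$ conjugation have to be tracked carefully), and confirming that the scalar coefficients $f,g$, and hence the constants $b_j$, depend on $\lambda$ alone and not on $x$. Both reduce to the Wronskian/Abel identities already embodied in \eqref{wrons}. The conceptual point is that introducing $\nu$ funnels the entire $(m+1)$-fold vanishing of $\alpha$ into the single scalar $g$, upgrading the pointwise proportionality at $\lambda_0$ into a bona fide power-series identity; without this one is pushed toward a clumsier induction that keeps differentiating a relation valid only at $\lambda=\lambda_0$.
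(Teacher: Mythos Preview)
Your argument is correct. The paper does not give a self-contained proof here; it simply invokes the Wronskian representation \eqref{wrons} together with Proposition~2.4 of Ref.~\cite{AL}, so in spirit the approaches coincide: both rest on the fact that $\Phi_{+1}\mathrm{e}^{\mathrm{i}\Theta}$ and $\Phi_{-2}\mathrm{e}^{-\mathrm{i}\Theta}$ are solutions of the same traceless linear $x$-ODE and that $\alpha$ is their Wronskian. What you supply, and the paper delegates, is the actual mechanism for extracting the $b_j$'s. Your introduction of the auxiliary solution $\nu$ with $\lambda$-independent initial data is a clean way to obtain analytic, $x$-independent scalars $f(\lambda),g(\lambda)$ globally near $\lambda_0$, so that the entire $(m{+}1)$-fold zero of $\alpha$ sits in $g=\alpha/W$ and the desired identities drop out of a single Taylor comparison rather than an induction on $n$. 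The verifications you flag---that both columns satisfy $\partial_x\mu=\mathcal{L}\mu$ with $\mathcal{L}=\frac{\mathrm{i}}{4}(\lambda-\lambda^{-1})\sigma_3+\mathbf{Q}_1+\lambda^{-1}\mathbf{Q}_2$, and that the determinants involved are $x$-constant by Abel's Theorem---are exactly the points on which the argument hinges, and your check of $b_0\neq0$ via nonvanishing of $\mu_{-2}(\cdot,t;\lambda_0)$ is sound.
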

	\begin{proof} This proposition can be obtained from \eqref{wrons} and Proposition 2.4 of Ref.\cite{AL}.\end{proof}
	\begin{corollary}\label{co}
		Suppose that $\lambda_0$ is a zero of $\alpha(\lambda)$ with multiplicity $m+1$, thus for each $n\in\{0,\dots,m\}$,
		\begin{equation}\frac{\Phi_{-2}^{(n)}(x,t;\lambda_0)}{n!}=\sum_{\overset{j+k+l=n}{j,k,l\geqslant0}}\frac{b_j(\mathrm{e}^{2\mathrm{i}\Theta})^{(k)}(x,t;\lambda_0)\Phi_{+1}^{(l)}(x,t;\lambda_0)}{j!k!l!},\end{equation}
		where $b_0,b_1,\ldots,b_m$ are given in Proposition $\ref{b}$.
	\end{corollary}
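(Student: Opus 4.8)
The plan is to derive Corollary~\ref{co} directly from Proposition~\ref{b} by carrying out the differentiation that converts $\Phi_{-2}\mathrm{e}^{-\mathrm{i}\Theta}$ and $\Phi_{+1}\mathrm{e}^{\mathrm{i}\Theta}$ back into the ``bare'' Jost columns $\Phi_{-2}$ and $\Phi_{+1}$. First I would start from the identity in Proposition~\ref{b}, namely
\begin{equation}
\frac{(\Phi_{-2}\mathrm{e}^{-\mathrm{i}\Theta})^{(n)}(x,t;\lambda_0)}{n!}=\sum_{\overset{j+k=n}{j,k\geqslant0}}\frac{b_j(\Phi_{+1}\mathrm{e}^{\mathrm{i}\Theta})^{(k)}(x,t;\lambda_0)}{j!k!},
\end{equation}
and multiply both sides by $\mathrm{e}^{\mathrm{i}\Theta(x,t;\lambda)}$, viewed as a function of $\lambda$ near $\lambda_0$. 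The left-hand side becomes $\mathrm{e}^{\mathrm{i}\Theta}\cdot\frac{1}{n!}(\Phi_{-2}\mathrm{e}^{-\mathrm{i}\Theta})^{(n)}$; summing the Taylor-type relations over $n$ (or, equivalently, recognizing them as the coefficient identities of a single generating-function equality) and then applying the Leibniz rule to re-expand $(\Phi_{-2}\mathrm{e}^{-\mathrm{i}\Theta})^{(n)}=\sum_{p+q=n}\binom{n}{p}\Phi_{-2}^{(p)}(\mathrm{e}^{-\mathrm{i}\Theta})^{(q)}$ will let me isolate $\Phi_{-2}^{(n)}$.

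The cleanest way to organize the bookkeeping is to pass to formal power series in a local parameter $\varepsilon=\lambda-\lambda_0$. Writing $\widehat{\Phi}_{-2}(\varepsilon)=\sum_{n\geqslant0}\frac{\Phi_{-2}^{(n)}(\lambda_0)}{n!}\varepsilon^n$, $\widehat{\Phi}_{+1}(\varepsilon)=\sum_{n\geqslant0}\frac{\Phi_{+1}^{(n)}(\lambda_0)}{n!}\varepsilon^n$, $\widehat{E}_{\pm}(\varepsilon)=\sum_{n\geqslant0}\frac{(\mathrm{e}^{\pm\mathrm{i}\Theta})^{(n)}(\lambda_0)}{n!}\varepsilon^n$, and $B(\varepsilon)=\sum_{j=0}^{m}b_j\varepsilon^j$, Proposition~\ref{b} says precisely that
\begin{equation}
\widehat{\Phi}_{-2}(\varepsilon)\,\widehat{E}_{-}(\varepsilon)\equiv B(\varepsilon)\,\widehat{\Phi}_{+1}(\varepsilon)\,\widehat{E}_{+}(\varepsilon)\pmod{\varepsilon^{m+1}}.
\end{equation}
Since $\widehat{E}_{-}(\varepsilon)\widehat{E}_{+}(\varepsilon)=\widehat{E}_{+}(\varepsilon)\widehat{E}_{-}(\varepsilon)$ is the power series of $\mathrm{e}^{\mathrm{i}\Theta}\mathrm{e}^{-\mathrm{i}\Theta}=1$ (it is the constant series $1$, which is a unit), multiplying both sides by $\widehat{E}_{+}(\varepsilon)$ and using $\widehat{E}_{-}\widehat{E}_{+}=1$ gives
\begin{equation}
\widehat{\Phi}_{-2}(\varepsilon)\equiv B(\varepsilon)\,\widehat{\Phi}_{+1}(\varepsilon)\,\bigl(\widehat{E}_{+}(\varepsilon)\bigr)^{2}\pmod{\varepsilon^{m+1}},
\end{equation}
and $\bigl(\widehat{E}_{+}(\varepsilon)\bigr)^2$ is exactly the power series of $\mathrm{e}^{2\mathrm{i}\Theta}$, i.e.\ $\sum_{k\geqslant0}\frac{(\mathrm{e}^{2\mathrm{i}\Theta})^{(k)}(\lambda_0)}{k!}\varepsilon^k$. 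Extracting the coefficient of $\varepsilon^n$ on both sides, with the triple product $B\cdot\widehat{E}^2_{+}\cdot\widehat{\Phi}_{+1}$ producing the sum over $j+k+l=n$, yields the claimed formula
\begin{equation}
\frac{\Phi_{-2}^{(n)}(x,t;\lambda_0)}{n!}=\sum_{\overset{j+k+l=n}{j,k,l\geqslant0}}\frac{b_j(\mathrm{e}^{2\mathrm{i}\Theta})^{(k)}(x,t;\lambda_0)\,\Phi_{+1}^{(l)}(x,t;\lambda_0)}{j!k!l!}
\end{equation}
for every $n\in\{0,\dots,m\}$, with the same constants $b_0,\dots,b_m$.

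There is essentially no analytic obstacle here: all functions involved ($\Phi_{-2}$, $\Phi_{+1}$, $\mathrm{e}^{\pm\mathrm{i}\Theta}$) are holomorphic in $\lambda$ near $\lambda_0$ by Theorem~\ref{th3}, so truncated Taylor expansion and the Cauchy product of power series are legitimate, and the congruence mod $\varepsilon^{m+1}$ carries exactly the information in Proposition~\ref{b}. The only point requiring a small amount of care — and the one I would present most explicitly — is the identity $\widehat{E}_{-}(\varepsilon)\widehat{E}_{+}(\varepsilon)=1$ as formal series, i.e.\ that the Taylor coefficients of $\mathrm{e}^{-\mathrm{i}\Theta(\lambda)}$ and $\mathrm{e}^{\mathrm{i}\Theta(\lambda)}$ at $\lambda_0$ multiply (via the Cauchy product) to the constant series, which is immediate from $\mathrm{e}^{-\mathrm{i}\Theta}\mathrm{e}^{\mathrm{i}\Theta}\equiv1$ and the fact that the Taylor map is a ring homomorphism; and likewise $\bigl(\widehat{E}_{+}\bigr)^2$ being the Taylor series of $\mathrm{e}^{2\mathrm{i}\Theta}$. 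With these observations the corollary is a one-line manipulation of the generating functions, and I would keep the written proof to just that manipulation.
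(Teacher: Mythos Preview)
Your proof is correct and follows essentially the same route as the paper's: both start from Proposition~\ref{b}, multiply through by $\mathrm{e}^{\mathrm{i}\Theta}$, and then recognize $(\mathrm{e}^{\mathrm{i}\Theta})^2=\mathrm{e}^{2\mathrm{i}\Theta}$. The only difference is packaging---the paper writes out the Leibniz-rule index sums explicitly (summing over $r+s=n$, then $j+m=r$, then regrouping $h+s=k$), whereas you encode the same steps as multiplication of truncated Taylor series and the identities $\widehat{E}_-\widehat{E}_+=1$, $\widehat{E}_+^2=\widehat{(\mathrm{e}^{2\mathrm{i}\Theta})}$.
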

	\begin{proof}
		It follows from Proposition $\ref{b}$ that
		\begin{equation}\begin{aligned}
				\frac{\Phi_{-2}^{(n)}(\lambda_0)}{n!}&=\frac{(\Phi_{-2}\mathrm{e}^{-\mathrm{i}\Theta}\mathrm{e}^{\mathrm{i}\Theta})^{(n)}(\lambda_0)}{n!}=\sum_{\overset{r+s=n}{r,s\geqslant0}}\frac{(\Phi_{-2}\mathrm{e}^{-\mathrm{i}\Theta})^{(r)}(\lambda_0)(\mathrm{e}^{\mathrm{i}\Theta})^{(s)}(\lambda_0)}{r!s!}\\
				&=\sum_{\overset{r+s=n}{r,s\geqslant0}}\sum_{\overset{j+m=r}{j,m\geqslant0}}\frac{b_j(\Phi_{+1}\mathrm{e}^{\mathrm{i}\Theta})^{(m)}(\lambda_0)(\mathrm{e}^{\mathrm{i}\Theta})^{(s)}(\lambda_0)}{j!m!s!}\\
				&=\sum_{\overset{j+l+h+s=n}{j,l,h,s\geqslant0}}\frac{b_j{\Phi}^{(l)}_{+1}(\lambda_0)(\mathrm{e}^{\mathrm{i}\Theta})^{(h)}(\lambda_0)(\mathrm{e}^{\mathrm{i}\Theta})^{(s)}(\lambda_0)}{j!l!h!s!}\\
				&=\sum_{\overset{j+l+k=n}{j,l,k\geqslant0}}\sum_{\overset{h+s=k}{h,s\geqslant0}}\frac{b_j{\Phi}^{(l)}_{+1}(\lambda_0)}{j!l!}\frac{(\mathrm{e}^{\mathrm{i}\Theta})^{(h)}(\lambda_0)(\mathrm{e}^{\mathrm{i}\Theta})^{(s)}(\lambda_0)}{h!s!}\\
				&=\sum_{\overset{j+k+l=n}{j,k,l\geqslant0}}\frac{b_j(\mathrm{e}^{2\mathrm{i}\Theta})^{(k)}(\lambda_0){\Phi}^{(l)}_{+1}(\lambda_0)}{j!k!l!}.
		\end{aligned}\end{equation}
	\end{proof}
	\begin{prop}\label{mu}
		Suppose that $\lambda_0$ is a zero of $\alpha(\lambda)$ with multiplicity $m+1$, then $\bar{\lambda}_0$ is a zero of $\overline{\alpha(\bar{\lambda})}$ with multiplicity $m+1$. For each $n\in\{0,\dots,m\}$,
		\begin{equation}\frac{\Phi_{-1}^{(n)}(x,t;\bar{\lambda}_0)}{n!}=\sum_{\overset{j+k+l=n}{j,k,l\geqslant0}}-\frac{\bar{b}_j(\mathrm{e}^{-2\mathrm{i}\Theta(x,t;\lambda)})^{(k)}|_{\lambda=\bar{\lambda}_0}\big(\lambda\Phi_{+2}(x,t;\lambda)\big)^{(l)}|_{\lambda=\bar{\lambda}_0}}{j!k!l!},\end{equation}
		where $b_0,b_1,\ldots,b_m$ are given in Proposition $\ref{b}$.
	\end{prop}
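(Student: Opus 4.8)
The plan is to derive Proposition~\ref{mu} from Corollary~\ref{co} by applying the symmetry relations for the Jost solutions, in close parallel to how Proposition~\ref{b} yields Corollary~\ref{co}. First I would use the symmetry relation \eqref{phi}, specifically $\Phi_{-1}(x,t;\lambda)=\mathrm{i}\lambda^{\frac12}\mathbf{T}(x,t;\lambda)\sigma_2\mathbf{T}^{-1}(x,t;\lambda)\overline{\Phi_{-2}(x,t;\bar{\lambda})}$, together with its counterpart for $\Phi_{+2}$ obtained by conjugating the first line of \eqref{phi}. Since $\alpha(\lambda)=\overline{\mathbf{S}_{22}(\bar\lambda)}$ from \eqref{sgl} and $\alpha$ is analytic in $\mathbb{C}^+$, the function $\overline{\alpha(\bar\lambda)}$ is analytic in $\mathbb{C}^-$ and vanishes at $\bar\lambda_0$ to exactly the same order $m+1$; this gives the first assertion immediately. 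For the expansion, I would conjugate the identity in Corollary~\ref{co} evaluated in a neighborhood of $\lambda_0$, then substitute $\lambda\mapsto\bar\lambda$ so that the conjugated $\Phi_{-2}(x,t;\bar\lambda)$ and $\Phi_{+1}(x,t;\bar\lambda)$ appear, and finally convert these into $\Phi_{-1}$ and $\Phi_{+2}$ using \eqref{phi}.

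The key computational step is tracking how the prefactor $\mathrm{i}\lambda^{\frac12}\mathbf{T}\sigma_2\mathbf{T}^{-1}$ interacts with the Taylor coefficients. Writing $\overline{\Phi_{-2}(x,t;\bar\lambda)} = -\mathrm{i}\lambda^{-\frac12}\mathbf{T}^{-1}(x,t;\lambda)\sigma_2\mathbf{T}(x,t;\lambda)\,\Phi_{-1}(x,t;\lambda)$ (inverting \eqref{phi}) and similarly $\overline{\Phi_{+1}(x,t;\bar\lambda)}$ in terms of $\Phi_{+2}(x,t;\lambda)$, I expect the $\mathbf{T}$-conjugation factors on the two sides to cancel, because the Wronskian structure underlying Corollary~\ref{co} is conjugation-invariant; what remains is a net power of $\lambda$ — precisely the factor $\lambda$ that appears attached to $\Phi_{+2}$ in the statement. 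The phase $\mathrm{e}^{2\mathrm{i}\Theta}$ conjugates to $\mathrm{e}^{-2\mathrm{i}\overline{\Theta}}$, and since $\Theta(x,t;\lambda)$ has real coefficients in $\lambda,\lambda^{-1}$, one has $\overline{\Theta(x,t;\bar\lambda)}=\Theta(x,t;\lambda)$, so after the substitution $\lambda\mapsto\bar\lambda$ this becomes $\mathrm{e}^{-2\mathrm{i}\Theta(x,t;\lambda)}$ evaluated at $\bar\lambda_0$, matching the statement. The constants $b_j$ conjugate to $\bar b_j$, and the overall sign $-1$ comes from the product of the two $\mp\mathrm{i}$ prefactors in \eqref{phi}.

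The main obstacle I anticipate is the bookkeeping of the $\lambda^{\frac12}$ branch factors and the noncommutative matrix factors $\mathbf{T}(x,t;\lambda)$, which depend on $x,t$ through $v(x,t)$ and hence do not commute past the differentiation in $\lambda$ in a naive way — one must be careful that the $\partial_\lambda^{(n)}$ acts on the full product and use the Leibniz rule consistently, verifying that the $\mathbf{T}$ factors genuinely telescope rather than generating extra lower-order terms. A clean way to sidestep this is to work at the level of the scalar Wronskian relations \eqref{wrons} and the eigenfunction relations before introducing $\mathbf{T}$: i.e., go back to $\phi_{\pm}$, use the cleaner symmetry \eqref{schw} $\phi_{\pm1}=\mathrm{i}\sigma_2\overline{\phi_{\pm2}(\cdot;\bar\zeta)}$, establish the analogue of Corollary~\ref{co} for $\phi$, conjugate there, and only then push forward through the transformation \eqref{m3}. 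Either route reduces the proof to routine Leibniz-rule manipulation plus the already-established symmetry dictionary, so I would present it concisely as "conjugate the identity of Corollary~\ref{co}, apply \eqref{phi}, and simplify," spelling out only the cancellation of the $\mathbf{T}$-factors and the emergence of the single power of $\lambda$.
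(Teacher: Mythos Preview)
Your proposal is correct and follows essentially the same route as the paper's proof: apply the symmetry \eqref{phi} to write $\Phi_{-1}(\lambda)=\mathrm{i}\lambda^{1/2}\mathbf{T}\sigma_2\mathbf{T}^{-1}\overline{\Phi_{-2}(\bar\lambda)}$, insert the conjugated Corollary~\ref{co}, rewrite $\overline{\Phi_{+1}(\bar\lambda)}$ via the companion symmetry, and recombine with the Leibniz rule. The cancellation you anticipate is exactly what the paper uses, and it is immediate once you note the algebraic identity $(\lambda^{1/2}\mathbf{T}\sigma_2\mathbf{T}^{-1})(\lambda^{1/2}\mathbf{T}\sigma_2\mathbf{T}^{-1})=\lambda\,\mathbf{I}$ (since $\sigma_2^2=\mathbf{I}$), so the two prefactors together with $\Phi_{+2}$ recombine under Leibniz into $(\lambda\Phi_{+2})^{(l)}$ with the overall sign $-1$ coming from $\mathrm{i}\cdot\mathrm{i}$.
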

	\begin{proof}
		It follows from  Corollary $\ref{co}$ that
		\begin{equation}\begin{aligned}
				&\frac{\Phi_{-1}^{(n)}(\bar{\lambda}_0)}{n!}=\frac{\big(\mathrm{i}\lambda^{\frac12}\mathbf{T}(\lambda)\sigma_2\mathbf{T}^{-1}(\lambda)\overline{\Phi_{-2}(\bar{\lambda})}\big)^{(n)}|_{\lambda=\bar{\lambda}_0}}{n!}\\
				=&\sum_{\overset{r+s=n}{r,s\geqslant0}}\frac{\mathrm{i}\big(\lambda^{\frac12}\mathbf{T}(\lambda)\sigma_2\mathbf{T}^{-1}(\lambda)\big)^{(r)}|_{\lambda=\bar{\lambda}_0}\big(\overline{\Phi_{-2}(\bar{\lambda})}\big)^{(s)}|_{\lambda=\bar{\lambda}_0}}{r!s!}\\
				=&\sum_{\overset{r+s=n}{r,s\geqslant0}}\sum_{\overset{j+k+m=s}{j,k,m\geqslant0}}\frac{\mathrm{i}\big(\lambda^{\frac12}\mathbf{T}(\lambda)\sigma_2\mathbf{T}^{-1}(\lambda)\big)^{(r)}|_{\lambda=\bar{\lambda}_0}\bar{b}_j(\mathrm{e}^{-2\mathrm{i}\Theta(\lambda)})^{(k)}|_{\lambda=\bar{\lambda}_0}\big(\overline{\Phi_{+1}(\bar{\lambda})}\big)^{(m)}|_{\lambda=\bar{\lambda}_0}}{r!j!k!m!}\\
				=&\sum_{\overset{r+j+k+m=n}{r,j,k,m\geqslant0}}\frac{\mathrm{i}\big(\lambda^{\frac12}\mathbf{T}(\lambda)\sigma_2\mathbf{T}^{-1}(\lambda)\big)^{(r)}|_{\lambda=\bar{\lambda}_0}\bar{b}_j(\mathrm{e}^{-2\mathrm{i}\Theta(\lambda)})^{(k)}|_{\lambda=\bar{\lambda}_0}\big(\mathrm{i}\lambda^{\frac12}\mathbf{T}(\lambda)\sigma_2\mathbf{T}^{-1}(\lambda)\Phi_{+2}(\lambda)\big)^{(m)}|_{\lambda=\bar{\lambda}_0}}{r!j!k!m!}\\
				=&\sum_{\overset{j+k+l=n}{j,k,l\geqslant0}}\sum_{\overset{r+m=l}{r,m\geqslant0}}-\frac{\big(\lambda^{\frac12}\mathbf{T}(\lambda)\sigma_2\mathbf{T}^{-1}(\lambda)\big)^{(r)}|_{\lambda=\bar{\lambda}_0}\big(\lambda^{\frac12}\mathbf{T}(\lambda)\sigma_2\mathbf{T}^{-1}(\lambda)\Phi_{+2}(\lambda)\big)^{(m)}|_{\lambda=\bar{\lambda}_0}}{r!m!}\frac{\bar{b}_j(\mathrm{e}^{-2\mathrm{i}\Theta(\lambda)})^{(k)}|_{\lambda=\bar{\lambda}_0}}{j!k!}\\
				=&\sum_{\overset{j+k+l=n}{j,k,l\geqslant0}}-\frac{\bar{b}_j(\mathrm{e}^{-2\mathrm{i}\Theta(\lambda)})^{(k)}|_{\lambda=\bar{\lambda}_0}\big(\lambda\Phi_{+2}(\lambda)\big)^{(l)}|_{\lambda=\bar{\lambda}_0}}{j!k!l!}.
		\end{aligned}\end{equation}
	\end{proof}
	Given that $\lambda_0$ is a zero of ${\alpha}(\lambda)$ with multiplicity $m+1$. The Laurent series expansion of the transmission coefficient  $\frac{1}{{\alpha}(\lambda)}$ around $\lambda=\lambda_0$ can be derived as follows:
	\begin{equation}\frac1{{\alpha}(\lambda)}=\frac{\alpha_{-m-1}}{(\lambda-\lambda_0)^{m+1}}+\frac{\alpha_{-m}}{(\lambda-\lambda_0)^m}+\cdots+\frac{\alpha_{-1}}{\lambda-\lambda_0}+O(1),\quad\lambda\to\lambda_0,\end{equation}
	where $\alpha_{-m-1}\neq0$ and $
	\alpha_{-n-1}=\frac{\widetilde{\alpha}^{(m-n)}(\lambda_0)}{(m-n)!}$, $\widetilde{\alpha}(\lambda)=\frac{(\lambda-\lambda_0)^{m+1}}{\alpha(\lambda)}, n=0,\ldots,m.$ Combining with Corollary $\ref{co}$, we obtain that for each $n\in\{0,\ldots,m\},$
	\begin{equation}\underset{\lambda_0}{\operatorname*{Res}}\frac{(\lambda-\lambda_0)^n \Phi_{-2}(x,t;\lambda)}{\alpha(\lambda)}=\sum_{\overset{j+k+l+s=m-n}{j,k,l,s\geqslant0}}\frac{\widetilde{\alpha}^{(j)}(\lambda_0)b_k(\mathrm{e}^{2\mathrm{i}\Theta})^{(l)}(x,t;\lambda_0)\Phi_{+1}^{(s)}(x,t;\lambda_0)}{j!k!l!s!}.\end{equation}
	To simplify the above residue, we introduce a polynomial of degree at most $m$:
	\begin{equation}\label{f}f_0(\lambda)=\sum_{h=0}^m\sum_{\substack{j+k=h\\j,k\geqslant0}}\frac{\widetilde{\alpha}^{(j)}(\lambda_0)b_k}{j!k!}(\lambda-\lambda_0)^h,\end{equation}
	with $f_0(\lambda_0)\neq0$. Therefore,
	\begin{equation}\label{resd}
		\begin{aligned}
			\underset{\lambda_0}{\operatorname*{Res}}\frac{(\lambda-\lambda_0)^n\Phi_{-2}(x,t;\lambda)}{\alpha(\lambda)}=&\sum_{\overset{h+l+s=m-n}{h,l,s\geqslant0}}\frac{f_0^{(h)}(\lambda_0)(\mathrm{e}^{2\mathrm{i}\Theta})^{(l)}(x,t;\lambda_0)\Phi_{+1}^{(s)}(x,t;\lambda_0)}{h!l!s!}\\
			=&\frac{[f_0(\lambda)\mathrm{e}^{2\mathrm{i}\Theta(x,t;\lambda)}\Phi_{+1}(x,t;\lambda)]^{(m-n)}|_{\lambda=\lambda_0}}{(m-n)!}.
		\end{aligned}
	\end{equation}
	We denote $f_0(\lambda_0),\dots,f_0^{(m)}(\lambda_0)$ as the residue constants associated with the discrete spectrum $\lambda_0$.
	Given that $\bar{\lambda}_0$ is a zero of $\overline{\alpha(\bar{\lambda})}$ with multiplicity $m+1$, and we get the results according to Proposition $\ref{mu}$,
	\begin{equation}\label{res2}
		\begin{aligned}
			&\underset{\bar{\lambda}_0}{\operatorname*{Res}}\frac{(\lambda-\bar{\lambda}_0)^n\Phi_{-1}(x,t;\lambda)}{\overline{\alpha(\bar{\lambda})}}\\
			=&\sum_{\overset{j+k+l+s=m-n}{j,k,l,s\geqslant0}}-\frac{\big(\overline{\widetilde{\alpha}(\bar{\lambda})}\big)^{(j)}|_{\lambda=\bar{\lambda}_0}\bar{b}_k\big(\mathrm{e}^{-2\mathrm{i}\Theta(x,t;\lambda)}\big)^{(l)}|_{\lambda=\bar{\lambda}_0}\big(\lambda\Phi_{+2}(x,t;\lambda)\big)^{(s)}|_{\lambda=\bar{\lambda}_0}}{j!k!l!s!}\\
			=&\sum_{\overset{h+l+s=m-n}{h,l,s\geqslant0}}-\frac{\big(\overline{f_0(\bar{\lambda})}\big)^{(h)}|_{\lambda=\bar{\lambda}_0}\big(\mathrm{e}^{-2\mathrm{i}\Theta(x,t;\lambda)}\big)^{(l)}|_{\lambda=\bar{\lambda}_0}\big(\lambda\Phi_{+2}(x,t;\lambda)\big)^{(s)}|_{\lambda=\bar{\lambda}_0}}{h!l!s!}\\
			=&-\frac{[\lambda\overline{f_0(\bar{\lambda})} \mathrm{e}^{-2\mathrm{i}\Theta(x,t;\lambda)}\Phi_{+2}(x,t;\lambda)]^{(m-n)}|_{\lambda=\bar{\lambda}_0}}{(m-n)!}.
		\end{aligned}
	\end{equation}
	\begin{assume}\label{alpha}
		Given that $\alpha(\lambda)$ possesses $N$ distinct zeros $\lambda_1,\ldots,\lambda_N\in\mathbb{C}^+$, each zero has a multiplicity of $m_1+1,\ldots,m_N+1$, respectively. None of these zeros lie on the real axis.
	\end{assume}
	\begin{prop}\label{res}
		Assuming $\lambda_1,\ldots,\lambda_N$ are as stated in Assumption $\ref{alpha}$, there exists a unique polynomial $f(\lambda)$ of degree less than $\mathcal{N}=\sum_{k=1}^{N}(m_k+1)$ with $f(\lambda_k)\neq0$ such that, for $k=1,\ldots,N$ and $n_k=0,\ldots,m_k$, the following residue conditions are satisfied:
		\begin{alignat}{2}
			&\underset{\lambda_k}{\operatorname*{Res}}\frac{(\lambda-\lambda_k)^{n_k}\Phi_{-2}(x,t;\lambda)}{\alpha(\lambda)}=\frac{[f(\lambda)\mathrm{e}^{2\mathrm{i}\Theta(x,t;\lambda)}\Phi_{+1}(x,t;\lambda)]^{(m_k-n_k)}|_{\lambda=\lambda_k}}{(m_k-n_k)!},\label{prop:res}\\
			&\underset{\bar{\lambda}_k}{\operatorname*{Res}}\frac{(\lambda-\bar{\lambda}_k)^{n_k}\Phi_{-1}(x,t;\lambda)}{\overline{\alpha(\bar{\lambda})}}=-\frac{[\lambda\overline{f(\bar{\lambda})}\mathrm{e}^{-2\mathrm{i}\Theta(x,t;\lambda)}\Phi_{+2}(x,t;\lambda)]^{(m_k-n_k)}|_{\lambda=\bar{\lambda}_k}}{(m_k-n_k)!}.\label{prop:res2}
		\end{alignat}
	\end{prop}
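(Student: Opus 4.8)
The plan is to obtain the $N$-pole residue conditions from the single-zero identities \eqref{resd} and \eqref{res2} already derived above; the one genuinely new task is to package the $N$ locally defined polynomials into a single polynomial of the stated degree by Hermite interpolation.

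First I would run the one-zero construction of Proposition \ref{b}, Corollary \ref{co} and the computation preceding \eqref{resd} at each $\lambda_k$ in turn. Since, by Assumption \ref{alpha}, $\lambda_k$ is a zero of $\alpha(\lambda)$ of multiplicity exactly $m_k+1$, this produces constants $b^{(k)}_0,\dots,b^{(k)}_{m_k}$ with $b^{(k)}_0\neq 0$, together with a polynomial $f_k$ of degree at most $m_k$ --- the analogue of \eqref{f} based at $\lambda_k$, built from the Laurent coefficients of $1/\alpha(\lambda)$ at $\lambda_k$ --- such that $f_k(\lambda_k)=\widetilde{\alpha}_k(\lambda_k)\,b^{(k)}_0\neq 0$, where $\widetilde{\alpha}_k(\lambda)=(\lambda-\lambda_k)^{m_k+1}/\alpha(\lambda)$ is holomorphic and nonvanishing at $\lambda_k$ (the zero being exactly of order $m_k+1$), and such that \eqref{resd} and, via Proposition \ref{mu}, \eqref{res2} hold verbatim with $\lambda_0,m,f_0$ replaced by $\lambda_k,m_k,f_k$.

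The second step is the key structural observation: the right-hand side of \eqref{prop:res} at the pole $\lambda_k$ involves $f$ only through the values $f(\lambda_k),f'(\lambda_k),\dots,f^{(m_k-n_k)}(\lambda_k)$, hence only through the Taylor jet of $f$ at $\lambda_k$ of order at most $m_k$; and likewise \eqref{prop:res2} at $\bar{\lambda}_k$ involves $\overline{f(\bar{\lambda})}$ only through its jet of order at most $m_k$ at $\bar{\lambda}_k$. It therefore suffices to exhibit one polynomial $f$ of degree less than $\mathcal{N}$ satisfying $f^{(j)}(\lambda_k)=f_k^{(j)}(\lambda_k)$ for $0\le j\le m_k$ and $1\le k\le N$, since for such an $f$ each identity in \eqref{prop:res} and \eqref{prop:res2} reduces termwise to the corresponding single-pole identity from the first step. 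Prescribing these jets is exactly Hermite (osculatory) interpolation at the distinct nodes $\lambda_1,\dots,\lambda_N$: the number of linear conditions is $\sum_{k=1}^N(m_k+1)=\mathcal{N}$, which equals the dimension of the space of polynomials of degree less than $\mathcal{N}$, and the associated confluent Vandermonde matrix is nonsingular because the $\lambda_k$ are distinct. Hence $f$ exists and is the unique polynomial of degree less than $\mathcal{N}$ with these jets, and $f(\lambda_k)=f_k(\lambda_k)\neq 0$. Note that \eqref{prop:res2} imposes no additional constraint on $f$: since $\lambda_k\notin\mathbb{R}$, it is merely the $\lambda\mapsto\bar{\lambda}$ symmetric image of \eqref{prop:res} obtained through Proposition \ref{mu}.

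To finish I would establish the uniqueness clause by reading the residue conditions backwards: with $n_k=m_k$, \eqref{prop:res} reads $\operatorname{Res}_{\lambda_k}(\lambda-\lambda_k)^{m_k}\Phi_{-2}(x,t;\lambda)/\alpha(\lambda)=f(\lambda_k)\,\mathrm{e}^{2\mathrm{i}\Theta(x,t;\lambda_k)}\Phi_{+1}(x,t;\lambda_k)$, and since $\mathrm{e}^{2\mathrm{i}\Theta}\neq 0$ and $\Phi_{+1}(x,t;\lambda_k)$ is a nonzero vector (the columns of $\Phi_+$ are linearly independent, as $\det\Phi_+=1$ by \eqref{det2}), this pins down $f(\lambda_k)$; decreasing $n_k$ to $m_k-1,\dots,0$ then pins down $f'(\lambda_k),\dots,f^{(m_k)}(\lambda_k)$, so the residue data fixes the full Hermite jet and hence $f$ among polynomials of degree less than $\mathcal{N}$. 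I do not expect any analytic difficulty here, since \eqref{resd}, \eqref{res2} and Proposition \ref{mu} already do the heavy lifting; the main (modest) obstacle is the combinatorial bookkeeping --- verifying that the count $\mathcal{N}=\sum_{k=1}^N(m_k+1)$ matches the interpolation dimension so that Hermite interpolation is both solvable and unique, and confirming that the right-hand sides of \eqref{prop:res} and \eqref{prop:res2} genuinely depend on $f$ only through the advertised low-order jets.
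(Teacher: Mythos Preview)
Your proposal is correct and follows essentially the same approach as the paper: obtain a local polynomial $f_k$ at each zero $\lambda_k$ from the single-zero residue computation \eqref{resd}--\eqref{res2}, then glue these via Hermite interpolation into a single polynomial $f$ of degree less than $\mathcal{N}$. Your treatment is in fact somewhat more thorough than the paper's, which asserts uniqueness directly from the Hermite interpolation formula without your closing paragraph's argument that the residue conditions themselves determine the jets $f^{(j)}(\lambda_k)$; this extra step is a genuine (if easy) improvement, since the proposition claims uniqueness among polynomials satisfying \eqref{prop:res}--\eqref{prop:res2}, not merely among Hermite interpolants of prescribed data.
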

	\begin{proof}Similar to Eqs.\,\eqref{resd} and \eqref{res2}, for each $k\in\{1,\dots,N\}$, there exists a polynomial $f_k(\lambda)$ of degree at most $m_k$ with $f_k(\lambda_k)\neq0$ such that
			\begin{alignat}{2}\label{res3}
				&\underset{\lambda_k}{\operatorname*{Res}}\frac{(\lambda-\lambda_k)^n\Phi_{-2}(x,t;\lambda)}{{\alpha}(\lambda)}
				=\frac{[f_k(\lambda)\mathrm{e}^{2\mathrm{i}\Theta(x,t;\lambda)}\Phi_{+1}(x,t;\lambda)]^{(m_k-n_k)}|_{\lambda=\lambda_k}}{(m_k-n_k)!},\\
				&\underset{\bar{\lambda}_k}{\operatorname*{Res}}\frac{(\lambda-\bar{\lambda}_k)^{n_k}\Phi_{-1}(x,t;\lambda)}{\overline{\alpha(\bar{\lambda})}}=-\frac{[\lambda\overline{f_k(\bar{\lambda})}\mathrm{e}^{-2\mathrm{i}\Theta(x,t;\lambda)}\Phi_{+2}(x,t;\lambda)]^{(m_k-n_k)}|_{\lambda=\bar{\lambda}_k}}{(m_k-n_k)!},
		\end{alignat}
		where $n_k=0,\dots,m_k$. By the Hermite interpolation formula, there exists a unique polynomial
		$f(\lambda)$ of degree less than $\mathcal{N}$ such that
		\begin{equation}\label{h}\begin{cases}f^{(n_1)}(\lambda_1)=f^{(n_1)}_1(\lambda_1),&\quad n_1=0,\ldots,m_1,\\\quad\quad\quad\quad\vdots\\f^{(n_N)}(\lambda_N)=f^{(n_N)}_N(\lambda_N),&\quad n_N=0,\ldots,m_N.\end{cases}\end{equation}
		This completes the proof of Eqs.\,\eqref{prop:res} and \eqref{prop:res2}.
	\end{proof}
	\begin{remark}
		Similar to Eq.\,\eqref{f}, the polynomial $f(\lambda)$ presented in Proposition $\ref{res}$ is uniquely determined by $\alpha(\lambda),\{\lambda_k,m_k\}_{k=1}^N$ and a set of constants $\{b_{k,0},\dots,b_{k,m_k}\}_{k=1}^N.$ Assume that $g(\lambda)$ is analytic at $\lambda_1, \dots\lambda_N$,  the substitution of $f(\lambda)$ for $f(\lambda)+g(\lambda)\prod_{k=1}^{N}(\lambda-\lambda_{k})^{m_{k}+1}$ still preserves the validity of Eqs.\,\eqref{h}. Consequently, Proposition $\ref{res}$ can be revised to state that ``there exists a function ${f}(\lambda)$ that is analytic and nonzero at $\lambda_1,\dots,\lambda_N$, satisfying Eqs.\,\eqref{prop:res} and \eqref{prop:res2}''.
	\end{remark}
	
	\section{Inverse problem}\label{sec:inverse}
	In Section $\ref{sec:dir}$, we have established the direct scattering map:
	\begin{equation}\mathcal{D}:\{u(x,0), v(x,0)\}\mapsto\left\{{\gamma}(\lambda),\left\{\lambda_k,m_k,\left\{f^{(n_k)}(\lambda_k)\right\}_{n_k=0}^{m_k}\right\}_{k=1}^N\right\}.\end{equation}
	In the following, we will now turn our attention to the inverse scattering map:
	\begin{equation}\mathcal{I}:\left\{{\gamma}(\lambda),\left\{\lambda_k,m_k,\left\{f^{(n_k)}(\lambda_k)\right\}_{n_k=0}^{m_k}\right\}_{k=1}^N\right\}\mapsto \{u(x,t),v(x,t)\},\end{equation}
	which can be characterised by a $2\times2$ matrix RH problem.
	\subsection{Riemann--Hilbert problem}
	Consider a piecewise meromorphic function $\widetilde{\mathbf{M}}(x,t;\lambda)$ defined by
	\begin{equation}
		\begin{aligned}
			&\widetilde{\mathbf{M}}(x,t;\lambda)=\left(\Phi_{+1}(x,t;\lambda),\frac{\Phi_{-2}(x,t;\lambda)}{\alpha(\lambda)}\right),\quad\lambda\in\mathbb{C}^+,\\
			&\widetilde{\mathbf{M}}(x,t;\lambda)=\left(\frac{\Phi_{-1}(x,t;\lambda)}{\overline{\alpha(\bar{\lambda})}},\Phi_{+2}(x,t;\lambda)\right),\quad\lambda\in\mathbb{C}^-.
		\end{aligned}
	\end{equation}
	The following behavior of $\mathbf{M}(x,t;\lambda)$ holds as $\lambda\to\infty$ within its analytic domains,
	\begin{equation}
		\begin{aligned}
			\widetilde{\mathbf{M}}(x,t;\lambda)\to\left(\begin{array}{cc}\Phi_{+1}^{\infty}(x,t)&0\\0&\Phi_{+2}^{\infty}(x,t)\end{array}\right)=\widetilde{\mathbf{M}}^\infty(x,t).
		\end{aligned}
	\end{equation}
	To normalize the boundary conditions, we define
	\begin{equation}
		\mathbf{M}(x,t;\lambda)=\left(\widetilde{\mathbf{M}}^\infty(x,t)\right)^{-1}\widetilde{\mathbf{M}}(x,t;\lambda),\quad \lambda\in\mathbb{C}\backslash\{0\}.
	\end{equation}
	From this, we can formulate a $2\times 2$ matrix RH problem.
	\begin{problem}\label{pro}
		Find a $2\times2$ matrix-valued function ${\mathbf{M}}(x,t;\lambda)$ that satisfies the following properties:
		\begin{itemize}
			\item \textbf{Analyticity:} $\mathbf{M}(x,t;\lambda)$ is analytic in $\lambda$ for $\lambda\in\mathbb{C}\backslash\left(\mathbb{R}\cup\{\lambda_1,\ldots,\lambda_N,\bar{\lambda}_1,\ldots,\bar{\lambda}_N\}\right)$;
			\item \textbf{Residues:} For each $k=1,\ldots,N$, $\mathbf{M}(x,t;\lambda)$ has multiple-poles of order $m_k+1$  at $\lambda=\lambda_k$ and $\lambda=\bar{\lambda}_k$, with residues satisfy the following conditions:
\begin{alignat}{2}\label{res4}
					\underset{\lambda_k}{\operatorname*{Res}}(\lambda-\lambda_k)^{n_k}\mathbf{M}(x,t;\lambda)&=\left(\mathbf{0},\frac{[f(\lambda)\mathrm{e}^{2\mathrm{i}\Theta(x,t;\lambda)}\mathbf{M}_1(x,t;\lambda)]^{(m_k-n_k)}|_{\lambda=\lambda_k}}{(m_k-n_k)!}\right),\\
				\underset{\bar{\lambda}_k}{\operatorname*{Res}} (\lambda-\bar{\lambda}_k)^{n_k}\mathbf{M}(x,t;\lambda)&=\left(-\frac{[\lambda\overline{f(\bar{\lambda})}\mathrm{e}^{-2\mathrm{i}\Theta(x,t;\lambda)}\mathbf{M}_2(x,t;\lambda)]^{(m_k-n_k)}|_{\lambda=\bar{\lambda}_k}}{(m_k-n_k)!},\mathbf{0}\right),
\end{alignat}

			for $n_k=0,\ldots,m_k$;
			\item \textbf{Jump:} $\mathbf{M}(x,t;\lambda)$ exhibits a jump across the oriented contour $\mathbb{R}$ given by:
			\begin{equation}
				\mathbf{M}_{+}(x,t;\lambda)=\mathbf{M}_{-}(x,t;\lambda){\mathbf{J}}(x,t;\lambda),\quad\lambda\in\mathbb{R}\backslash\{0\},
			\end{equation}
			where $\mathbf{M}_{\pm}(x,t;\lambda)=\lim\limits_{\epsilon\downarrow0}\mathbf{M}(x,t;\lambda\pm\mathrm{i}\epsilon),$ and the jump matrix $\mathbf{J}(x,t;\lambda)$ is defined as:
			\begin{equation}\label{jump}\begin{aligned}
					\mathbf{J}(x,t;\lambda) =\left(\begin{array}{cc}1&\gamma(\lambda)\mathrm{e}^{2\mathrm{i}\Theta(x,t;\lambda)}\\\lambda\overline{\gamma(\bar{\lambda})}\mathrm{e}^{-2\mathrm{i}\Theta(x,t;\lambda)}&1+\lambda|\gamma(\lambda)|^2\end{array}\right);
			\end{aligned}\end{equation}
			\item \textbf{Normalization:} As $\lambda\in\mathbb{C}\backslash{\mathbb{R}}\to\infty$, the matrix $\mathbf{M}(x,t;\lambda)$ has the following asymptotic behavior:
			\begin{equation}\label{o}\mathbf{M}(x,t;\lambda)=\mathbf{I}+O(\lambda^{-1}).\end{equation}
		\end{itemize}
	\end{problem}
	For each $k$, let $D_k$ be a small disk centered at $\lambda_k$ with a sufficiently small radius such that it lies in $\mathbb{C}^+$ and is disjoint from all other disks as well as from $\{\hat{D}_k\}_{k=1}^N$, where $\hat{D}_k=\{\lambda\in\mathbb{C}^-|\bar{\lambda}\in D_k\}$. The boundaries $\{\partial{D}_k\}_{k=1}^N$ are oriented in a clockwise direction, while the boundaries $\{\partial\hat{D}_k\}_{k=1}^N$ are oriented in a counterclockwise direction, as illustrated in Figure \ref{fig1}.
	 \begin{figure}[!htb]\centering
	\begin{tikzpicture}[>=stealth, scale=0.5]
		\draw[->, fill=black] (5,0)--(26,0) node[right]{${Re} $};
		\fill[gray!50] (16.5,2) -- (18,2) arc (0:360:1.5) -- cycle;
		\draw[->](16.5,0.5) arc[start angle=270, end angle=90, radius=1.5];
		\draw[->] (18,2) arc[start angle=360, end angle=270, radius=1.5];
		\draw(18,2)arc(0:91:1.5);
		\node at (16.5,2) {$\cdot$};
		\node at  (16.5,1.5) {$\small\small{\lambda_k}$};
		\node at (17.3,2.4) {$\tiny{D_k}$};
		
				\fill[gray!50] (16.5,-2) -- (18,-2) arc (0:360:1.5) -- cycle;
			\draw[->](16.5,-0.5) arc[start angle=90, end angle=270, radius=1.5];
			\draw[->] (18,-2) arc[start angle=0, end angle=90, radius=1.5];
			\draw(16.5,-3.5)arc[start angle=270, end angle=360, radius=1.5];
			\node at (16.5,-2) {$\cdot$};
			\node at  (16.5,-2.6) {$\small\small{\bar{\lambda}_k}$};
			\node at (17.3,-1.6) {$\tiny{\hat D_k}$};
		
		\fill[gray!50] (9,3) -- (11,3) arc (0:360:2) -- cycle;
		\draw[->] (9,1) arc[start angle=270, end angle=90, radius=2];
		\draw[->] (11,3) arc[start angle=360, end angle=270, radius=2];
		\draw(11,3)arc(0:91:2);
		\node at (9,3) {$\cdot$};
		\node at  (9,2.5) {$\tiny{\lambda_1}$};
		\node at (10,3.6) {$\tiny{D_1}$};
		
		\fill[gray!50] (9,-3) -- (11,-3) arc (0:360:2) -- cycle;
		\draw[->] (9,-1) arc[start angle=90, end angle=270, radius=2];
		\draw(9,-5) arc[start angle=270, end angle=360, radius=2];
		\draw[->] (11,-3) arc[start angle=0, end angle=90, radius=2];
		\node at (9,-3) {$\cdot$};
		\node at  (9,-3.6) {$\tiny{\bar{\lambda}_1}$};
		\node at (10,-2.4) {$\tiny{\hat D_1}$};
		
		\fill[gray!50] (23,4) -- (24.8,4) arc (0:360:1.8) -- cycle;
		\draw[->] (23,2.2) arc[start angle=270, end angle=90, radius=1.8];
		\draw[->] (24.8,4) arc[start angle=360, end angle=270, radius=1.8];
		\draw(24.8,4)arc(0:91:1.8);
		\node at (23,4) {$\cdot$};
		\node at (23,3.5) {$\tiny{\lambda_N}$};
		\node at (24,4.5) {$\tiny{D_N}$};
		
			\fill[gray!50] (23,-4) -- (24.8,-4) arc (0:360:1.8) -- cycle;
			\draw[->] (23,-2.2) arc[start angle=90, end angle=270, radius=1.8];
			\draw[->] (24.8,-4) arc[start angle=0, end angle=90, radius=1.8];
			\draw (23,-5.8) arc[start angle=270, end angle=360, radius=1.8];
			\node at (23,-4) {$\cdot$};
			\node at (23,-4.6) {$\tiny{\bar{\lambda}_N}$};
			\node at (24,-3.5) {$\tiny{\hat D_N}$};
		\node at (13,-3) {$\cdots$};
		\node at (13,3) {$\cdots$};
		\node at (19.5,3) {$\cdots$};
		\node at (19.5,-3) {$\cdots$};
	\end{tikzpicture}
	\caption{ The regions $\{D_k\}_{k=1}^N$ and $\{\hat{D}_k\}_{k=1}^N$ (shaded).}
	\label{fig1}	
	\end{figure}
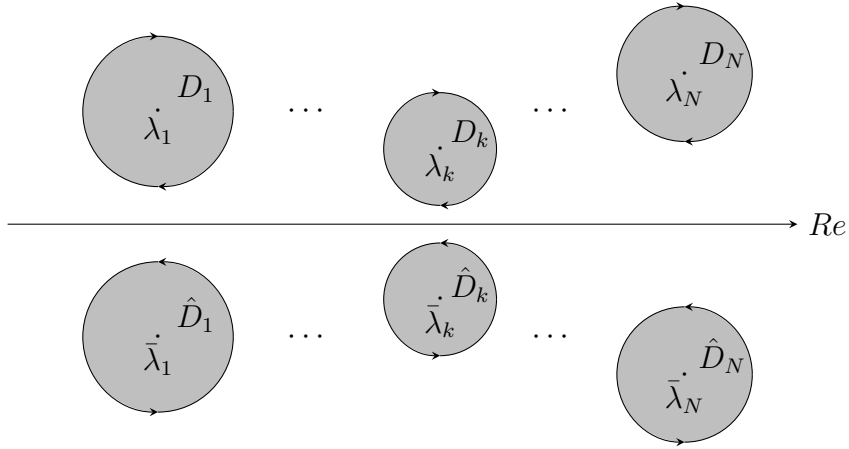
			
	We define a new matrix $\hat{\mathbf{M}}(x,t;\lambda)$ in terms of $\mathbf{M}(x,t;\lambda)$ in RH problem \ref{pro} by
	\begin{equation}\label{hatM}
		\hat{\mathbf{M}}(x,t;\lambda)=\begin{cases}
			\mathbf{M}(x,t;\lambda)\mathbf{P}_k(x,t;\lambda),\quad &\lambda\in D_k,\quad k=1,\ldots,N,\\
		\mathbf{M}(x,t;\lambda)\hat{\mathbf{P}}_k(x,t;\lambda),\quad &\lambda\in \hat{D}_k,\quad k=1,\ldots,N,\\
			\mathbf{M}(x,t;\lambda),\quad &\text{otherwise},
		\end{cases}
	\end{equation}where
	\begin{equation}\label{p}
		\mathbf{P}_k(x,t;\lambda)=\begin{pmatrix}
			1&-\frac{f(\lambda)\mathrm{e}^{2\mathrm{i}\Theta(x,t;\lambda)}}{(\lambda-\lambda_k)^{m_k+1}}\\
			0&1
		\end{pmatrix},\quad
		\hat{\mathbf{P}}_k(x,t;\lambda)=\begin{pmatrix}
			1&0\\
			\frac{\lambda \overline{f(\bar{\lambda})}\mathrm{e}^{-2\mathrm{i}\Theta(x,t;\lambda)}}{(\lambda-\bar{\lambda}_k)^{m_k+1}}&1
		\end{pmatrix}.
	\end{equation}
	It can be observed that the matrix $\mathbf{M}(x,t;\lambda)\mathbf{P}_k(x,t;\lambda)$ exhibits a removeable singularity at $\lambda_k$, while the matrix $\mathbf{M}(x,t;\lambda)\hat{\mathbf{P}}_k(x,t;\lambda)$ also demonstrates a removeable singularity at $\bar\lambda_k$. Indeed,
	\begin{equation}
		\begin{split}
			&\underset{\lambda_k}{\res} (\lambda-\lambda_k)^{n_k}\hat{\mathbf{M}}(x,t;\lambda)\\
			=&\underset{\lambda_k}{\res}(\lambda-\lambda_k)^{n_k}\left(\mathbf{M}_1(x,t;\lambda), \mathbf{M}_2(x,t;\lambda)-\frac{f(\lambda)\mathrm{e}^{2\mathrm{i}\Theta(x,t;\lambda)}}{(\lambda-\lambda_k)^{m_k+1}}\mathbf{M}_1(x,t;\lambda)\right)\\
			=&\left(\mathbf{0},\underset{\lambda_k}{\res}(\lambda-\lambda_k)^{n_k}\mathbf{M}_2(x,t;\lambda)-\underset{\lambda_k}{\res}\frac{f(\lambda)\mathrm{e}^{2\mathrm{i}\Theta(x,t;\lambda)}\mathbf{M}_1(x,t;\lambda)}{(\lambda-\lambda_k)^{m_k-n_k+1}}\right),
		\end{split}
	\end{equation}
	 this assertion can be seen by considering the residue condition \eqref{res4} and analyzing the Taylor series expansion of $f(\lambda)\mathrm{e}^{2\mathrm{i}\Theta(x,t;\lambda)}\mathbf{M}_1(x,t;\lambda)$ at $\lambda_k$. Consequently, for each $k$ and any $n_k$ satisfying $0\leqslant n_k\leqslant m_k$, we have $ \underset{\lambda_k}{\res} (\lambda-\lambda_k)^{n_k}\hat{\mathbf{M}}(x,t;\lambda)=\mathbf{0}$. By analogous reasoning, it can also be shown that the matrix $\mathbf{M}(x,t;\lambda)\hat{\mathbf{P}}_k(x,t;\lambda)$  has a removable singularity
	at $\bar{\lambda}_k$. Given the definition \eqref{hatM} of $\hat{\mathbf{M}}(x,t;\lambda)$, we conclude that all points in the set $\{\lambda_k,\bar\lambda_k\}_{k=1}^N$ are removable singularities of $\hat{\mathbf{M}}(x,t;\lambda)$.
	Therefore, we affirm that $\hat{\mathbf{M}}(x,t;\lambda)$ fulfills the conditions of an equivalent RH problem that is closely related
	to RH problem \ref{pro}, but with the residue conditions being substituted by jump conditions across small circles centered at the points $\{\lambda_k,\bar\lambda_k\}_{k=1}^N$.
	\begin{problem}\label{pro2}
		Find a $2\times2$ matrix-valued function $\hat{\mathbf{M}}(x,t;\lambda)$ that satisfies the following properties:
		\begin{itemize}
			\item \textbf{Analyticity:} $\hat{\mathbf{M}}(x,t;\lambda)$ is analytic in $\lambda$ for $\lambda\in\mathbb{C}\backslash\Sigma$, where $\Sigma=\mathbb{R}\cup\{\partial D_{ k},\partial\hat{ D}_{ k}\}_{k=1}^N$;
			\item \textbf{Jump:} The matrix $\hat{\mathbf{M}}(x,t;\lambda)$ takes continuous boundary values $\hat{\mathbf{M}}_{\pm}(x,t;\lambda)$ on $\mathbb{R}$ from $\mathbb{C}^{\pm}$, as well as from the left and right on the boundaries $\{\partial{D}_k\}_{k=1}^N$ and $\{\partial\hat{D}_k\}_{k=1}^N$, where the ``$\pm$" subscripts indicate boundary values on $\Sigma$. The boundary values are related by specific jump conditions,
			\begin{equation}\label{tMjump}
				\hat{\mathbf{M}}_{+}(\lambda;t)=\hat{\mathbf{M}}_{-}(x,t;\lambda)\hat{\mathbf{J}}(x,t;\lambda),\quad \lambda\in\Sigma,
			\end{equation}
			where
			\begin{equation}\label{eq:tjump}
				\hat{\mathbf{J}}(x,t;\lambda)=\begin{cases}\mathbf{J}(x,t;\lambda),\quad &\lambda\in\mathbb{R},\\
					\mathbf{P}_k^{-1}(x,t;\lambda),\quad &\lambda\in\partial D_k,\quad k=1,\ldots,N,\\
					\hat{\mathbf{P}}_k(x,t;\lambda),\quad &\lambda\in\partial \hat{D}_{k},\quad k=1,\ldots,N,
				\end{cases}
			\end{equation}
with $\mathbf{J}(x,t;\lambda)$ is defined in Eq.\,\eqref{jump}, $\mathbf{P}_k(x,t;\lambda)$ and $\hat{\mathbf{P}}_k(x,t;\lambda)$ are defined in Eq.\,\eqref{p};
			\item \textbf{Normalizations:} $\hat{\mathbf{M}}(x,t;\lambda)$ has the following asymptotic behavior:
			\begin{equation}\label{1}\hat{\mathbf{M}}(x,t;\lambda)=\mathbf{I}+O(\lambda^{-1}),\quad\lambda\rightarrow\infty.\end{equation}
		\end{itemize}
	\end{problem}
	The solution of the pole-free RH problem \ref{pro2} exists uniquely if and only if the following lemma is satisfied (see Theorem 9.3 in Ref.\cite{Zhou1989}).		
\begin{lemma}{(Vanishing Lemma)}\label{vanish}
	The RH problem \ref{pro2} has only the zero solution by replacing the asymptotic condition \eqref{1} with
	\begin{alignat*}{2}
		\hat{\mathbf{M}}(x,t;\lambda)=O(\lambda^{-1}), \quad\lambda\rightarrow\infty.
	\end{alignat*}
\end{lemma}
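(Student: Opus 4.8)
The plan is the classical conjugation-and-Liouville argument for vanishing lemmas, the only nontrivial input being the correct conjugating matrix, which is dictated by the symmetry of the jump data. Fix $x,t$ (suppressed below) and suppose $\hat{\mathbf{M}}(\lambda)$ solves RH problem \ref{pro2} with $\hat{\mathbf{M}}(\lambda)=O(\lambda^{-1})$ as $\lambda\to\infty$. Set $\mathbf{D}(\lambda)=\mathrm{diag}(1,\lambda)$ and define
\[
\mathbf{F}(\lambda)=\hat{\mathbf{M}}(\lambda)\,\mathbf{D}(\lambda)\,\hat{\mathbf{M}}(\bar{\lambda})^{\dagger},
\]
which is analytic on $\mathbb{C}\backslash\Sigma$ because $\overline{\partial D_k}=\partial\hat{D}_k$, so that $\hat{\mathbf{M}}(\bar{\lambda})^{\dagger}$ has the same singular set as $\hat{\mathbf{M}}(\lambda)$. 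First I would prove that $\mathbf{F}$ continues to an entire function. Across $\mathbb{R}$ the two boundary values of $\mathbf{F}$ are $\hat{\mathbf{M}}_-\mathbf{J}\mathbf{D}\hat{\mathbf{M}}_-^{\dagger}$ and $\hat{\mathbf{M}}_-\mathbf{D}\mathbf{J}^{\dagger}\hat{\mathbf{M}}_-^{\dagger}$, so the jump vanishes once one verifies the identity $\mathbf{J}(\lambda)\mathbf{D}(\lambda)=\mathbf{D}(\lambda)\mathbf{J}(\lambda)^{\dagger}$ for $\lambda\in\mathbb{R}\backslash\{0\}$; this is immediate from the explicit form \eqref{jump}, using that $\lambda$ and $\Theta(x,t;\lambda)$ are real there and $\overline{\gamma(\bar{\lambda})}=\overline{\gamma(\lambda)}$, and it shows in particular that $\mathbf{J}\mathbf{D}$ is Hermitian. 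Across the circles $\partial D_k$ and $\partial\hat{D}_k$ the jump of $\mathbf{F}$ vanishes once one verifies $\mathbf{P}_k(\lambda)\mathbf{D}(\lambda)\hat{\mathbf{P}}_k(\bar{\lambda})^{\dagger}=\mathbf{D}(\lambda)$ and $\hat{\mathbf{P}}_k(\lambda)\mathbf{D}(\lambda)\mathbf{P}_k(\bar{\lambda})^{\dagger}=\mathbf{D}(\lambda)$; these follow by a short computation from the triangular forms \eqref{p} together with the reflection symmetry $\overline{\Theta(x,t;\bar{\lambda})}=\Theta(x,t;\lambda)$, which identifies $\hat{\mathbf{P}}_k(\bar{\lambda})^{\dagger}$ with $\mathbf{D}(\lambda)^{-1}\mathbf{P}_k(\lambda)^{-1}\mathbf{D}(\lambda)$. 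Finally, since $\gamma(\lambda)\to0$ as $\lambda\to0$ and $\hat{\mathbf{M}}$ is bounded at the origin, $\mathbf{F}$ has a removable singularity at $\lambda=0$; hence $\mathbf{F}$ is entire.

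Next, $\hat{\mathbf{M}}(\lambda)=O(\lambda^{-1})$ and $\mathbf{D}(\lambda)=O(\lambda)$ give $\mathbf{F}(\lambda)=O(\lambda^{-1})\to0$ as $\lambda\to\infty$, so $\mathbf{F}\equiv\mathbf{0}$ by Liouville's theorem. Taking the boundary value of this identity on the positive real axis and using $\hat{\mathbf{M}}_+=\hat{\mathbf{M}}_-\mathbf{J}$ yields $\hat{\mathbf{M}}_-(\lambda)\big(\mathbf{J}(\lambda)\mathbf{D}(\lambda)\big)\hat{\mathbf{M}}_-(\lambda)^{\dagger}=\mathbf{0}$ for $\lambda>0$. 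For such $\lambda$ the Hermitian matrix $\mathbf{J}\mathbf{D}$ is positive definite: its $(1,1)$-entry equals $1$ and $\det(\mathbf{J}\mathbf{D})=\det\mathbf{J}\cdot\det\mathbf{D}=\lambda>0$ (with $\det\mathbf{J}=1$). Consequently $\hat{\mathbf{M}}_-(\lambda)=\mathbf{0}$ for all $\lambda>0$. Since $\hat{\mathbf{M}}$ is analytic in a one-sided neighbourhood of $(0,\infty)$ in $\mathbb{C}^-$ and continuous up to the boundary, the Schwarz reflection principle continues it analytically across $(0,\infty)$; vanishing on that interval forces $\hat{\mathbf{M}}\equiv\mathbf{0}$ on the connected component of $\mathbb{C}^-\backslash\bigcup_k\overline{\hat{D}_k}$ adjacent to $\mathbb{R}$, the jump conditions on the circles $\partial\hat{D}_k$ propagate the vanishing into the disks, and finally $\hat{\mathbf{M}}_+=\hat{\mathbf{M}}_-\mathbf{J}=\mathbf{0}$ on $\mathbb{R}$ yields $\hat{\mathbf{M}}\equiv\mathbf{0}$ on $\mathbb{C}^+$ as well. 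Thus $\hat{\mathbf{M}}\equiv\mathbf{0}$.

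I expect the crux to be the first step: establishing that $\mathbf{F}$ is entire. The cancellation of the jump along $\mathbb{R}$ is routine, but the cancellation along the auxiliary circles $\partial D_k,\partial\hat{D}_k$ — which is exactly what allows one to work with the pole-free problem \ref{pro2} rather than the meromorphic problem \ref{pro} — rests on the precise interplay between $\mathbf{D}$, the triangular matrices $\mathbf{P}_k,\hat{\mathbf{P}}_k$, the interpolation polynomial $f$, and the $\Theta$-symmetry; moreover, since $\mathbf{D}(0)=\mathrm{diag}(1,0)$ is singular one must separately argue regularity of $\hat{\mathbf{M}}$ (and hence of $\mathbf{F}$) at $\lambda=0$. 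Once $\mathbf{F}$ is known to be entire, the Liouville step, the positivity of $\mathbf{J}\mathbf{D}$ on the positive axis, and the reflection argument are all routine.
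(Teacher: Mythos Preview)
Your proof is correct and follows essentially the same conjugation--Liouville--positivity scheme as the paper's. The only difference is your choice of conjugating diagonal $\mathbf{D}(\lambda)=\mathrm{diag}(1,\lambda)$ in place of the paper's $\mathrm{diag}(\lambda^{-1/2},\lambda^{1/2})$; since these differ only by the scalar factor $\lambda^{-1/2}$ the two arguments are equivalent, and your version has the pleasant side effect of keeping the auxiliary function single-valued without any square-root branch-cut bookkeeping.
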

\begin{proof}
	Let \begin{equation}
		\mathcal{M}(x,t;\lambda)=\hat{\mathbf{M}}(x,t;\lambda)\left(\begin{array}{cc}\lambda^{-\frac12}&0\\0&\lambda^{\frac12}\end{array}\right)\hat{\mathbf{M}}^\dag(x,t;\bar\lambda),
	\end{equation} where the notation ``\dag'' denotes the  conjugate transpose. The function
	$\mathcal{M}(x,t;\lambda)$ is analytic for $\lambda\in\mathbb{C}\backslash\Sigma$ and is continuous up to the boundary $\Sigma$. To analyze the jump of $\mathcal{M}(x,t;\lambda)$ across $\Sigma$, we start by noting that  \begin{equation}
		\mathcal{M}_{+}(x,t;\lambda)=\hat{\mathbf{M}}_{+}(x,t;\lambda)\left(\begin{array}{cc}\lambda^{-\frac12}&0\\0&\lambda^{\frac12}\end{array}\right)\hat{\mathbf{M}}_{-}^\dag(x,t;\bar\lambda).
	\end{equation}
It is important to observe that if $\lambda\rightarrow\partial D_k$ from the left (``$+$"  side), the complex conjugate $\bar\lambda\rightarrow\partial\hat D_k$
	from the right (``$-$"  side). By applying the jump conditions across $\Sigma$ as given in Eq.\,\eqref{tMjump}, we have,
	\begin{equation}\label{eq:Hp}\begin{aligned}
		\mathcal{M}_{+}(x,t;\lambda)&=\hat{\mathbf{M}}_{+}(x,t;\lambda)\left(\begin{array}{cc}\lambda^{-\frac12}&0\\0&\lambda^{\frac12}\end{array}\right)\hat{\mathbf{M}}_{-}^\dag(x,t;\bar\lambda)\\
		&=\hat{\mathbf{M}}_{-}(x,t;\lambda)\hat{\mathbf{J}}(x,t;\lambda)\left(\begin{array}{cc}\lambda^{-\frac12}&0\\0&\lambda^{\frac12}\end{array}\right)\hat{\mathbf{M}}_{-}^\dag(x,t;\bar{\lambda}).
	\end{aligned}\end{equation}
Similarly, we can express $\mathcal{M}_{-}(x,t;\lambda)$:
\begin{equation}\label{eq:Hp2}\begin{aligned}
		\mathcal{M}_{-}(x,t;\lambda)&=\hat{\mathbf{M}}_{-}(x,t;\lambda)\left(\begin{array}{cc}\lambda^{-\frac12}&0\\0&\lambda^{\frac12}\end{array}\right)\hat{\mathbf{M}}_{+}^\dag(x,t;\bar\lambda)\\
		&=\hat{\mathbf{M}}_{-}(x,t;\lambda)\left(\begin{array}{cc}\lambda^{-\frac12}&0\\0&\lambda^{\frac12}\end{array}\right)\hat{\mathbf{J}}^\dag(x,t;\bar\lambda)\hat{\mathbf{M}}_{-}^\dag(x,t;\bar\lambda).
\end{aligned}\end{equation}
	Next, utilizing the relationship
	\begin{equation}
		\hat{\mathbf{J}}(x,t;\lambda)\left(\begin{array}{cc}\lambda^{-\frac12}&0\\0&\lambda^{\frac12}\end{array}\right)=\left(\begin{array}{cc}\lambda^{-\frac12}&0\\0&\lambda^{\frac12}\end{array}\right)\hat{\mathbf{J}}^\dag(x,t;\bar\lambda),\quad\lambda\in\Sigma,
	\end{equation}
	we find $\mathcal{M}_{+}(x,t;\lambda)=\mathcal{M}_{-}(x,t;\lambda)$, which imply $\mathcal{M}(x,t;\lambda)$ is continuous in the whole complex $\lambda$-plane. It then follows by Morera's Theorem that $\mathcal{M}(x,t;\lambda)$ is an entire function of $\lambda$.
	Furthermore, since $\lim_{\lambda\rightarrow\infty}\mathcal{M}(x,t;\lambda)=\mathbf{0}$,
	Liouville's Theorem allows us to conclude that
	\begin{equation}
		\mathcal{M}(x,t;\lambda)\equiv\mathbf{0},\quad\lambda\in\mathbb{C}.
	\end{equation}
	
In the following, we investigate the value of $\hat{\mathbf{M}}(x,t;\lambda)$. As $\lambda \in \mathbb{R}^+$,	
	\begin{equation}\begin{aligned}
		\hat{\mathbf{J}}(x,t;\lambda)\left(\begin{array}{cc}\lambda^{-\frac12}&0\\0&\lambda^{\frac12}\end{array}\right)
		=\left(\begin{array}{cc}\lambda^{-\frac14}&0\\0&\lambda^{\frac14}\end{array}\right)\widetilde{\mathbf{J}}(x,t;\lambda)\left(\begin{array}{cc}\lambda^{-\frac14}&0\\0&\lambda^{\frac14}\end{array}\right),
	\end{aligned}\end{equation}
where
\begin{equation}
	\widetilde{\mathbf{J}}(x,t;\lambda)=\left(\begin{array}{cc}1&\lambda^{\frac12}\gamma(\lambda)\mathrm{e}^{2\mathrm{i}\Theta(\lambda)}\\\lambda^{\frac12}\overline{\gamma(\bar{\lambda})}\mathrm{e}^{-2\mathrm{i}\Theta(\lambda)}&1+\lambda|\gamma(\lambda)|^2\end{array}\right).
\end{equation}
Subsequently, we find that
\begin{equation}\label{M}\begin{aligned}
		\mathcal{M}_{+}(x,t;\lambda)&=\left(\hat{\mathbf{M}}_{-}(x,t;\lambda)\left(\begin{array}{cc}\lambda^{-\frac14}&0\\0&\lambda^{\frac14}\end{array}\right)\right)\widetilde{\mathbf{J}}(x,t;\lambda)\left(\hat{\mathbf{M}}_{-}(x,t;\lambda)\left(\begin{array}{cc}\lambda^{-\frac14}&0\\0&\lambda^{\frac14}\end{array}\right)\right)^\dag\\
		&=\mathbf{0}.
\end{aligned}\end{equation}	
From this equation, we conclude that since
	$\widetilde{\mathbf{J}}(x,t;\lambda)$ is positive definite for $\lambda\in\mathbb{R}^+$, it follows that  $\hat{\mathbf{M}}_{-}(x,t;\lambda)=\mathbf{0}$ for $\lambda\in\mathbb{R}^+$. Applying the jump condition leads us to conclude that $\hat{\mathbf{M}}_{+}(x,t;\lambda)=\mathbf{0}$ for $\lambda\in\mathbb{R}^+$. Therefore, we can assert that $\hat{\mathbf{M}}(x,t;\lambda)\equiv\mathbf{0}$ for $\lambda\in\mathbb{R}^+$.
Based on the uniqueness theorem for analytic functions, we have $\hat{\mathbf{M}}(x,t;\lambda)=\mathbf 0$ holds for all $\lambda \in\mathbb{C}\backslash\{\{D_{k},\hat{ D}_{k}\}_{k=1}^N\cup\mathbb{R}^-\}$. Specifically, since $\hat{\mathbf{M}}_{+}(x,t;\lambda)=\mathbf {0}$ on $\{\partial D_k\}_{k=1}^N$, and by applying the jump condition on $\{D_{k}\}_{k=1}^N$, we deduce that $\hat{\mathbf{M}}(x,t;\lambda)=\mathbf{0}$ for $\lambda$ within the interior of $\{D_{ k}\}_{k=1}^N$. Thus, $\hat{\mathbf{M}}(x,t;\lambda)=\mathbf{0}$ for $\lambda\in \{D_k\}_{k=1}^N$. Similarly, we establish that $\hat{\mathbf{M}}(x,t;\lambda)=\mathbf{0}$ for $\lambda\in \{\hat{D}_k\}_{k=1}^N\cup\mathbb{R}^-$. In conclusion, we have demonstrated that $\hat{\mathbf{M}}(x,t;\lambda)=\mathbf{0}$ throughout the entire complex plane, as desired.
\end{proof}
	
	\subsection{Reconstruction of potential}
	\begin{theorem}\label{reconstruction}(Reconstruction formula) Let $\hat{\mathbf{M}}(x,t; \lambda)$ be the solution to RH problem $\ref{pro2}$. Then the solution $u(x,t)$ and $v(x,t)$ of MTM \eqref{mtm} can be reconstructed as follows:
		\begin{equation}\label{solution}
			u(x,t)=\overline{\lim_{\lambda\to0}\hat{\mathbf{M}}(x,t;\lambda)_{12}},\qquad v(x,t)=\lim_{\lambda\to0}\hat{\mathbf{M}}(x,t;\lambda)_{21}.
		\end{equation}
	\end{theorem}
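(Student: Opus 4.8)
The plan is to read off $u$ and $v$ from the value of $\hat{\mathbf{M}}$ at the origin, using the connection formulas \eqref{connect3} between the two pairs of normalized Jost solutions. First I would localise near $\lambda=0$: since every disk $D_k$ lies in $\mathbb{C}^{+}$, every $\hat{D}_k$ in $\mathbb{C}^{-}$, and the eigenvalues $\lambda_k,\bar\lambda_k$ avoid the real axis, the origin is separated from $\bigcup_{k=1}^{N}(D_k\cup\hat{D}_k)$, so in a punctured neighbourhood of $\lambda=0$ one has $\hat{\mathbf{M}}(x,t;\lambda)=\mathbf{M}(x,t;\lambda)=\bigl(\widetilde{\mathbf{M}}^{\infty}(x,t)\bigr)^{-1}\widetilde{\mathbf{M}}(x,t;\lambda)$, and the residue data at the multiple poles are irrelevant here. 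By the unique solvability of RH problem \ref{pro2} (Lemma \ref{vanish}), this $\mathbf{M}$ coincides with the matrix assembled from the Jost solutions $\Phi_{\pm}$ in Section \ref{sec:dir}, whose analyticity and bounded boundary values at $\lambda=0$ are supplied by Theorem \ref{th3}.

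Next I would pass to the limit $\lambda\to 0$ from within $\mathbb{C}^{+}$; the one-sided limits from $\mathbb{C}^{\pm}$ agree because $\gamma(0)=0$ forces the jump matrix to tend to $\mathbf{I}$ at the origin. From \eqref{connect3},
\[
\Phi_{+1}(x,t;\lambda)\longrightarrow\widetilde{\Phi}_{+1}^{0}(x,t)\bigl(\mathbf{e}_1+v(x,t)\mathbf{e}_2\bigr),\qquad \frac{\Phi_{-2}(x,t;\lambda)}{\alpha(\lambda)}\longrightarrow\frac{\widetilde{\Phi}_{-2}^{0}(x,t)}{\alpha(0)}\bigl(\overline{u(x,t)}\,\mathbf{e}_1+(1+\overline{u(x,t)}\,v(x,t))\mathbf{e}_2\bigr),
\]
where $\alpha(0)=\lim_{\lambda\to0}\alpha(\lambda)=\mathrm{e}^{\frac{\mathrm{i}}{4}\int_{\mathbb{R}}(|u|^{2}+|v|^{2})\,\mathrm{d}y}$. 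Setting $E(x,t)=\mathrm{e}^{\frac{\mathrm{i}}{4}\int_{+\infty}^{x}(|u|^{2}+|v|^{2})\,\mathrm{d}y}$, the scalar prefactors collapse: $\Phi_{+1}^{\infty}=E$, $\Phi_{+2}^{\infty}=E^{-1}$, $\widetilde{\Phi}_{+1}^{0}=E^{-1}$, and, splitting $\int_{\mathbb{R}}=\int_{-\infty}^{x}+\int_{x}^{+\infty}$, $\widetilde{\Phi}_{-2}^{0}/\alpha(0)=E$. Since $\bigl(\widetilde{\mathbf{M}}^{\infty}\bigr)^{-1}=\mathrm{diag}(E^{-1},E)$, the first column of $\mathbf{M}$ tends to $(E^{-2},\,v)^{\mathrm{T}}$ and the second to $(\overline{u},\,E^{2}(1+\overline{u}\,v))^{\mathrm{T}}$; in particular $\lim_{\lambda\to0}\hat{\mathbf{M}}(x,t;\lambda)_{12}=\overline{u(x,t)}$ and $\lim_{\lambda\to0}\hat{\mathbf{M}}(x,t;\lambda)_{21}=v(x,t)$, which is exactly \eqref{solution}. (As a consistency check, the resulting limit matrix has determinant $1$.)

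The step I expect to be the real obstacle is the localisation/uniqueness argument: one must know that the solution of RH problem \ref{pro2} is continuous up to $\lambda=0$ and agrees there with the Jost-function matrix, which is precisely where Lemma \ref{vanish} and the $\lambda\to0$ boundedness of Theorem \ref{th3} are indispensable; everything after that is bookkeeping with exponential prefactors. If one instead wants the genuinely inverse statement --- that the pair $u,v$ \emph{defined} by \eqref{solution} from an abstract RH solution solves \eqref{mtm} --- I would run a dressing argument: recover a fundamental matrix of the spectral problem \eqref{par3} from $\hat{\mathbf{M}}$, use that $\partial_x\hat{\mathbf{J}}$ and $\partial_t\hat{\mathbf{J}}$ are compatible with $\hat{\mathbf{J}}$, deduce that the logarithmic $x$- and $t$-derivatives of that matrix are Laurent polynomials in $\lambda$ with poles only at $0$ and $\infty$, and match their coefficients --- again built from $u$ and $\overline{v}$ --- with $\mathbf{U}$ and $\mathbf{V}$, whence the zero-curvature condition forces \eqref{mtm}.
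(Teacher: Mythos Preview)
Your proposal is correct. The paper's own proof is a one-line reference to the dressing method (Fokas, Ref.~\cite{dressing}), which is precisely the scheme you sketch in your final paragraph: recover a fundamental solution from $\hat{\mathbf{M}}$, exploit the $x$- and $t$-dependence of the jump to show the logarithmic derivatives are Laurent polynomials in $\lambda$ with the right pole structure, and identify the coefficients with $\mathbf{U},\mathbf{V}$ so that zero curvature yields \eqref{mtm}.

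Your first two paragraphs go a different, complementary route that the paper does not spell out: you verify the formula in the \emph{direct} sense by computing $\lim_{\lambda\to0}\mathbf{M}(x,t;\lambda)$ from the Jost-function asymptotics \eqref{connect3} and the limit $\alpha(0)=\mathrm{e}^{\frac{\mathrm{i}}{4}\int_{\mathbb{R}}(|u|^2+|v|^2)\,\mathrm{d}y}$. The bookkeeping with $E=\Phi_{+1}^{\infty}$ is right, and the answer $(\mathbf{M}_{12},\mathbf{M}_{21})\to(\overline{u},v)$ drops out exactly as you say. This argument establishes that the reconstruction formula is consistent with the direct problem and, combined with the uniqueness from Lemma~\ref{vanish}, pins the RH solution to the Jost matrix; it does not by itself show that the pair defined by \eqref{solution} from abstract RH data solves \eqref{mtm}, but you correctly flag that distinction and defer it to the dressing argument. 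In short: the paper only gestures at dressing, while you supply both the forward consistency check and the dressing outline --- strictly more than what the paper writes.
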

	\begin{proof}
		This result can be demonstrated using the dressing method as detailed in Ref.\cite{dressing}.
	\end{proof}
By combining  Theorem \ref{reconstruction} with the definition given in \eqref{hatM}, we arrive at the following statement.
	\begin{corollary}\label{recon}
	Let $\mathbf{M}(x,t; \lambda)$ be the solution to RH problem $\ref{pro}$. Then the solution $u(x,t)$ and $v(x,t)$ of MTM \eqref{mtm} can be reconstructed as follows:
	\begin{equation}\label{solution2}
		u(x,t)=\overline{\lim_{\lambda\to0}\mathbf{M}(x,t;\lambda)_{12}},\qquad v(x,t)=\lim_{\lambda\to0}\mathbf{M}(x,t;\lambda)_{21}.
	\end{equation}
	\end{corollary}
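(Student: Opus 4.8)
The plan is to derive Corollary \ref{recon} directly from the reconstruction formula of Theorem \ref{reconstruction}, using the elementary but crucial observation that the matrix $\hat{\mathbf{M}}(x,t;\lambda)$ introduced in \eqref{hatM} coincides with $\mathbf{M}(x,t;\lambda)$ in a full neighbourhood of the point $\lambda=0$. Since Theorem \ref{reconstruction} expresses $u$ and $v$ through the $\lambda\to 0$ limits of the $(1,2)$- and $(2,1)$-entries of $\hat{\mathbf{M}}$, it then suffices to transfer those limits to $\mathbf{M}$.

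First I would record the geometry of the auxiliary disks fixed before RH problem \ref{pro2} (see Figure \ref{fig1}): each $D_k$ is centred at $\lambda_k\in\mathbb{C}^+$ and chosen with radius small enough that $\overline{D_k}\subset\mathbb{C}^+$, and likewise $\overline{\hat D_k}\subset\mathbb{C}^-$ for $\hat D_k$ centred at $\bar\lambda_k$; in particular, by Assumption \ref{alpha} none of the $\lambda_k$ is real, so none of these closed disks meets $\mathbb{R}$, and there is an open disk $U$ about $\lambda=0$ with $U\cap\bigl(\bigcup_{k=1}^N\overline{D_k}\cup\bigcup_{k=1}^N\overline{\hat D_k}\bigr)=\emptyset$. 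By the very definition \eqref{hatM}, $\hat{\mathbf{M}}(x,t;\lambda)=\mathbf{M}(x,t;\lambda)$ for all $\lambda\in U\setminus\mathbb{R}$, because the modifying factors $\mathbf{P}_k,\hat{\mathbf{P}}_k$ from \eqref{p} are inserted only inside the disks. Hence
\[
\lim_{\lambda\to 0}\hat{\mathbf{M}}(x,t;\lambda)_{12}=\lim_{\lambda\to 0}\mathbf{M}(x,t;\lambda)_{12},\qquad
\lim_{\lambda\to 0}\hat{\mathbf{M}}(x,t;\lambda)_{21}=\lim_{\lambda\to 0}\mathbf{M}(x,t;\lambda)_{21}.
\]

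It remains only to note that these limits genuinely exist: although $\lambda=0$ lies on the jump contour $\mathbb{R}$ of RH problem \ref{pro}, the reflection coefficient obeys $\gamma(\lambda)\to 0$ as $\lambda\to 0$, so the jump matrix \eqref{jump} tends to $\mathbf{I}$ there and $\mathbf{M}(x,t;\cdot)$ extends continuously across $\lambda=0$ with a direction-independent value (equivalently, one may use the bounded $\lambda\to 0$ limits of $\Phi_{\pm}$ and $\alpha$ from Theorem \ref{th3} together with $\beta(\lambda)\to 0$). Substituting the two displayed identities into the formulas \eqref{solution} of Theorem \ref{reconstruction} produces \eqref{solution2}. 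The argument is short; the only place that needs attention is the geometric claim that $\lambda=0$ lies outside every $D_k$ and $\hat D_k$ and on the part of $\mathbb{R}$ where the jump is trivial, so I do not expect any real obstacle beyond making that verification explicit.
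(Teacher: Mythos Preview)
Your argument is correct and matches the paper's own justification, which simply states that the corollary follows by combining Theorem~\ref{reconstruction} with the definition~\eqref{hatM}. You have merely made explicit the geometric observation that $\lambda=0$ lies outside all the disks $D_k,\hat D_k$ (so $\hat{\mathbf{M}}=\mathbf{M}$ near $0$) and added a remark on the existence of the limit, both of which the paper leaves implicit.
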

	
	\begin{remark}
		In the present context, the reconstruction formulas for $u(x,t)$ and $v(x,t)$ are elucidated with simplicity and clarity, which contrasts favorably with the formulations provided in  Eqs.\,(4.5) and (4.10)  of Ref.\,\cite{inverse}.
	\end{remark}
	The regularization of the RH problem $\ref {pro}$ can be achieved by subtracting pole contributions and the leading order asymptotics at infinity from the matrix $\mathbf{M}(x,t;\lambda)$. The resulting matrix $\mathscr{M}(x,t;\lambda)$ is given by:
	\begin{equation}\begin{aligned}
			\mathscr{M}(x,t;\lambda)=& \mathbf{M}(x,t;\lambda)-\mathbf{I} \\
			&\begin{aligned}-\sum_{k=1}^{N}\sum_{n_{k}=0}^{m_{k}}\left(\frac{\underset{\lambda_k}{\operatorname*{Res}} (\lambda-\lambda_{k})^{n_{k}}\mathbf{M}(x,t;\lambda)}{(\lambda-\lambda_{k})^{n_{k}+1}}+\frac{\underset{\bar{\lambda}_k}{\operatorname*{Res}} (\lambda-\bar{\lambda}_{k})^{n_{k}}\mathbf{M}(x,t;\lambda)}{(\lambda-\bar{\lambda}_{k})^{n_{k}+1}}\right).\end{aligned}
	\end{aligned}\end{equation}
	This regularized matrix $\mathscr{M}(x,t;\lambda)$ satisfies the following conditions:
	\begin{equation}\begin{aligned}&\mathscr{M}_{+}(x,t;\lambda)-\mathscr{M}_{-}(x,t;\lambda)=\mathbf{M}_{-}(x,t;\lambda)(\mathbf{J}(x,t;\lambda)-\mathbf{I}),&&\lambda\in\mathbb{R}\backslash\{0\},\\&\mathscr{M}(x,t;\lambda)\to\mathbf{0},&&\lambda\to\infty.\end{aligned}\end{equation}
	By applying the Sokhotski–Plemelj formula, we obtain
	\begin{equation}\mathscr{M}(x,t;\lambda)=\frac1{2\pi\mathrm{i}}\int_\mathbb{R}\frac{\mathbf{M}_-(x,t;\xi)(\mathbf{J}(x,t;\xi)-\mathbf{I})}{\xi-\lambda}\mathrm{d}\xi.\end{equation}
	The RH problem $\ref{pro}$ can thus be solved by the system of algebraic-integral equations:
	\begin{equation}\label{m11}\begin{aligned}
			\mathbf{M}_1(x,t;\lambda) =&\begin{pmatrix}1\\0\end{pmatrix}-\sum_{k=1}^N\sum_{n_k=0}^{m_k}\frac{[\lambda\overline{f(\bar{\lambda})}\mathrm{e}^{-2\mathrm{i}\Theta(x,t;\lambda)}\mathbf{M}_2(x,t;\lambda)]^{(m_k-n_k)}|_{\lambda=\bar{\lambda}_k}}{(\lambda-\bar{\lambda}_k)^{n_k+1}(m_k-n_k)!} \\
			&+\frac1{2\pi\mathrm{i}}\int_\mathbb{R}\frac{\mathbf{M}_-(x,t;\xi)(\mathbf{J}(x,t;\xi)-\mathbf{I})_1}{\xi-\lambda}\mathrm{d}\xi, \end{aligned}\end{equation}
	\begin{equation}
		\label{22}\begin{aligned}
			\mathbf{M}_2(x,t;\lambda) =&\begin{pmatrix}0\\1\end{pmatrix}+\sum_{k=1}^N\sum_{n_k=0}^{m_k}\frac{[f(\lambda)\mathrm{e}^{2\mathrm{i}\Theta(x,t;\lambda)}\mathbf{M}_1(x,t;\lambda)]^{(m_k-n_k)}|_{\lambda=\lambda_k}}{(\lambda-\lambda_k)^{n_k+1}(m_k-n_k)!} \\
			&+\frac1{2\pi\mathrm{i}}\int_{\mathbb{R}}\frac{\mathbf{M}_-(x,t;\xi)(\mathbf{J}(x,t;\xi)-\mathbf{I})_2}{\xi-\lambda}\mathrm{d}\xi .
	\end{aligned}\end{equation}
	\begin{corollary}
		Based on Eqs.\,\eqref{m11} and \eqref{22}, $u(x,t)$ and $v(x,t)$ in \eqref{solution} can be simplified as
		\begin{equation}\begin{aligned}
				u(x,t)=&\overline{\sum_{k=1}^{N}\sum_{n_k=0}^{m_{k}}\frac{[f(\lambda)\mathrm{e}^{2\mathrm{i}\Theta(x,t;\lambda)}\mathbf{M}_{11}(x,t;\lambda)]^{(m_{k}-n_{k})}|_{\lambda=\lambda_{k}}}{(-\lambda_{k})^{n_{k}+1}(m_{k}-n_{k})!}}\\
				&+\overline{\frac1{2\pi\mathrm{i}}\int_\mathbb{R}\frac{\mathbf{M}_-(x,t;\xi)(\mathbf{J}(x,t;\xi)-\mathbf{I})_{12}}{\xi}\mathrm{d}\xi},\end{aligned}\end{equation}and
		\begin{equation}\begin{aligned}v(x,t)=&-\sum_{k=1}^{N}\sum_{n_k=0}^{m_{k}}\frac{[\lambda\overline{f(\bar{\lambda})}\mathrm{e}^{-2\mathrm{i}\Theta(x,t;\lambda)}\mathbf{M}_{22}(x,t;\lambda)]^{(m_{k}-n_{k})}|_{\lambda=\overline{\lambda}_{k}}}{(-\overline{\lambda}_{k})^{n_{k}+1}(m_{k}-n_{k})!}\\
				&+\frac1{2\pi\mathrm{i}}\int_\mathbb{R}\frac{\mathbf{M}_-(x,t;\xi)(\mathbf{J}(x,t;\xi)-\mathbf{I})_{21}}{\xi}\mathrm{d}\xi.
		\end{aligned}\end{equation}
	\end{corollary}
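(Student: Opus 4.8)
The plan is to start from the reconstruction formula of Corollary \ref{recon}: since the disks $D_k,\hat{D}_k$ are disjoint from the origin, $\hat{\mathbf{M}}(x,t;\lambda)$ and $\mathbf{M}(x,t;\lambda)$ agree in a neighbourhood of $\lambda=0$, so \eqref{solution} amounts to $u(x,t)=\overline{\lim_{\lambda\to0}\mathbf{M}_{12}(x,t;\lambda)}$ and $v(x,t)=\lim_{\lambda\to0}\mathbf{M}_{21}(x,t;\lambda)$. I would then write $\mathbf{M}_{12}$ as the first entry of the column $\mathbf{M}_2$ and $\mathbf{M}_{21}$ as the second entry of the column $\mathbf{M}_1$, substitute the algebraic--integral representations \eqref{22} for $\mathbf{M}_2$ and \eqref{m11} for $\mathbf{M}_1$, extract the indicated scalar component, and pass to the limit $\lambda\to0$ term by term.

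For $u(x,t)$ I would read off the first component of \eqref{22}: the constant column $(0,1)^{\mathrm{T}}$ contributes nothing; in the discrete double sum the first component of $\mathbf{M}_1$ is $\mathbf{M}_{11}$, and because the bracketed quantity $[f(\lambda)\mathrm{e}^{2\mathrm{i}\Theta(x,t;\lambda)}\mathbf{M}_{11}(x,t;\lambda)]^{(m_k-n_k)}|_{\lambda=\lambda_k}$ is already a function of $(x,t)$ alone, the only remaining dependence on the free variable $\lambda$ is the factor $1/(\lambda-\lambda_k)^{n_k+1}$, which tends to $1/(-\lambda_k)^{n_k+1}$; in the Cauchy integral $1/(\xi-\lambda)\to1/\xi$ and the first component of $(\mathbf{J}-\mathbf{I})_2$ is $(\mathbf{J}-\mathbf{I})_{12}$. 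Taking complex conjugates then reproduces the displayed formula for $u$. For $v(x,t)$ the computation is the mirror image: I would read off the second component of \eqref{m11}, where the column $(1,0)^{\mathrm{T}}$ drops out, the overall minus sign in front of the discrete sum is retained, $1/(\lambda-\bar{\lambda}_k)^{n_k+1}\to1/(-\bar{\lambda}_k)^{n_k+1}$, the surviving component of $\mathbf{M}_2$ is $\mathbf{M}_{22}$, and the Cauchy integral keeps the entry $(\mathbf{J}-\mathbf{I})_{21}$; no conjugation enters since $v=\lim_{\lambda\to0}\mathbf{M}_{21}$ directly.

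The only step that needs genuine justification rather than bookkeeping is the exchange of the limit $\lambda\to0$ with the sums and with the Cauchy integral. For the sums this is immediate: by Assumption \ref{alpha} none of $\lambda_1,\dots,\lambda_N,\bar{\lambda}_1,\dots,\bar{\lambda}_N$ lies on $\mathbb{R}$, so each of the finitely many summands is a rational function of $\lambda$ that is regular at $0$, and the limit is obtained by direct substitution. For the integral I would invoke dominated convergence, for which I need a fixed $L^1(\mathbb{R})$ majorant of $\mathbf{M}_-(x,t;\xi)(\mathbf{J}(x,t;\xi)-\mathbf{I})/(\xi-\lambda)$ as $\lambda\to0$; this follows from $|\mathrm{e}^{\pm2\mathrm{i}\Theta(x,t;\xi)}|=1$ on $\mathbb{R}$, the boundedness of $\mathbf{M}_-(x,t;\cdot)$, and the decay $\gamma(\lambda)\to0$ as $\lambda\to0$ and as $\lambda\to\infty$ recorded just after the definition of $\gamma$, which makes $\mathbf{J}-\mathbf{I}$ vanish at $\xi=0$ and decay at infinity fast enough that $\mathbf{M}_-(\mathbf{J}-\mathbf{I})/\xi$ is integrable near the origin; continuity of the Cauchy transform at $\lambda=0$ then follows. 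I expect this near-origin integrability estimate --- confirming that the $1/\xi$ singularity of the limiting kernel is harmless --- to be the main, though mild, technical obstacle; everything else amounts to tracking matrix entries.
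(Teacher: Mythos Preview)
Your proposal is correct and follows exactly the route the paper implicitly takes: the corollary is stated in the paper without proof, merely as ``Based on Eqs.\,\eqref{m11} and \eqref{22}'', so the intended argument is precisely to read off the $(1,2)$-entry of \eqref{22} and the $(2,1)$-entry of \eqref{m11}, set $\lambda=0$, and invoke Corollary~\ref{recon}. Your additional care in justifying the exchange of the limit $\lambda\to0$ with the Cauchy integral via dominated convergence and the vanishing of $\gamma(\lambda)$ at $0$ goes beyond what the paper spells out, but is the right point to flag and your sketch of it is sound.
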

	\subsection{Reflectionless potential}
	We now explicitly reconstruct the potentials $u(x,t)$ and $v(x,t)$ in the reflectionless scenario, i.e., $\gamma(\lambda)=0$. Here, there is no jump across the contour $\mathbb{R}$, leading to a reduction of the inverse problem to the algebraic system
	\begin{alignat}{2}
		&\mathbf{M}_1(x,t;\lambda)=\begin{pmatrix}1\\0\end{pmatrix}-\sum_{k=1}^N\frac1{m_k!}\partial_\mu^{m_k}\left.\left[\frac{\mu\overline{f(\bar{\mu})}\mathrm{e}^{-2\mathrm{i}\Theta(x,t;\mu)}\mathbf{M}_2(x,t;\mu)}{\lambda-\mu}\right]\right|_{\mu=\bar{\lambda}_k},\label{m1}\\&\mathbf{M}_2(x,t;\lambda)=\begin{pmatrix}0\\1\end{pmatrix}+\sum_{l=1}^N\frac1{m_l!}\partial_\nu^{m_l}\left.\left[\frac{f(\nu)\mathrm{e}^{2\mathrm{i}\Theta(x,t;\nu)}\mathbf{M}_1(x,t;\nu)}{\lambda-\nu}\right]\right|_{\nu=\lambda_l}.\label{m2}
	\end{alignat}
	Let
	\begin{equation}\begin{aligned}
			&F_1(x,t;\lambda)=f(\lambda)\mathrm{e}^{2\mathrm{i}\Theta(x,t;\lambda)}\mathbf{M}_{11}(x,t;\lambda),\\ &F_2(x,t;\lambda)=\lambda\overline{f(\bar{\lambda})}\mathrm{e}^{-2\mathrm{i}\Theta(x,t;\lambda)}\mathbf{M}_{22}(x,t;\lambda).
	\end{aligned}\end{equation}By substituting $\mathbf{M}_2(x,t;\mu)$ into Eq.\,\eqref{m1}, we obtain the following expression which holds for fixed $(x,t)\in\mathbb{R}\times\mathbb{R}^+$ and $\lambda\in\mathbb{C}\backslash\{0,\lambda_1,\dots,\lambda_N\}$,
	\begin{equation}\begin{aligned}
			G_1(x,t;\lambda)&=F_1(x,t;\lambda)-f(\lambda)\mathrm{e}^{2\mathrm{i}\Theta(x,t;\lambda)} \\
			&\left.+\sum_{k=1}^N\sum_{l=1}^N\frac{1}{m_k!m_l!}\partial_\mu^{m_k}\partial_\nu^{m_l}\left[\frac{\mu f(\lambda)\overline{f(\bar{\mu})}\mathrm{e}^{2\mathrm{i}(\Theta(x,t;\lambda)-\Theta(x,t;\mu))} F_1(x,t;\nu)}{(\lambda-\mu)(\mu-\nu)}\right]\right|_{\substack{\mu=\bar{\lambda}_k\\\nu=\lambda_l}}\\
			&=0.
	\end{aligned}\end{equation}
	Similarly to the above discussion, substituting $\mathbf{M}_1(x,t;\nu)$ into Eq.\,\eqref{m2}, we obtain the following expression which holds for fixed $(x,t)\in\mathbb{R}\times\mathbb{R}^+$ and $\lambda\in\mathbb{C}\backslash\{0,\bar{\lambda}_1,\dots,\bar{\lambda}_N\}$,
	\begin{equation}\begin{aligned}
			G_2(x,t;\lambda)&=F_2(x,t;\lambda)-\lambda\overline{f(\bar{\lambda})}\mathrm{e}^{-2\mathrm{i}\Theta(x,t;\lambda)} \\
			&\left.+\sum_{k=1}^N\sum_{l=1}^N\frac{1}{m_k!m_l!}\partial_\nu^{m_l}\partial_\mu^{m_k}\left[\frac{\lambda \overline{f(\bar{\lambda})}f(\nu)\mathrm{e}^{2\mathrm{i}(\Theta(x,t;\nu)-\Theta(x,t;\lambda))} F_2(x,t;\mu)}{(\lambda-\nu)(\nu-\mu)}\right]\right|_{\substack{\mu=\bar{\lambda}_k\\\nu=\lambda_l}}\\
			&=0.
	\end{aligned}\end{equation}
	\begin{theorem}
		In the reflectionless case, the solution of the MTM \eqref{mtm} can be expressed by
		\begin{equation}\begin{aligned}
				&u(x,t)=\overline{\sum_{k=1}^N\sum_{n_k=0}^{m_k}\frac{F_1^{(m_k-n_k)}(\lambda_k)}{(-\lambda_k)^{n_k+1}(m_k-n_k)!}},\\
				&v(x,t)=-\sum_{k=1}^N\sum_{n_k=0}^{m_k}\frac{F_2^{(m_k-n_k)}(\bar{\lambda}_k)}{(-\bar{\lambda}_k)^{n_k+1}(m_k-n_k)!},
		\end{aligned}\end{equation}
		where $\{F_1^{(j_k)}(x,t;\lambda_k),F_2^{(j_k)}(x,t;\bar{\lambda}_k)\}_{k=1,\dots,N}^{j_k=0,\dots,m_k}$ is the solution of the following algebraic systems, respectively,
		\begin{equation}\begin{cases}\frac{G_1^{(j_1)}(x,t;\lambda_1)}{j_1!}=0,&\frac{G_2^{(j_1)}(x,t;\bar{\lambda}_1)}{j_1!}=0,\quad j_1=0,\ldots,m_1,\\\qquad\quad\qquad\qquad\vdots\\ \frac{G_1^{(j_N)}(x,t;\lambda_N)}{j_N!}=0,&\frac{G_2^{(j_N)}(x,t;\bar{\lambda}_N)}{j_N!}=0,\quad j_N=0,\ldots,m_N.\end{cases}
		\end{equation}
	\end{theorem}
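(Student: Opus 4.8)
The plan is to specialise the algebraic–integral system \eqref{m11}–\eqref{22} to the reflectionless regime, reduce the inverse problem to a \emph{finite} linear system for the Taylor data of $\mathbf M$ at the discrete spectrum, and then read off $u,v$ from the reconstruction formula. When $\gamma\equiv0$ the jump matrix \eqref{jump} is the identity, so every Cauchy integral in \eqref{m11}–\eqref{22} disappears and those equations collapse to the purely algebraic relations \eqref{m1}–\eqref{m2}; in particular $\mathbf M(x,t;\lambda)$ is rational in $\lambda$ with poles only at $\lambda_k,\bar\lambda_k$. Expanding the brackets in \eqref{m1}–\eqref{m2} by the Leibniz rule applied to the factor $1/(\lambda-\mu)$ (resp.\ $1/(\lambda-\nu)$) and using $\partial_\mu^{p}\tfrac1{\lambda-\mu}=\tfrac{p!}{(\lambda-\mu)^{p+1}}$ gives the partial–fraction representations
\[
\mathbf M_{12}(x,t;\lambda)=\sum_{k=1}^N\sum_{n_k=0}^{m_k}\frac{F_1^{(m_k-n_k)}(x,t;\lambda_k)}{(m_k-n_k)!\,(\lambda-\lambda_k)^{n_k+1}},\qquad
\mathbf M_{21}(x,t;\lambda)=-\sum_{k=1}^N\sum_{n_k=0}^{m_k}\frac{F_2^{(m_k-n_k)}(x,t;\bar\lambda_k)}{(m_k-n_k)!\,(\lambda-\bar\lambda_k)^{n_k+1}}.
\]
Letting $\lambda\to0$ and invoking Corollary \ref{recon}, $u=\overline{\lim_{\lambda\to0}\mathbf M_{12}}$ and $v=\lim_{\lambda\to0}\mathbf M_{21}$, produces precisely the claimed expressions for $u$ and $v$ — \emph{provided} the numbers $\{F_1^{(j_k)}(\lambda_k),F_2^{(j_k)}(\bar\lambda_k)\}$ are known.

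Next I would close the system. Reading \eqref{m1}–\eqref{m2} entry by entry shows that $(\mathbf M_{11},\mathbf M_{12})$ and $(\mathbf M_{21},\mathbf M_{22})$ satisfy two decoupled pairs of equations. Substituting the second member of each pair into the first and multiplying through by $f(\lambda)\mathrm e^{2\mathrm i\Theta(x,t;\lambda)}$ (resp.\ $\lambda\overline{f(\bar\lambda)}\mathrm e^{-2\mathrm i\Theta(x,t;\lambda)}$) converts the identity for $F_1$ (resp.\ $F_2$) into exactly the relation $G_1(x,t;\lambda)=0$ on $\mathbb C\setminus\{0,\lambda_1,\dots,\lambda_N\}$ (resp.\ $G_2(x,t;\lambda)=0$ on $\mathbb C\setminus\{0,\bar\lambda_1,\dots,\bar\lambda_N\}$) displayed just before the theorem. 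The structural point to emphasise is that $G_1$ depends on the unknown only through the finitely many complex numbers $\{F_1^{(j)}(x,t;\lambda_l)\}_{l=1,\dots,N}^{j=0,\dots,m_l}$ — these enter via $\partial_\nu^{m_l}[\,\cdots\,]|_{\nu=\lambda_l}$ — and likewise $G_2$ only through $\{F_2^{(j)}(x,t;\bar\lambda_l)\}$; there is no cross-coupling between the two families after the decoupling, which makes the two systems separable.

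Since $f$ is a polynomial with $f(\lambda_k)\neq0$ and $\mathrm e^{\pm2\mathrm i\Theta}$ is entire and nowhere zero, $F_1$ inherits the analyticity of $\mathbf M_{11}$ at each $\lambda_k$ (the first column of $\mathbf M$ is analytic there, by the residue structure in RH problem \ref{pro}), and $G_1$ extends analytically across $\lambda_k$ because the only potentially singular factor $1/(\lambda-\bar\lambda_k)$ stays finite for $\lambda\in\mathbb C^+$; here Assumption \ref{alpha} ($\lambda_k\notin\mathbb R$) is used. Consequently $\partial_\lambda^{j_k}G_1|_{\lambda=\lambda_k}=0$ and $\partial_\lambda^{j_k}G_2|_{\lambda=\bar\lambda_k}=0$ for $k=1,\dots,N$ and $j_k=0,\dots,m_k$, which after division by $j_k!$ are exactly the two displayed systems. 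Counting the unknowns against the equations, each is an $\mathcal N\times\mathcal N$ linear system, the free term $F_1(\lambda)$ of $G_1$ (resp.\ $F_2(\lambda)$ of $G_2$) contributing the ``diagonal'' unknown $F_1^{(j_k)}(\lambda_k)$ (resp.\ $F_2^{(j_k)}(\bar\lambda_k)$).

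The remaining, and principal, difficulty is to show that these linear systems are nonsingular, so that the coefficients — hence $u$ and $v$ — are well defined. I would argue by contrapositive: a nonzero solution of the associated homogeneous system would, through the partial–fraction Ansatz of the first paragraph together with \eqref{m1}–\eqref{m2}, produce a nonzero rational matrix solving RH problem \ref{pro} in the reflectionless case with the weakened normalisation $\mathbf M=O(\lambda^{-1})$; carrying it over to $\hat{\mathbf M}$ via \eqref{hatM} contradicts the Vanishing Lemma \ref{vanish}. Hence each system has a unique solution, the displayed reconstruction of $u,v$ is meaningful, and that the resulting pair actually solves the MTM \eqref{mtm} follows from Theorem \ref{reconstruction} applied to the now explicit RH solution. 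The care in the write-up will go into (i) the Leibniz bookkeeping that certifies the systems are genuinely linear and square, and (ii) the clean reduction of their invertibility to the Vanishing Lemma.
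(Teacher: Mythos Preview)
Your proposal is correct and follows essentially the same route as the paper: the paper's ``proof'' is really the derivation immediately preceding the theorem, namely specialising \eqref{m11}--\eqref{22} to $\gamma\equiv0$ to obtain \eqref{m1}--\eqref{m2}, substituting one into the other to produce the identities $G_1=0$ and $G_2=0$, and then simply stating the theorem as a summary of that computation together with Corollary~\ref{recon}.

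Your write-up goes noticeably further than the paper on one point: you explicitly argue the unique solvability of the two $\mathcal N\times\mathcal N$ linear systems by reducing a hypothetical nontrivial kernel to a nonzero solution of RH problem~\ref{pro2} with decay normalisation, contradicting the Vanishing Lemma~\ref{vanish}. The paper does not address this at all --- it tacitly assumes the systems determine the $F_i^{(j)}$. Your addition is the right move and the reduction you sketch is sound; just be careful in the bookkeeping that a homogeneous solution of the $G_1$-system alone (with the $G_2$-system set to its inhomogeneous solution, or vice versa) must be assembled consistently into a full matrix $\mathbf M$ via \eqref{m1}--\eqref{m2} before invoking \eqref{hatM} and the Vanishing Lemma. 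Otherwise your argument is a strict improvement on what the paper supplies.
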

	\subsection{Some explicit solutions}\label{numer}
	In the following, we explore numerical simulations of the $N$-multipole solutions for the MTM \eqref{mtm} with different parameter values.
	\begin{enumerate}[label= , left=0pt]
		\item \textbf{Case (i)} $N$=1, three 1-multipole solutions, as shown in Figure \ref{fig3}.
		\begin{figure}[!htbp]
			\centering
			\includegraphics[width=0.94\textwidth,height=0.24\textheight]{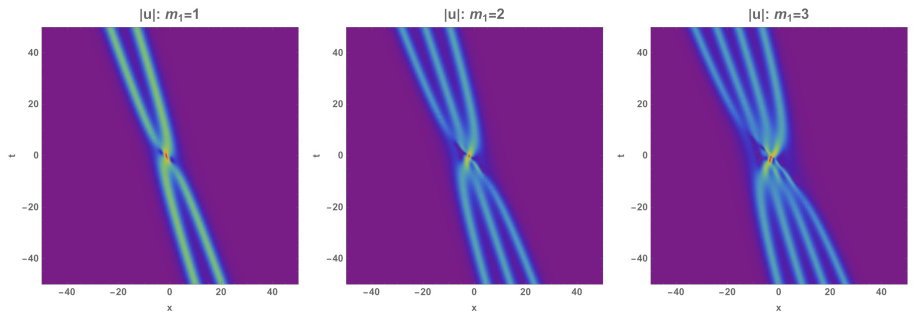}
			\includegraphics[width=0.94\textwidth,height=0.24\textheight]{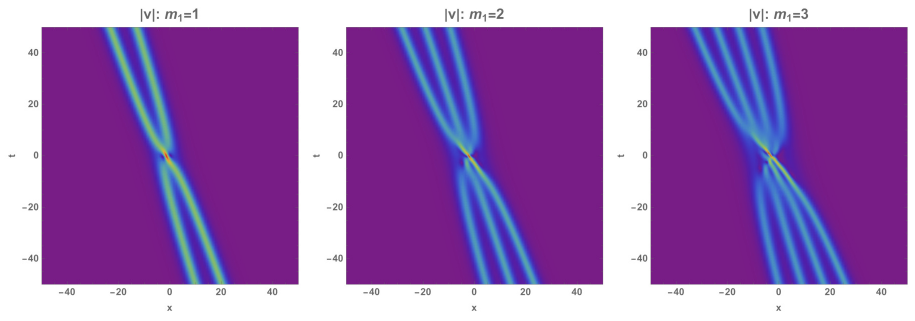}
\caption{ Density structures of $|u(x,t)|$ and $|v(x,t)|$ (from top to bottom) for $\lambda_1 = \mathrm{i} + 1, f(\lambda)=1$ and $m_1 = 1, 2, 3$ (from left to right).}\label{fig3}
		\end{figure}
		\item \textbf{Case (ii)} $N$=2, three 2-multipole solutions, as shown in Figure \ref{fig4}.
		\begin{figure}[!htbp]
			\centering
			\includegraphics[width=0.94\textwidth,height=0.24\textheight]{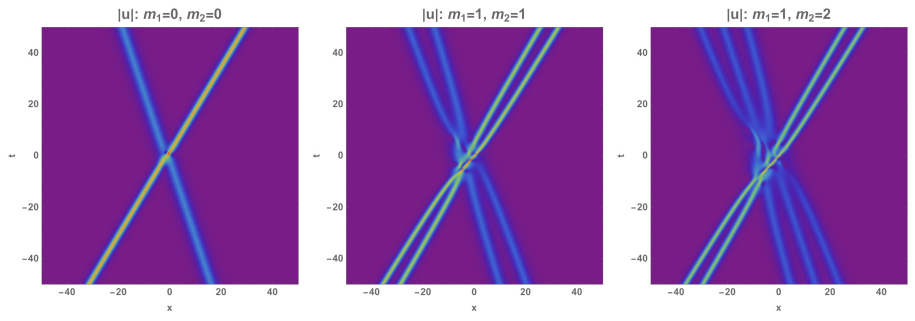}
			\includegraphics[width=0.94\textwidth,height=0.24\textheight]{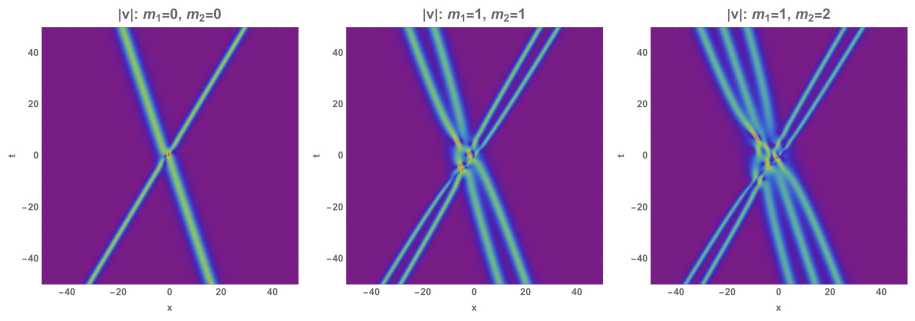}
\caption{Density structures of $|u(x,t)|$ and $|v(x,t)|$ (from top to bottom) for $\lambda_1 = \frac{\mathrm{i}}{2},\lambda_2=\mathrm{i}+1, f(\lambda)=\lambda-\mathrm{i}$ and $m_1 = 0, m_2=0$ (left), $m_1 = 1, m_2=1$ (middle), $m_1 = 1, m_2=2$ (right).}\label{fig4}
		\end{figure}
		
		\item \textbf{Case (iii)} $N$=3, three 3-multipole solutions, as shown in Figure \ref{fig5}.
		\begin{figure}[!htbp]
			\centering
			\includegraphics[width=0.94\textwidth,height=0.24\textheight]{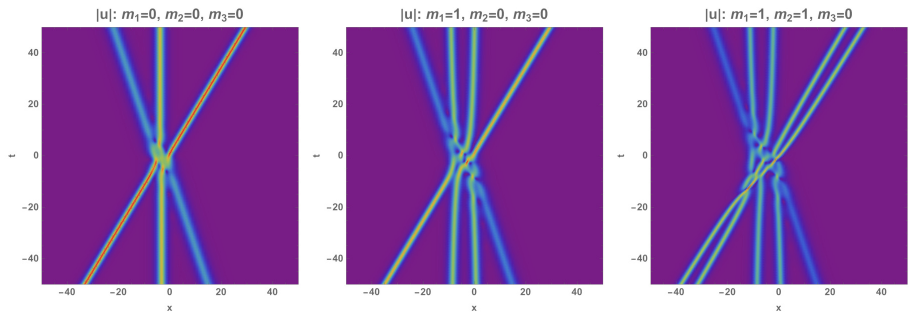}
			\includegraphics[width=0.94\textwidth,height=0.24\textheight]{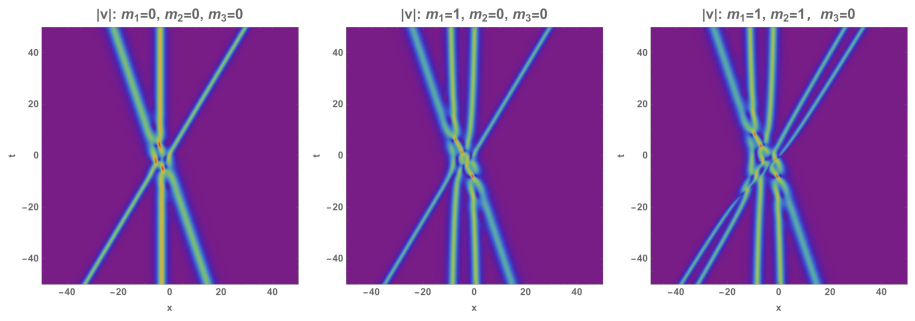}
\caption{ Density structures of $|u(x,t)|$ and $|v(x,t)|$ (from top to bottom) for $\lambda_1 =\mathrm{i}, \lambda_2=\frac{\mathrm{i}}{2}, \lambda_3=\mathrm{i}+1, f(\lambda)=\mathrm{e}^{\mathrm{i}\lambda}$ and $m_1 = 0, m_2=0, m_3 = 0$ (left), $m_1 = 1, m_2=0, m_3 = 0$ (middle), $m_1 = 1, m_2=1, m_3 = 0$ (right).}\label{fig5}
		\end{figure}
		
	\end{enumerate}
	\section*{Acknowledgment}
	This work was supported by National Natural Science Foundation of China (Grant Nos. 12371253 and 12171439).	
	
	\section*{Data availability}
	All data generated or analyzed during this study are including in this published article.	
	
	\section*{Declarations}	
	
	\section*{Conflict of Interest}		
	The authors declare that they have no conflict of interest.

\end{document}